\newtheorem{lemma}{Lemma}
\newtheorem{definition}{Definition}
\newtheorem{proposition}{Proposition}
\newtheorem{theorem}{Theorem}
\newtheorem{algorithm}{Algorithm}
\newcommand{\tr}{\mbox{tr}}
\DeclareMathOperator{\E}{\mathbb{E}}
\begin{document}
\begin{center}
{\Large \bf A Generative Approach to Joint Modeling of Quantitative and Qualitative Responses}\\
Xiaoning Kang$^a$, Lulu Kang$^b$, Wei Chen$^c$ and Xinwei Deng$^d$\footnote{Address for correspondence: Xinwei Deng, Associate Professor, Department of Statistics, Virginia Tech, Blacksburg, VA, 24061 (E-mail: xdeng@vt.edu).} \\
\vskip 5pt
{$^a$International Business College and Institute of Supply Chain Analytics, \\
Dongbei University of Finance and Economics, Dalian, China} \\
{$^b$Department of Applied Mathematics, Illinois Institute of Technology, Chicago, USA} \\
{$^c$Department of Mechanical, Materials \& Aerospace Engineering, \\
Illinois Institute of Technology, Chicago, USA}\\
$^d$Department of Statistics, Virginia Tech, Blacksburg, USA
\end{center}

\begin{abstract}
In many scientific areas, data with quantitative and qualitative (QQ) responses are commonly encountered with a large number of predictors.
By exploring the association between QQ responses, existing approaches often consider a joint model of QQ responses given the predictor variables.
However, the dependency among predictive variables also provides useful information for modeling QQ responses.
In this work, we propose a generative approach to model the joint distribution of the QQ responses and predictors.
The proposed generative model provides efficient parameter estimation under a penalized likelihood framework.
It achieves accurate classification for qualitative response and accurate prediction for quantitative response with efficient computation.
Because of the generative approach framework, the asymptotic optimality of classification and prediction of the proposed method can be established under some regularity conditions.
The performance of the proposed method is examined through simulations and real case studies in material science and genetics.
\end{abstract}

\textbf{Keywords:} Generative modeling; graphical lasso; mixed outcomes; regularization.

\section{Introduction}
Analyzing data with heterogeneous types of responses has been an important topic with broad applications.
Such heterogeneous data often involve both quantitative and qualitative (QQ) responses.
For example,
\cite{klein2019mixed} described a human health study examining the risk factors of adverse birth outcomes, which contains a qualitative response ``presence/absence of low birth weight" and a quantitative response ``gestational age".
In material science, the properties of a material are often characterized by QQ measures.
As shown in the case study of Section \ref{sec:realdata} on Heusler compounds, two metrics, the \emph{mixing enthalpy} (quantitative) and the \emph{global stability based on hull energy} (qualitative) are used to determine the thermodynamic stability of a full Heusler compound.
The QQ responses is a special case of ``mixed outcomes'' in the literature.
In this paper, we focus on two types of mixed responses: the quantitative continuous response and the multi-class qualitative response.

In the literature of mixed outcomes, particularly on QQ responses, it has been known that overlooking the relationship between the QQ responses is inappropriate.
Researches on the joint model of the QQ responses include early ones such as \cite{fitzmaurice1995regression, moustaki2000generalized, dunson2000bayesian, gueorguieva2001correlated, dunson2003dynamic} and recent ones such as \cite{deng2015qq, kurum2016time, kang2018bayesian, amini2018longitudinal, klein2019mixed}.
These works mostly consider a joint regression model conditioned on the predictor variables.
Based on our best knowledge, we can group them into two categories.

The first group of methods considers a conditional regression model, in which one type of the QQ responses is treated as the response of the model, and the other type is treated as the regressor.
For instance, \cite{fitzmaurice1997regression} introduced a marginal regression model of quantitative response conditioned on the qualitative response.
\cite{song2009joint} used Gaussian copulas to integrate separate one-dimensional generalized linear models into a joint regression model for mixed outcomes.
\cite{lin2010association} developed a conditional mixed-effects model to analyze clustered data containing QQ responses.
These methods are suitable for data with a small number of predictor variables.
To handle the high-dimensional input variables, \cite{deng2015qq} proposed a conditional model that encourages model sparsity through a constrained likelihood estimation.
However, inferences and asymptotic properties of their method have not been explored due to the complicated constrained likelihood estimation.
\cite{kang2018bayesian} considered a Bayesian estimation for the conditional model of \cite{deng2015qq} to obtain proper inferences of model parameters.
Nevertheless, their work is not designated for studying the asymptotic properties of the proposed estimator.
More related works can be found in \cite{chen2014selection, yang2014mixed, guglielmi2018a}, among others.

The second group of methods considers a continuous latent variable for the qualitative response, and then jointly models the latent variable and the quantitative response \citep{sammel1997latent, dunson2005bayesian, bello2012hierarchical}.
For example, \cite{gueorguieva2001correlated} studied a probit model with a latent variable and developed a Monte Carlo expectation-conditional maximization algorithm for parameter estimation.
\cite{klein2019mixed} introduced the idea of latent variable into the framework of copula regressions, constructing a latent continuous representation of binary regression models.
However, the use of latent variables often involves considerable computation in the parameter estimation.
It also makes the investigation of theoretical properties difficult.
Moreover, most of these works focus on the binary qualitative response and their model assumptions may not be easily extended to the multi-class qualitative response cases.

In this work, we propose a novel approach to jointly model the QQ responses based on the \emph{generative approach}. The proposed generative model considers the joint distribution of the high-dimensional input variables, the quantitative responses, and the multi-class qualitative response.
It is a very unique and different perspective from the existing literature and also brings advantages in both theoretical and computational aspects.
The proposed method can accommodate multi-class qualitative response and multivariate quantitative responses with attractive theoretical properties.
We call the proposed method \emph{GAQQ}, a Generative Approach for QQ responses.

The key contributions of this work are summarized as follows.
First, based on the generative model framework, we are able to establish the asymptotic properties of the proposed estimators with respect to both the classification accuracy of the qualitative response and the prediction accuracy of the quantitative response under some regularity conditions.
Such conditions are commonly used in the regularized estimation framework \citep{Shao2011Sparse, zhao2006on}.
The classification of the qualitative response enjoys the asymptotic optimality of the resulting linear discriminate classification rule.
The mean squared error (MSE) of prediction for the quantitative response is as good as the optimal prediction under the Bayes risk.
Second, an efficient procedure for parameter estimation is developed via the regularized log-likelihood function of the joint distribution of input variables and QQ responses.
Specifically, we impose regularization on both the mean differences and the covariance matrix from the joint distribution to achieve sparsity for high-dimensional predictor variables.
Third, the use of the generative approach leads to an effective prediction procedure by inferring the conditional distribution of QQ responses conditioned on the predictor variables.
That is, the quantitative response is predicted through the property of conditional multivariate normal distribution,
and the linear discriminant analysis (LDA) is employed for classification of the qualitative response.
Fourth, the proposed generative model allows the parameters related to QQ responses to be mutually learned from each other,
which is different from existing methods in which only modeling one type of QQ responses attempts to benefit from the information of the other type of QQ responses.

The remainder of this paper is organized as follows.
Section \ref{sec:methodology} details the proposed method.
The main theoretical results are presented in Section \ref{sec:theory}.
Simulation and real data analysis are conducted in Sections \ref{sec:simulation} and \ref{sec:realdata}, respectively.
Section \ref{sec:discussion} concludes this work with some discussion. Technical proofs are in the Appendix.

\section{The Proposed GAQQ Method}\label{sec:methodology}
\subsection{The Proposed Model}\label{subsec:derivation}
Suppose that the variables of interest are denoted by $(\bm X, y, Z)$ where $\bm X = (X_1, \ldots, X_{p-1})'$ is a $(p-1)$ dimensional vector of predictor variables, $y$ is a quantitative response variable and $Z \in \{1, 2\}$ is a qualitative response variable.
From a generative modeling perspective, we consider the data generation mechanism as $p(\bm X, y, Z) = p(\bm X, y|Z)p(Z)$,
indicating that data are from two classes $G_1$ and $G_2$ under $(\bm X, y)|Z$.
Assume that $\bm W = (\bm X', y)'$ follows multivariate norm distributions with different means for two classes, but sharing the same covariance matrix as follows
\begin{equation}\label{assumption}
G_1: \bm W|Z = 1 \sim N(\bm \mu_1, \bm \Sigma), ~~~ G_2: \bm W|Z = 2 \sim N(\bm \mu_2, \bm \Sigma).
\end{equation}
Denote the observed data $\bm w_1, \cdots, \bm w_{n_1}, \bm w_{n_1+1}$, $\cdots,\bm w_{n_1+n_2}$ with the first $n_1$ observations from $G_1$ and the rest $n_2$ observations from $G_2$, where $\bm w_i = (\bm x_i', y_i)', i = 1, 2, \ldots, n_1 + n_2$. Let $n=n_1+n_2$. The log-likelihood function can be written as
\begin{align}\label{eq: likelihood}
L(\bm \mu_1, \bm \mu_2, \bm \Sigma) =  n\ln|\bm C| - \sum_{k=1}^2 \sum_{i\in G_k} (\bm w_i - \bm \mu_k)' \bm C (\bm w_i - \bm \mu_k),
\end{align}
up to some constant, where
$\bm C=\bm \Sigma^{-1}$ is the inverse covariance matrix.
Let $\pi_1$ and $\pi_2$ be the prior probability of $\bm w$ belonging to classes $G_1$ and $G_2$, respectively.
Hence, the LDA assigns a new observation $\bm w$ to $G_1$ if
\begin{equation}\label{eq: rule}
\ln\frac{\hbox{Pr}(G_1|\bm W=\bm w)}{\hbox{Pr}(G_2|\bm W=\bm w)}=\ln\frac{\pi_1}{\pi_2}-\frac{1}{2}(\bm \mu_1+\bm \mu_2)' \bm C \bm \delta + \bm w' \bm C \bm \delta \geq 0,
\end{equation}
where $\bm \delta = \bm \mu_1-\bm \mu_2$.
Otherwise, $\bm w$ is classified to $G_2$.
The estimates of $\pi_1$ and $\pi_2$ are the empirical proportions of data from each class.
The parameters $\bm \mu_1,\bm \mu_2$ and $\bm C$ can be estimated by maximizing the log-likelihood function of \eqref{eq: likelihood}.

For high-dimensional data when $p \geq n$, the regularization is often needed to ensure the proper estimation of inverse covariance matrix $\bm C$ and mean difference $\bm \delta$.
We thus propose to penalize $\bm C = (c_{ij})_{1 \leq i, j \leq p}$ and $\bm \delta$ simultaneously, resulting in the following optimization problem
\begin{align}\label{eq: 1}
\min_{(\bm \mu_1,\bm \mu_2,\bm C)} -n\ln|\bm C| + \sum_{k=1}^2\sum_{i\in G_k}(\bm w_i-\bm \mu_k)' \bm C (\bm w_i-\bm \mu_k)+\lambda_1||\bm C||_1 + \frac{1}{2} \lambda_2|\bm \mu_1-\bm \mu_2|_1,
\end{align}
where $||\bm C ||_1 = \sum_{i \neq j}  |c_{ij}|$, and
$|\bm \alpha|_1 = \mathop{\sum}\limits_i |\alpha_i|$ with $\alpha_i$ being the $i$th entry of vector $\bm \alpha$.
Here $\lambda_1\ge 0$ and $\lambda_2\ge 0$ are two tuning parameters.
By applying such regularization, the proposed model can encourage the sparse structures in $\bm C$ and $\bm \delta$ at the same time.
Note that similar spirits of regularizing both $\bm C$ and $\bm \delta$ are used in several works on the LDA \citep{Shao2011Sparse, cai2012a}.

To estimate the parameters, we develop an iterative procedure to solve the sub-optimization problem with respect to $\bm C$ and $\bm \delta$ respectively.
Define $\bm \delta_2=(\bm \mu_1-\bm \mu_2)/2$ as well as $\bm \gamma=(\bm \mu_1+\bm \mu_2)/2$,
then accordingly we have $\bm \mu_1=\bm \delta_2+\bm \gamma$ and $\bm \mu_2=\bm \gamma-\bm \delta_2$.
As a result, the optimization problem \eqref{eq: 1} is re-written as
\begin{align}\label{eq: like}
\min_{(\bm \delta_2,\bm \gamma,\bm C)} &-n\ln|\bm C|+\sum_{i\in G_1}(\bm w_i-\bm \delta_2-\bm \gamma)'\bm C(\bm w_i-\bm \delta_2-\bm \gamma)\nonumber\\
&+\sum_{i\in G_2}(\bm w_i+\bm \delta_2-\bm \gamma)'\bm C(\bm w_i+\bm \delta_2-\bm \gamma)+\lambda_1||\bm C||_1+\lambda_2|\bm \delta_2|_1.
\end{align}
It is thus easy to obtain the maximum likelihood estimate of $\bm \gamma$ from (\ref{eq: like}) as
\begin{equation}\label{eq: gamma}
\hat{\bm \gamma}=\bar{\bm w}+\frac{n_2-n_1}{n}\bm \delta_2,
\end{equation}
where $\bar{\bm w}=\frac{1}{n}\sum_{i=1}^{n} \bm w_i$ is the overall mean. Then plugging $\hat{\bm \gamma}$ back into (\ref{eq: like}) yields
\begin{align}\label{eq: obj}
&(\hat{\bm \delta}_2, \hat{\bm C}) = \arg\min_{\bm \delta_2, \bm C} -n\ln|\bm C|+\sum_{i \in G_1}(\bm w_i-\frac{2n_2}{n}\bm \delta_2-\bar{\bm w})'\bm C(\bm w_i-\frac{2n_2}{n}\bm \delta_2-\bar{\bm w})\nonumber\\
&+\sum_{i \in G_2}(\bm w_i+\frac{2n_1}{n}\bm \delta_2-\bar{\bm w})'\bm C(\bm w_i+\frac{2n_1}{n}\bm \delta_2-\bar{\bm w})+\lambda_1||\bm C||_1+\lambda_2|\bm \delta_2|_1.
\end{align}
In this manner, solving the optimization problem \eqref{eq: 1} is equivalent to solving the optimization problem \eqref{eq: obj}.
Next, we show that \eqref{eq: obj} can be decomposed as a graphical lasso model (Glasso) \citep{yuan2007model, deng2009large} in terms of $\bm C$ and a Lasso regression \citep{tibshirani1996regression} in terms of $\bm \delta_2$ with the other parameter fixed,
such that these two parameters can be estimated iteratively.
To be more precise, for a given value of $\bm \delta_2$, the minimization problem (\ref{eq: obj}) with respect to $\bm C$ is
\begin{equation}\label{eq: 2}
\min_{\bm C} -n\ln|\bm C|+\hbox{tr}(\bm C\tilde{\bm S})+\lambda_1||\bm C||_1,
\end{equation}
where $\tilde{\bm S}=\sum_{i\in G_1}(\bm w_i-\frac{2n_2}{n}\bm \delta_2-\bar{\bm w})(\bm w_i-\frac{2n_2}{n}\bm \delta_2-\bar{\bm w})' + \sum_{i \in G_2}(\bm w_i+\frac{2n_1}{n}\bm \delta_2-\bar{\bm w})(\bm w_i+\frac{2n_1}{n}\bm \delta_2-\bar{\bm w})'$.
It has the same form as the graphical lasso, which has been extensively studied in literature by \cite{yuan2007model, friedman2008sparse, lam2009sparsistency, raskutti2009model, liu2020minimax}, and many others.
On the other hand, when the inverse covariance matrix $\bm C$ is fixed, the minimization problem (\ref{eq: obj}) regarding $\bm \delta_2$ becomes
\begin{align}\label{eq: obj2more}
\min_{\bm \delta_2}&\sum_{i\in G_1}(\bm w_i-\frac{2n_2}{n}\bm \delta_2-\bar{\bm w})'\bm C(\bm w_i-\frac{2n_2}{n}\bm \delta_2-\bar{\bm w})  \nonumber \\
&+\sum_{i \in G_2}(\bm w_i+\frac{2n_1}{n}\bm \delta_2-\bar{\bm w})'\bm C(\bm w_i+\frac{2n_1}{n}\bm \delta_2-\bar{\bm w})+\lambda_2|\bm \delta_2|_1,
\end{align}
which is equivalent to
\begin{equation}\label{eq: objstep3}
\min_{\bm \delta_2} (\tilde{\bm y} - \bm C^{1/2} \bm \delta_2)'(\tilde{\bm y} - \bm C^{1/2} \bm \delta_2)+\lambda_2|\bm \delta_2|_1,
\end{equation}
where $\tilde{\bm y}= \frac{1}{2n_1n_2}\bm C^{1/2}(n_2\sum\limits_{i \in G_1}\bm w_i - n_1\sum\limits_{i \in G_2}\bm w_i)$.
A detailed derivation of (\ref{eq: objstep3}) from \eqref{eq: obj2more} is provided in the Appendix.
We solve the minimization problem (\ref{eq: objstep3}) by the Lasso technique.
Consequently, solving the complicated optimization problem \eqref{eq: obj} is decomposed to the simple tasks of iteratively solving a Glasso estimate for $\bm C$ and a Lasso estimate for $\bm \delta_2$ until both of them are converged.
We summarize the above estimation procedure for the proposed model in Algorithm \ref{alg1}.

\begin{algorithm}[Estimation Procedure]\label{alg1}
\noindent

{\bf Step 0}: Set an initial value of $\bm \delta_2$.

{\bf Step 1}: Given $\bm \delta_2=\hat{\bm \delta}_{2,t}$, solve $\bm C$ in (\ref{eq: 2}) by the Glasso technique.

{\bf Step 2}: Given $\bm C= \hat{\bm C}_t$, solve $\bm \delta_2$ in (\ref{eq: objstep3}) by the Lasso technique.

{\bf Step 3}: Repeat \emph{Step 1} and \emph{2} till both $\hat{\bm C}_t$ and $\hat{\bm \delta}_{2,t}$ converge.

\end{algorithm}

Here $\hat{\bm C}_t$ and $\hat{\bm \delta}_{2,t}$ represent the estimates of $\bm C$ and $\bm \delta_2$ in the $t$th iteration.
The convergence criteria are $||\hat{\bm C}_t-\hat{\bm C}_{t-1}||_F^2<\tau_1$ and $||\hat{\bm \delta}_{2,t}-\hat{\bm \delta}_{2,{t-1}}||_2^2<\tau_2$,
where $\tau_1$ and $\tau_2$ are two pre-selected small quantities, $||\cdot||_F$ stands for the Frobenius norm,
and $\| \bm \alpha \|_2^2 = \mathop{\sum}\limits_i \alpha_i^2$ with $\alpha_i$ being the $i$th entry of vector $\bm \alpha$.
We set the initial value of $\bm \delta_2$ as $(\bar{\bm w}_1-\bar{\bm w}_2)/2$,
where $\bar{\bm w}_k$ is the sample mean for the $k$th class.
With value of $\hat{\bm \delta}_2$, the estimate $\hat{\bm \gamma}$ is calculated by Equation (\ref{eq: gamma}),
and then we have $\hat{\bm \mu}_1=\hat{\bm \delta}_2+\hat{\bm \gamma}$ and $\hat{\bm \mu}_2=\hat{\bm \gamma}-\hat{\bm \delta}_2$.
Therefore, Algorithm \ref{alg1} provides the estimates of three parameters $\bm \mu_1$, $\bm \mu_2$ and $\bm C$ in the classification rule \eqref{eq: rule}.


Note that there are two tuning parameters $\lambda_1$ and $\lambda_2$ in the optimization problem \eqref{eq: obj}.
To choose their optimal values, we minimize a BIC-type criterion proposed by \cite{wang2007tuning} as
\begin{align*}
\mbox{BIC}(\lambda_1,\lambda_2)=-n\ln|\hat{\bm C}| + \tr(\hat{\bm C}\tilde{\bm S})& + (v(\hat{\bm \delta}_2) + v(\hat{\bm C}) + 1) \ln(n),
\end{align*}
where $v(\hat{\bm \delta}_2)$ and $v(\hat{\bm C})$ stand for the number of nonzero entries in the estimates $\hat{\bm \delta}_2$ and $\hat{\bm C}$, respectively.
This criterion enjoys consistency properties and has been commonly used in literature \citep{zou2009on, lv2009a, armagan2013generalized}.

\subsection{Model Prediction}
In this section, we demonstrate how to conduct model prediction by the proposed method.
For convenience, let us write
\begin{align*}
\bm \mu_1 = \left[
\begin{array}{cc}
\bm \mu_{1X} \\
\mu_{1y}
\end{array}\right],~~~
\bm \mu_2 = \left[
\begin{array}{cc}
\bm \mu_{2X} \\
\mu_{2y}
\end{array}\right],~~~ \mbox{and} ~~~
\bm C = \left[
\begin{array}{cc}
\bm C_{X},      & \bm C_{Xy} \\
\bm C_{Xy}',   & c_{y}^2
\end{array}\right],
\bm \Sigma= \left[
\begin{array}{cc}
\bm \Sigma_{X},      & \bm \Sigma_{Xy} \\
\bm \Sigma_{Xy}',   & \sigma_{y}^2
\end{array}\right].
\end{align*}
where $\bm \mu_{1X}$ and $\bm \mu_{2X}$ are $p-1$ dimensional vectors representing the means of variable $\bm X$ for two classes, and $\bm \Sigma_{X}$ is the $(p-1) \times (p-1)$ covariance matrix of $\bm X$.
The estimates $\hat{\bm \mu}_1, \hat{\bm \mu}_2, \hat{\bm C}, \hat{\bm \Sigma}$ can be partitioned accordingly.
From model assumption \eqref{assumption} as well as the property of multivariate normal distribution, we have
$y | \bm X = \bm x, Z = 1 \sim N \left( \mu_{1y} + \bm \Sigma_{Xy}' \bm \Sigma_{X}^{-1} (\bm x - \bm \mu_{1X}), \sigma_{y}^2 - \bm \Sigma_{Xy}' \bm \Sigma_{X}^{-1} \bm \Sigma_{Xy} \right)$, and
$y | \bm X = \bm x, Z = 2 \sim N \left( \mu_{2y} + \bm \Sigma_{Xy}' \bm \Sigma_{X}^{-1} (\bm x - \bm \mu_{2X}), \sigma_{y}^2 - \bm \Sigma_{Xy}' \bm \Sigma_{X}^{-1} \bm \Sigma_{Xy} \right)$.
Therefore, the prediction for the quantitative variable $y$ from a new observation $\bm x$ is
\begin{align}\label{ypred}
\hat{y} = \left\{
\begin{array}{cccc}
\hat{\mu}_{1y} + \hat{\bm \Sigma}_{Xy}' \hat{\bm \Sigma}_{X}^{-1} (\bm x - \hat{\bm \mu}_{1X}), \mbox{ if }  \hat{Z} = 1 \\
\hat{\mu}_{2y} + \hat{\bm \Sigma}_{Xy}' \hat{\bm \Sigma}_{X}^{-1} (\bm x - \hat{\bm \mu}_{2X}), \mbox{ if }  \hat{Z} = 2.
\end{array}
\right.
\end{align}
Note that $\hat{\bm \Sigma}_{Xy}' \hat{\bm \Sigma}_{X}^{-1} = -\frac{1}{\hat{c}_{y}^2} \hat{\bm C}_{Xy}'$ where $\hat{c}_{y}^2$ is a scalar, implying that the sparsity of $\hat{\bm C}_{Xy}$ will lead to the sparse model for the prediction of $y$.

On the other hand, the prediction for the qualitative variable $Z$ by the proposed model is naturally based on the estimated LDA classification rule of \eqref{eq: rule} as
\begin{equation}\label{estrule}
\ln\frac{\hbox{Pr}(G_1|\bm W=\bm (\bm x', \hat{y})')}{\hbox{Pr}(G_2|\bm W=\bm (\bm x', \hat{y})')}=\ln\frac{\hat{\pi}_1}{\hat{\pi}_2}-\frac{1}{2}(\hat{\bm \mu}_1+\hat{\bm \mu}_2)' \hat{\bm C} \hat{\bm \delta} + \bm (\bm x', \hat{y}) \hat{\bm C} \hat{\bm \delta}.
\end{equation}
From Equations \eqref{ypred} and \eqref{estrule}, however, we note that the prediction of one response variable depends on the information of the other.
To address this issue, we propose to calculate two candidate values of $y$ for a new observation $\bm x$ by Equation \eqref{ypred} for two different classes, denoted by $\hat{y}_1$ and $\hat{y}_2$.
Then the conditional probability densities $p(\bm W = (\bm x', \hat{y}_1)' | G_1 )$ and $p(\bm W = (\bm x', \hat{y}_2)' | G_2 )$ can be estimated via the density functions of $N(\hat{\bm \mu}_1, \hat{\bm \Sigma})$ and $N(\hat{\bm \mu}_2, \hat{\bm \Sigma})$.
Denote such two values as $\hat{p}_1$ and $\hat{p}_2$.
The prediction of $y$ at this new observation is then obtained as $\hat{y}_k$ corresponding to the larger value of $\hat{\pi}_{k} \hat{p}_k, k = 1, 2$.
To express it clearly, we describe the above steps of the model prediction in Algorithm \ref{alg2} for a new observation $\bm x$.

%
%
%
%
\begin{algorithm}[Prediction Procedure]\label{alg2}
\noindent

{\bf Step 1}: For $k = 1, 2$, $\hat{y}_k = \hat{\mu}_{ky} + \hat{\bm \Sigma}_{Xy}' \hat{\bm \Sigma}_{X}^{-1} (\bm x - \hat{\bm \mu}_{kX})$,
and consequently obtain the probability densities $\hat{p}_k$ by plugging $(\bm x', \hat{y}_k)'$ into the density functions of $N(\hat{\bm \mu}_k, \hat{\bm \Sigma})$.

{\bf Step 2a}: If $\hat{\pi}_1 \hat{p}_1 > \hat{\pi}_2 \hat{p}_2$, let $\hat{y} = \hat{y}_{1}$; otherwise let $\hat{y} = \hat{y}_{2}$.

{\bf Step 2b}: Apply the LDA classification rule \eqref{estrule} to predict $Z$ by $\bm w = (\bm x', \hat{y})'$.
\end{algorithm}

It is seen that in Algorithm \ref{alg2}, we obtain the prediction of $y$ first, and then predict $Z$ with the value of $\hat{y}$.
One would argue that it is not a unique way of making predictions on QQ responses, as we may also predict $Z$ first and then variable $y$.
The following proposition provides an interesting insight into this issue.

\begin{proposition}{}{\label{proposition1}}
For the prediction of variable $Z$ by the proposed model, the class label $k$ obtained from {\bf Step 2b} of Algorithm \ref{alg2} maximizes $\hat{\pi}_k \hat{p}_k$.
\end{proposition}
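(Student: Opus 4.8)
The plan is to rewrite both decisions in Algorithm \ref{alg2} as comparisons of the \emph{same} pair of quantities $\hat\pi_1\,\hat p(\bm w\mid G_1)$ and $\hat\pi_2\,\hat p(\bm w\mid G_2)$, where $\hat p(\cdot\mid G_k)$ denotes the density of $N(\hat{\bm\mu}_k,\hat{\bm\Sigma})$, and then to exploit the fact that, by construction, $\hat y_k$ is the mode of the map $y\mapsto \hat p\big((\bm x',y)'\mid G_k\big)$ for the given $\bm x$.

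First I would note that the estimated LDA rule \eqref{estrule} classifies $\bm w$ to $G_1$ if and only if $\hat\pi_1\,\hat p(\bm w\mid G_1)\ge \hat\pi_2\,\hat p(\bm w\mid G_2)$; this is just the standard Gaussian rewriting of \eqref{eq: rule} and needs no new argument. Next, factor the joint Gaussian density as the $\bm X$-marginal times the conditional of $y\mid\bm X$. Under $N(\hat{\bm\mu}_k,\hat{\bm\Sigma})$ the conditional law of $y$ given $\bm X=\bm x$ is univariate normal with mean equal to the quantity $\hat y_k=\hat\mu_{ky}+\hat{\bm\Sigma}_{Xy}'\hat{\bm\Sigma}_X^{-1}(\bm x-\hat{\bm\mu}_{kX})$ of Step 1, so $y\mapsto \hat p\big((\bm x',y)'\mid G_k\big)$ is maximized at $y=\hat y_k$. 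Since by definition $\hat p_k=\hat p\big((\bm x',\hat y_k)'\mid G_k\big)$, this yields the key inequality
\[
\hat p_k=\hat p\big((\bm x',\hat y_k)'\mid G_k\big)\ \ge\ \hat p\big((\bm x',y)'\mid G_k\big)\qquad\text{for all }y .
\]

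Given this, the conclusion is immediate. Suppose Step 2a produces $\hat\pi_1\hat p_1>\hat\pi_2\hat p_2$, so that $\hat y=\hat y_1$ and $\bm w=(\bm x',\hat y_1)'$. Then
\[
\hat\pi_1\,\hat p(\bm w\mid G_1)=\hat\pi_1\hat p_1>\hat\pi_2\hat p_2=\hat\pi_2\,\hat p\big((\bm x',\hat y_2)'\mid G_2\big)\ \ge\ \hat\pi_2\,\hat p(\bm w\mid G_2),
\]
the last step being the key inequality applied with $k=2$ and $y=\hat y_1$. Hence Step 2b returns $\hat Z=1$, which is exactly $\arg\max_k\hat\pi_k\hat p_k$. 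The case $\hat\pi_2\hat p_2\ge\hat\pi_1\hat p_1$ is handled by the symmetric computation, using the tie-breaking convention of Step 2a; in the event of an exact tie both labels attain the maximum, so the statement holds automatically.

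The entire content of the proposition is the key inequality $\hat p_k\ge \hat p\big((\bm x',y)'\mid G_k\big)$, i.e.\ that the Step-1 prediction $\hat y_k$ lands on the ridge of the $k$th fitted Gaussian above $\bm x$; once that is in place, everything else is bookkeeping. I therefore expect the only place needing care to be the clean statement and justification of that maximization fact through the conditional normal density already recorded before \eqref{ypred}, together with a one-line check of the tie case so that the weak inequalities in \eqref{estrule} and in Step 2a remain consistent.
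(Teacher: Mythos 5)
Your proposal is correct and follows essentially the same route as the paper: the paper also reduces everything to the fact that $\hat y_k$ maximizes $y \mapsto \hat p\big((\bm x',y)'\mid G_k\big)$ (its Lemma \ref{lemma:max}) and then runs the identical chain $\hat\pi_1\hat p_1 > \hat\pi_2\hat p_2 \ge \hat\pi_2\,\hat p\big((\bm x',\hat y_1)'\mid G_2\big)$ to conclude that the LDA rule at $(\bm x',\hat y_1)'$ picks $G_1$. The only cosmetic difference is that you justify the maximization step via the marginal-times-conditional factorization of the Gaussian density, whereas the paper proves it by minimizing the quadratic form directly and invoking a block-inverse identity; both are valid.
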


Proposition \ref{proposition1} implies that we can predict the response variable $Z$ by simply comparing values of $\hat{\pi}_k \hat{p}_k$ instead of employing LDA.
Therefore, the order of which response variable to be predicted first is not a concern.
Actually, the {\bf Step 2a} and {\bf Step 2b} are equivalent to the following {\bf Step 2} as

{\bf Step 2}: If $\hat{\pi}_1 \hat{p}_1 > \hat{\pi}_2 \hat{p}_2$, let $\hat{y} = \hat{y}_{1}$ and $\hat{Z} = 1$; otherwise let $\hat{y} = \hat{y}_{2}$ and $\hat{Z} = 2$.

\subsection{Extension to Multi-Class Qualitative Response}
The proposed generative modeling approach also has the advantage to enable the GAQQ to deal with the qualitative response with multiple classes,
i.e., the qualitative variable $Z \in \{1,2, \ldots, K\}$. In such cases, the GAQQ method is extended and expressed as $G_{k}: \bm W|Z = k \sim N(\bm \mu_k, \bm \Sigma), k = 1, 2, \ldots, K$.
Based on a baseline class $G_{1}$, we regularize on the difference between means through $\bm \mu_{k} - \bm \mu_{1}$ for $k = 2, 3, \ldots, K$.
The objective function is thus formulated as
\begin{align}\label{eq: obj-multi-class1}
\min_{(\bm \mu_1, \ldots, \bm \mu_K,\bm C)} -n\ln|\bm C| + \sum_{k=1}^{K} \sum_{i\in G_k}(\bm w_i-\bm \mu_k)' \bm C (\bm w_i-\bm \mu_k)+\lambda_1||\bm C||_1 + \lambda_2 \sum_{k=2}^{K}|\bm \mu_k-\bm \mu_1|_1.
\end{align}

The subsequent derivation follows similar steps as that described in Section \ref{subsec:derivation}.
With a little abuse of notation, let $K \bm \delta_k = \bm \mu_k - \bm \mu_1$ for $k = 1, 2, \ldots, K$ and $K \bm \gamma = \sum_{k=1}^{K} \bm \mu_k$, then we have $\bm \mu_{k} = \bm \gamma - \sum_{g=2}^{K} \bm \delta_g + K \bm \delta_k, ~ k = 1, 2, \ldots, K$.
As a result, the optimization problem \eqref{eq: obj-multi-class1} can be re-written as
\begin{align}\label{eq: obj-multi-class2}
\min_{(\bm \delta_2, \ldots, \bm \delta_K, \bm \gamma, \bm C)} -n \ln|\bm C| + \sum_{k=1}^{K} \sum_{i\in G_k} &(\bm w_i - \bm \gamma + \sum_{g=2}^{K} \bm \delta_g - K \bm \delta_k)' \bm C (\bm w_i - \bm \gamma + \sum_{g=2}^{K} \bm \delta_g - K \bm \delta_k) \nonumber \\
&+ \lambda_1||\bm C||_1 + \lambda_2 \sum_{k=2}^{K} | \bm \delta_k |_1.
\end{align}
Let $n_k$ represent the number of observations belonging to class $G_k$.
The maximum likelihood estimator of $\bm \gamma$ from \eqref{eq: obj-multi-class2} is $\hat{\bm \gamma} = \bar{\bm w} + \sum_{g=2}^{K} \bm \delta_g - \frac{K}{n} \sum_{g=2}^{K} n_g \bm \delta_g$.
Consequently, the optimization problem \eqref{eq: obj-multi-class2} becomes
\begin{align}\label{eq: obj-multi-class3}
\min_{(\bm \delta_2, \ldots, \bm \delta_K, \bm C)} -n \ln|\bm C| + \sum_{k=1}^{K} \sum_{i\in G_k} &(\bm w_i - \bar{\bm w} + \frac{K}{n} \sum_{g=2}^{K} n_g \bm \delta_g - K \bm \delta_k)' \bm C (\bm w_i - \bar{\bm w} + \frac{K}{n} \sum_{g=2}^{K} n_g \bm \delta_g        \nonumber \\
& - K \bm \delta_k) + \lambda_1||\bm C||_1 + \lambda_2 \sum_{k=2}^{K} | \bm \delta_k |_1.
\end{align}
Let $\tilde{\bm S} = \sum_{k=1}^{K} \sum_{i\in G_k} (\bm w_i - \bar{\bm w} + \frac{K}{n} \sum_{g=2}^{K} n_g \bm \delta_g - K \bm \delta_k)(\bm w_i - \bar{\bm w} + \frac{K}{n} \sum_{g=2}^{K} n_g \bm \delta_g - K \bm \delta_k)'$, then the formula \eqref{eq: obj-multi-class3} can be decomposed as one Glasso problem
\begin{align*}
\min_{\bm C} -n \ln|\bm C| + \tr(\bm C \tilde{\bm S}) + \lambda_1||\bm C||_1,
\end{align*}
and
\begin{align}\label{eq: obj-multi-class-lasso1}
\min_{(\bm \delta_2, \ldots, \bm \delta_K)} \sum_{k=1}^{K} \sum_{i\in G_k} (\bm w_i - \bar{\bm w} + \frac{K}{n} \sum_{g=2}^{K} n_g \bm \delta_g - K \bm \delta_k)' \bm C (\bm w_i - \bar{\bm w} &+ \frac{K}{n} \sum_{g=2}^{K} n_g \bm \delta_g
- K \bm \delta_k) \nonumber \\
&+ \lambda_2  \sum_{k=2}^{K} | \bm \delta_k |_1.
\end{align}
The optimization problem \eqref{eq: obj-multi-class-lasso1} is equivalent to the following $K - 1$ Lasso regressions separately
\begin{align}\label{eq: obj-multi-class-lasso2}
\min_{\bm \delta_k} (\tilde{\bm y} - \bm C^{1/2} \bm \delta_k)'(\tilde{\bm y} - \bm C^{1/2} \bm \delta_k) + \lambda_2 |\bm \delta_k|_1, ~~~ k = 2, 3, \ldots, K,
\end{align}
where $\tilde{\bm y}= \frac{1}{K n n_k} \bm C^{1/2} \left[(n - n_k) \sum\limits_{i \in G_k}\bm w_i - n_k \sum\limits_{i \notin G_k}\bm w_i + K n_k \sum\limits_{g = 2, g \neq k}^K n_g \bm \delta_g \right]$.
The detailed derivation from \eqref{eq: obj-multi-class-lasso1} to \eqref{eq: obj-multi-class-lasso2} is provided in the Appendix.
Therefore, the parameters $\bm \delta_k$ and $\bm C$ can be solved iteratively until convergence following the spirit of Algorithm \ref{alg1}.
The optimal values of tuning parameters are chosen by the BIC-type criterion extended for the multi-class problem as
\begin{align*}
\mbox{BIC}(\lambda_1,\lambda_2)=-n\ln|\hat{\bm C}| + \tr(\hat{\bm C}\tilde{\bm S})& + (v(\hat{\bm \delta}) + v(\hat{\bm C}) + K - 1) \ln(n),
\end{align*}
where $v(\hat{\bm \delta})$ represents the number of nonzero entries in all the estimates $\hat{\bm \delta}_k$.

For a new observation $\bm x$, the quantitative response $y$ is predicted, similarly as in Algorithm \ref{alg2}, to be $\hat{y}_k = \hat{\mu}_{ky} + \hat{\bm \Sigma}_{Xy}' \hat{\bm \Sigma}_{X}^{-1} (\bm x - \hat{\bm \mu}_{kX})$, where $k$ maximizes $\hat{\pi}_k \hat{p}_k$ with $\hat{p}_k=p(\bm W = (\bm x', \hat{y}_k)' | G_k )$, computed by plugging $(\bm x', \hat{y}_k)'$ into the density functions of $N(\hat{\bm \mu}_k, \hat{\bm \Sigma})$.
The class label is estimated
as $\hat{Z} = \mathop{\arg \max}\limits_{k} \hat{\pi}_k \hat{p}_k$, or equivalently by the LDA rule as
\begin{align*}
\hat{Z} = \arg\max_{k} ~ \ln \frac{\hat{\pi}_k}{\hat{\pi}_1} + K ((\bm x', \hat{y}_k)' - \frac{\hat{\bm \mu}_1 + \hat{\bm \mu}_k}{2})' \hat{\bm C} \hat{\bm \delta}_k.
\end{align*}

\section{Theoretical Properties}\label{sec:theory}
In this section, we will investigate the asymptotic optimality of the classification rule by the proposed GAQQ method in Theorem \ref{theory:multiclass} to Theorem \ref{theory:supp}.
The asymptotic consistency properties of the prediction of $y$ by the GAQQ method are established in Theorem \ref{theory:y}.
For the proposed classification rule, we first establish the theoretical results for the multi-class problem and then provide a thorough discussion of the two-class case.
We use the same definition of asymptotic optimality for a classification rule as defined in \cite{Shao2011Sparse}.
Denote by $R_{Bayes}$ and $R_{PROP}(\mathcal{T})$ the Bayes error and the conditional misclassification rate of the proposed rule, where $\mathcal{T}$ denotes the training samples.
The asymptotic optimality for a classification rule is defined as follows.
\begin{definition}{}
\noindent Let $T$ be a classification rule with conditional misclassification rate $R_T(\mathcal{T})$,
given the training samples $\mathcal{T}$. \\
(1) $T$ is asymptotically optimal if $R_T(\mathcal{T}) / R_{Bayes} \stackrel{P}{\rightarrow} 1$. \\
(2) $T$ is asymptotically sub-optimal if $R_T(\mathcal{T}) - R_{Bayes} \stackrel{P}{\rightarrow} 0$.
\end{definition}
Note that if $\lim\limits_{n \rightarrow \infty} R_{Bayes} > 0$,
then the asymptotically sub-optimality is the same as the asymptotically optimality.
To facilitate the construction of theoretical results, we need to introduce some notation and make assumptions on the true model.
Define the true values of $\bm \mu_k$, $\bm \Sigma$, $\bm C$ and $\bm \delta_k$ as $\bm \mu_k^0$, $\bm \Sigma^0$, $\bm C^0$ and $\bm \delta_k^0 = \frac{1}{K} (\bm \mu_k^0 - \bm \mu_1^0) = ((\bm \delta_{kX}^0)', \delta_{ky}^0)'$, where $\bm \delta_{kX}^0$ is a $p-1$  dimensional vector representing the true mean difference of variable $\bm X$ between classes $G_1$ and $G_k$.
Denote the true inverse covariance matrix of variable $\bm X$ by $\bm C_X^0$.
Also define $\Delta_k = \sqrt{(\bm \delta_{kX}^0)' \bm C_X^0 \bm \delta_{kX}^0}$ and $\Delta = \max \{ \Delta_k \}_{k=1}^K$.
Denote $\mathbb{S}_{\delta_k} = \{j; (\bm \delta_k^0)_j \neq 0 \}$, which is the set containing location indices of the nonzero entries in $\bm \delta_k^0$.
Let $\tilde{s}_k$ be the cardinalities of set $\mathbb{S}_{\delta_k}$.
Define $s_k = \tilde{s}_k$ if $\delta_{ky}^0 = 0$; otherwise $s_k = \tilde{s}_k - 1$.
That is, $s_k$ is the number of nonzero entries of $\bm \delta_{kX}^0$.
Additionally, we use the same sparsity measure on $\bm \Sigma^0 = (\sigma_{ij}^0)_{1 \leq i, j \leq p}$ as in \cite{Bickel2008Covariance}, which is $S_{h;p} = \max_{i \leq p} \sum_{j=1}^{p} |\sigma_{ij}^0|^h$
where $0 \leq h < 1$ and $0^0$ is defined to be 0.
Hence firstly, $S_{0;p}$ equals the maximum of the numbers of nonzero entries in each row of the matrix $\bm \Sigma^0$.
In this case, a smaller value of $S_{0;p}$ compared with $p$ implies a sparse structure in matrix $\bm \Sigma^0$.
Secondly, if $S_{h;p}$ is smaller than $p$ for $0 < h < 1$, it indicates that many entries of matrix $\bm \Sigma^0$ are very small.
Moreover, we assume the following regularity conditions.
\begin{itemize}
\item(C1)  There exists a constant $\theta$ such that $0 < \theta^{-1} < \lambda_{\min}(\bm C^0) \leq \lambda_{\max}(\bm C^0) < \theta < \infty$,
where $\lambda_{\min}(\bm C^0)$ and $\lambda_{\max}(\bm C^0)$ are the minimum and maximum
eigenvalues of matrix $\bm C^0$.

\item(C2) $\lambda_1 = O(\sqrt{\log p / n})$, $\lambda_2 = O(\sqrt{\log p / n})$.

\item(C3) Restricted eigenvalue condition: for some constant $\varphi_k > 0$, assume $\bm C^0$ satisfies
$\frac{1}{n}\| (\bm C^0)^{1/2} \bm \delta_k^0 \|_2^2 \geq \varphi_k \| \bm \delta_k^0 \|_2^2$ for all subsets $J \subseteq \{1,\ldots,p\}$ such that the cardinality of $J$ equals $\tilde{s}_k$, and $| (\bm \delta_k^0)_{J^c} |_1 \leq 3| (\bm \delta_k^0)_{J} |_1$.
Here $(\bm \delta_k^0)_{J} = ((\bm \delta_k^0)_j \cdot I\{j \in J\})_{1\leq j \leq p}$, and $J^c$ represents the complement set of $J$.

\item(C4) Irrepresentable condition: without loss of generality,
write $\bm \delta_k^0 = ((\bm \delta_k^0)'_{\mathbb{S}_{\delta}},
(\bm \delta_k^0)'_{\mathbb{S}_{\delta}^c})'$,
and correspondingly let $\bm C^0 = \left[
\begin{array}{cc}
\bm \Psi_{11}, & \bm\Psi_{12} \\
\bm \Psi_{21}, & \bm\Psi_{22}
\end{array}\right]$,
where $\bm \Psi_{11}$ is an $\tilde{s}_k \times \tilde{s}_k$ matrix.
Then there exists a positive constant vector $\bm \zeta$ such that
$| \bm \Psi_{21} \bm \Psi_{11}^{-1} \mbox{sign}((\bm \delta_k^0)'_{\mathbb{S}_{\delta}}) | \leq \bm 1 - \bm \zeta$,
where $\bm 1$ is a $p - \tilde{s}_k$ dimensional unit vector,
and the inequality holds element-wise.

\item(C5) There exist $0 \leq c_1 < c_2 \leq 1$ and $M > 0$, such that
$n^{\frac{1-c_2}{2}} \min_{1 \leq i \leq \tilde{s}_k}| (\bm \delta_k^0)_i| \geq M$, $\tilde{s}_k = O(n^{c_1})$. $\lambda_2 = o(n^{\frac{c_2-c_1+1}{2}})$,
$p = o(\lambda_2^2 / n)$.

\item(C6)  There exists a constant $c_3 > 0$ such that $(\bm \delta_{kX}^0 - \bm \delta_{lX}^0)' \bm C_X^0 (\bm \delta_{kX}^0 - \bm \delta_{lX}^0) > c_3 > 0$ for $k \neq l$.

\item(C7)  There exists a constant $c_4$ such that
$c_4^{-1} \leq K \pi_k \leq c_4, k = 1, 2, \ldots, K$.
\end{itemize}

By conditions (C1) and (C2), \cite{rothman2008sparse} and \cite{lam2009sparsistency}
derived the convergence rate of Glasso estimate.
We thus have
\begin{equation}\label{conrateC}
\| \hat{\bm C}_X - \bm C_X^0 \| = O_p(d_n),
\end{equation}
where $d_n = S_{h;p} (\frac{\log p}{n})^{(1-h)/2}$, and $\|\bm A\|$ is the matrix spectral norm
defined as the squared root of the maximum eigenvalue of matrix $\bm A' \bm A$.
The conditions (C2) and (C3) are used in \cite{B2011Statistics} to study the theoretical property of Lasso estimate, and we have
\begin{equation}\label{conratedelta}
\| \hat{\bm \delta}_{kX} - \bm \delta_{kX}^0 \|_2 = O_p(b_k^{(n)}),
\end{equation}
where $b_k^{(n)}= \sqrt{\frac{\tilde{s}_k \log p}{n \varphi_k^2}}$.
Under conditions (C4) and (C5), \cite{zhao2006on} showed that the Lasso estimate is model selection consistency,
which will be used for investigating $\bm \delta_{k}$ in \eqref{eq: obj-multi-class-lasso2}.
Condition (C6) requires that all the classes should be separated from each other. 
Also note that condition (C6) is equivalent to that $\Delta_k$ is bounded away from 0.
The condition (C7) guarantees a balanced sample size for each class, which is commonly used in literature to bound the term $\log \frac{\pi_k}{\pi_1}$ in the LDA rule for establishing the properties of classification rules.
Based on the above results, we present the following theories on the consistency of the classification rule by the proposed method.

\begin{theorem}{}{\label{theory:multiclass}}
Assume that conditions (C1) - (C7) hold, and
\begin{align*}
\xi_{n;k} = \max \{ d_n, \frac{b_k^{(n)}}{\Delta_k}, \frac{\sqrt{s_k S_{h;p}}}{\sqrt{n} \Delta_k} ~\mbox{for any}~k \} \rightarrow 0.
\end{align*}
Then the proposed rule for the multi-class problem is asymptotically sub-optimal if either one of the following two conditions is satisfied  \\
(1) $\Delta = \max \{ \Delta_k \}_{k=1}^K$ is bounded; \\
(2) if $\Delta \rightarrow \infty$, then there exists a constant $\alpha \in (0, 1/2)$  such that $\Delta^2 \xi_{n;k}^{1-2 \alpha} \rightarrow 0$.
\end{theorem}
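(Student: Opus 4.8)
The plan is to follow the route used by \cite{Shao2011Sparse} for plug-in linear discriminant rules, preceded by a reduction that removes the $y$-coordinate from the classification problem. First I would note that in model \eqref{assumption} the within-class conditional variance of $y$ given $\bm X$, namely $\sigma_y^2-\bm\Sigma_{Xy}'\bm\Sigma_X^{-1}\bm\Sigma_{Xy}$, does not depend on the class, so plugging the class-$k$ conditional mean $\hat y_k$ into $\hat p_k$ contributes the same factor $(2\pi\hat\sigma_{y\mid x}^2)^{-1/2}$ for every $k$; hence $\arg\max_k\hat\pi_k\hat p_k=\arg\max_k\hat\pi_k\,\hat p(\bm x\mid G_k)$ \emph{exactly}, and the proposed classification of $Z$ coincides with the plug-in LDA rule in the $(p-1)$-dimensional $\bm X$-space built from $(\hat{\bm\mu}_{kX},\hat{\bm C}_X,\hat\pi_k)$. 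It therefore suffices to show that this rule is asymptotically sub-optimal relative to the Gaussian Bayes rule driven by $(\bm\mu_{kX}^0,\bm C_X^0,\pi_k)$, whose error is $R_{Bayes}$.

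Next I would put the conditional misclassification rate into closed form. For a fresh $\bm X\sim N(\bm\mu_{lX}^0,\bm\Sigma_X^0)$ the pairwise events $\{\hat h_k(\bm X)>\hat h_l(\bm X)\}$ are half-spaces, because the common $\hat{\bm C}_X$ cancels the quadratic part and leaves the affine discriminant $\ln(\hat\pi_k/\hat\pi_l)+K\big(\bm X-\tfrac12(\hat{\bm\mu}_{kX}+\hat{\bm\mu}_{lX})\big)'\hat{\bm C}_X(\hat{\bm\delta}_{kX}-\hat{\bm\delta}_{lX})$. Each pairwise misclassification probability is then $\Phi(\hat u_{kl})$, where $\Phi$ is the standard normal c.d.f.\ and $\hat u_{kl}=m_{kl}/\sqrt{v_{kl}}$ with $m_{kl}$, $v_{kl}$ the conditional mean and variance of this discriminant; the Bayes counterpart is $\Phi(u_{kl})$ with $u_{kl}=-\Delta_{kl}/2+\Delta_{kl}^{-1}\ln(\pi_k/\pi_l)$ and $\Delta_{kl}^2=(\bm\mu_{kX}^0-\bm\mu_{lX}^0)'\bm C_X^0(\bm\mu_{kX}^0-\bm\mu_{lX}^0)$, so that $\Delta_{k1}=K\Delta_k$ and, by (C6), $\inf_{k\neq l}\Delta_{kl}>0$. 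Since $K$ is fixed, $R_{PROP}(\mathcal T)$ and $R_{Bayes}$ are each a $\pi_l$-weighted probability of a finite union of such half-spaces, so $|R_{PROP}(\mathcal T)-R_{Bayes}|$ is bounded by $\sum_{l,k}\pi_l$ times the Gaussian measure of the symmetric difference of the matching half-spaces, which is in turn controlled by $|\Phi(\hat u_{kl})-\Phi(u_{kl})|$ plus the angle between the whitened normal directions $(\bm\Sigma_X^0)^{1/2}\hat{\bm C}_X(\hat{\bm\delta}_{kX}-\hat{\bm\delta}_{lX})$ and $(\bm\Sigma_X^0)^{1/2}\bm C_X^0(\bm\delta_{kX}^0-\bm\delta_{lX}^0)$. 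Those angles go to $0$ by the consistency of $\hat{\bm C}_X$ and $\hat{\bm\delta}_{kX}$ relative to $\|\bm\delta_{kX}^0-\bm\delta_{lX}^0\|_2\asymp\Delta_{kl}$, so it remains to control $|\hat u_{kl}-u_{kl}|$ and, when $\Delta_{kl}\to\infty$, $\Phi(\hat u_{kl})$ itself, uniformly over the finitely many pairs.

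The heart of the argument is a perturbation bound for $\hat u_{kl}$. Transferring (C1) to $\bm C_X^0=(\bm\Sigma_X^0)^{-1}$ by eigenvalue interlacing, and using the Glasso rate \eqref{conrateC} for $\|\hat{\bm C}_X-\bm C_X^0\|$, the Lasso rate \eqref{conratedelta} for $\|\hat{\bm\delta}_{kX}-\bm\delta_{kX}^0\|_2$, the model-selection consistency of $\hat{\bm\delta}_{kX}$ under (C4)--(C5) (so that with probability tending to one its support lies in $\mathbb S_{\delta_k}$, whence $\|\hat{\bm\delta}_{kX}-\hat{\bm\delta}_{lX}\|_1\le\sqrt{s_k+s_l}\,\|\hat{\bm\delta}_{kX}-\hat{\bm\delta}_{lX}\|_2$), the fluctuations of $\bar{\bm w}$ (of order $\sqrt{s/n}$ on any fixed support of size $s$ and of order $\sqrt{\log p/n}$ in sup-norm), and $\hat\pi_k-\pi_k=O_p(n^{-1/2})$ together with (C7), I would expand $m_{kl}-m_{kl}^0$ and $v_{kl}-\Delta_{kl}^2$ into trilinear differences and bound each factor in the norm that pairs with the approximately sparse vectors against which it is contracted, arriving, on an event of probability tending to one, at $|\hat u_{kl}-u_{kl}|\le M\,\Delta^2\,\xi_{n;k}^{1-2\alpha}$ (and at the sharper $O_p(\xi_{n;k})$ when $\Delta$ is bounded); here the three terms of $\xi_{n;k}$ are exactly the $\hat{\bm C}_X$ contribution $d_n$, the scaled $\hat{\bm\delta}$ contribution $b_k^{(n)}/\Delta_k$, and the cross term $\sqrt{s_k S_{h;p}}/(\sqrt n\,\Delta_k)$ produced when the sup-norm error of $\bar{\bm w}$ is contracted against the (row-sparse) $\hat{\bm C}_X$ times a sparse vector, while the exponent $1-2\alpha$ (with $\alpha\in(0,1/2)$ free) provides the slack that absorbs the exceptional probabilities of the model-selection step in (C5). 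Given this, in case (1) $\Delta$ bounded makes $|\hat u_{kl}-u_{kl}|=O_p(\xi_{n;k})\stackrel{P}{\rightarrow}0$, so Lipschitz continuity of $\Phi$ gives $|\Phi(\hat u_{kl})-\Phi(u_{kl})|\stackrel{P}{\rightarrow}0$; in case (2) with $\Delta\to\infty$, on the good event $\hat u_{kl}\le-\tfrac12\Delta_{kl}+M\Delta^2\xi_{n;k}^{1-2\alpha}+O(\Delta_{kl}^{-1})=-\tfrac12\Delta_{kl}+o(\Delta_{kl})$ because $\Delta^2\xi_{n;k}^{1-2\alpha}\to0$ while $\Delta_{kl}$ is bounded below, so that $|\Phi(\hat u_{kl})-\Phi(u_{kl})|\le\phi\!\big(-\tfrac12\Delta_{kl}+o(\Delta_{kl})\big)\,|\hat u_{kl}-u_{kl}|\lesssim e^{-c\Delta_{kl}^2}\,\Delta^2\xi_{n;k}^{1-2\alpha}\to0$ (with $\phi$ the standard normal density), and off the good event the probability vanishes. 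Summing over the finitely many $(l,k)$ then gives $R_{PROP}(\mathcal T)-R_{Bayes}\stackrel{P}{\rightarrow}0$ in both cases.

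The step I expect to be the main obstacle is the trilinear perturbation bound in the third paragraph: one must decide, for each estimation error ($\hat{\bm C}_X$ in spectral norm, $\hat{\bm\delta}_{kX}$ in $\ell_2$, $\bar{\bm w}$ in restricted-$\ell_2$ and in sup-norm, $\hat\pi_k$), in which norm it should be measured when it multiplies the approximately sparse vectors $\hat{\bm\delta}_{kX}-\hat{\bm\delta}_{lX}$ and $\bm\mu_{kX}^0-\bm\mu_{lX}^0$, so that the bound degrades only by the tolerated factors $\Delta_{kl}$, $\sqrt{s_k}$ and $\Delta^2$; and one must verify that the non-vanishing $\ell_2$ fluctuation of $\bar{\bm w}$, which is $O_p(\sqrt{p/n})$, never actually enters, because every occurrence of $\hat{\bm\gamma}$ appears multiplied by $\hat{\bm C}_X$ times an asymptotically sparse vector. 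The case split, the angle bound, and the handling of the finite union over classes are routine once $K$ is fixed, and parallel the structure in \cite{Shao2011Sparse}.
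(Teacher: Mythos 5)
Your first-paragraph reduction is correct and is exactly what the paper does: the plug-in $\hat y_k$ sits at the estimated conditional mean, so $\hat p_k$ factors as $\hat p(\bm x\mid G_k)$ times a class-independent constant, and the rule on $(\bm x',\hat y)'$ collapses to plug-in LDA on $\bm X$ alone (this is the content of Lemma \ref{lemma:max} and Proposition \ref{proposition2}, which the paper proves and then invokes at the start of its argument). From there, however, you take a different second stage than the paper, and that stage contains the genuine gap. You reduce $|R_{PROP}(\mathcal T)-R_{Bayes}|$ to pairwise half-space comparisons, i.e.\ to $|\Phi(\hat u_{kl})-\Phi(u_{kl})|$ plus an ``angle'' term for the Gaussian measure of symmetric differences of half-spaces, and then everything hinges on the asserted bound $|\hat u_{kl}-u_{kl}|\le M\,\Delta^2\xi_{n;k}^{1-2\alpha}$ on a high-probability event. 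That bound is precisely the technical heart of the theorem and you do not derive it — you explicitly flag the ``trilinear perturbation bound'' as the main obstacle. In the paper this work is carried by Lemmas \ref{lemma3}--\ref{lemma4}: the variance-type bound $(\hat{\bm\delta}'_{kX}\hat{\bm C}_X-(\bm\delta^0_{kX})'\bm C_X^0)\bm\Sigma_X^0(\hat{\bm C}_X\hat{\bm\delta}_{kX}-\bm C_X^0\bm\delta^0_{kX})=\Delta_k^2[O_p(b_k^{(n)}/\Delta_k)+O_p(d_n)]$, the mean-type bound with the extra separation term $\tfrac12(\bm\delta^0_{kX}-\bm\delta^0_{tX})'\bm C_X^0(\bm\delta^0_{kX}-\bm\delta^0_{tX})$, and the $O_p(\sqrt{S_{h;p}q_k^{(n)}/n})$ cross term coming from $\hat{\bm\mu}_{kX}-\bm\mu^0_{kX}$ contracted against $\hat{\bm C}_X\hat{\bm\delta}_{kX}$ — which is exactly where the third piece of $\xi_{n;k}$ originates. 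Without an argument at this level of detail, your proof is an outline of the right shape rather than a proof; in addition, the quantitative wedge bound you need for the symmetric difference of two half-spaces (offset difference plus angle) is stated but not justified, and it is machinery the paper never needs.

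Two further points of comparison. The paper avoids your closed-form pairwise route entirely: it bounds $R_{PROP}(\mathcal T)-R_{Bayes}\le\Pr(\hat Z_{PROP}\ne\hat Z_{Bayes})$, introduces a margin $\epsilon$, and splits into $\Pr(|\hat\vartheta_k-\vartheta_k|\ge\epsilon/2)$, controlled by Markov's inequality applied to the conditional normal law of $\hat\vartheta_k-\vartheta_k$ with the lemma bounds above, and $\Pr(|\vartheta_k-\vartheta_l|\le\epsilon)$, controlled by Gaussian anti-concentration together with (C6). The exponent $1-2\alpha$ then arises from the choice $\epsilon=C\xi_{n;k}^{\alpha}$ in that Markov bound — not, as you suggest, from absorbing exceptional probabilities of the (C4)--(C5) model-selection step; indeed the natural perturbation rate is $\Delta^2 O_p(\xi_{n;k})$, of which your $\Delta^2\xi_{n;k}^{1-2\alpha}$ form is only a weakened consequence. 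If you supply the analogues of Lemmas \ref{lemma3}--\ref{lemma4} (or simply quote them), your case analysis in the final paragraph — Lipschitz continuity of $\Phi$ when $\Delta$ is bounded, and $\Delta^2\xi_{n;k}^{1-2\alpha}\to0$ (hence $\Delta^2\xi_{n;k}\to0$) forcing both $\Phi(\hat u_{kl})$ and $\Phi(u_{kl})$ to the same limit when $\Delta\to\infty$ — can be made to close, so the route is viable but is not complete as written.
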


Theorem \ref{theory:multiclass} establishes the sub-optimality property of the proposed classification rule for the multi-class problem.
In the case of two-class problem, the Bayes error can be expressed in a closed form of $R_{Bayes} = \Phi( - \Delta_2 / 2)$
when the data are from normal distribution, where $\Phi$ represents the cumulative distribution function of $N(0, 1)$,
and $\Delta_2 = \sqrt{(\bm \delta_{2X}^0)' \bm C_X^0 \bm \delta_{2X}^0} = \sqrt{(\bm \mu_{2X}^0 - \bm \mu_{1X}^0)' \bm C_X^0 (\bm \mu_{2X}^0 - \bm \mu_{1X}^0)}$.
Accordingly, in Theorems \ref{theory:consistent} and \ref{theory:supp}, we can compute the convergence rate of the proposed rule for the two-class problem, and subsequently investigate its properties.

\begin{theorem}{}{\label{theory:consistent}}
Assume that conditions (C1) - (C7) hold with $K = 2$, and
\begin{align*}
\xi_n = \max \{ d_n, \frac{b_2^{(n)}}{\Delta_2}, \frac{\sqrt{s_2 S_{h;p}}}{\sqrt{n} \Delta_2} \} \rightarrow 0.
\end{align*}
Then we have $R_{PROP}(\mathcal{T}) = \Phi(-\frac{\Delta_2}{2} [1 + O_p(\xi_n)])$.
\end{theorem}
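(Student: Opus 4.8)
The plan is to collapse the GAQQ classifier into an ordinary LDA rule in $\bm X$ alone, reducing $R_{PROP}(\mathcal{T})$ to a convex combination of two Gaussian tails whose arguments can be expanded around $-\Delta_2/2$ using \eqref{conrateC} and \eqref{conratedelta}, in the spirit of \cite{Shao2011Sparse}. Concretely, I would first argue that the predicted label at a new $\bm x$ is the estimated LDA rule in $\bm X$. By Proposition~\ref{proposition1}, $\hat Z=\arg\max_{k}\hat\pi_k\hat p_k$, where $\hat p_k$ is the $N(\hat{\bm\mu}_k,\hat{\bm\Sigma})$ density at $(\bm x',\hat y_k)'$ and $\hat y_k$ is the plug-in conditional mean of \eqref{ypred}. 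Factoring that density as the marginal $N(\hat{\bm\mu}_{kX},\hat{\bm\Sigma}_X)$ density of $\bm X$ times the conditional density of $y\mid\bm X=\bm x$, and noting that the shared covariance makes the conditional variance $\hat\sigma_y^2-\hat{\bm\Sigma}_{Xy}'\hat{\bm\Sigma}_X^{-1}\hat{\bm\Sigma}_{Xy}$ free of $k$ while $\hat y_k$ is exactly that conditional mean, the $y$-factor is a single constant. Hence $\hat Z$ assigns $\bm x$ to $G_1$ iff $\hat L(\bm x):=\hat{\bm\delta}_X'\hat{\bm C}_X(\bm x-\hat{\bm\gamma}_X)+\ln(\hat\pi_1/\hat\pi_2)\ge 0$, with $\hat{\bm\delta}_X=\hat{\bm\mu}_{1X}-\hat{\bm\mu}_{2X}$, $\hat{\bm\gamma}_X=(\hat{\bm\mu}_{1X}+\hat{\bm\mu}_{2X})/2$. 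A fresh observation from $G_k$ has $\bm X\sim N(\bm\mu_{kX}^0,\bm\Sigma_X^0)$, so conditionally on $\mathcal{T}$ the statistic $\hat L(\bm X)$ is Gaussian, giving
\[
R_{PROP}(\mathcal{T})=\pi_1\Phi(A_1)+\pi_2\Phi(A_2),\qquad A_j=\frac{(-1)^{j}\big[\hat{\bm\delta}_X'\hat{\bm C}_X(\bm\mu_{jX}^0-\hat{\bm\gamma}_X)+\ln(\hat\pi_1/\hat\pi_2)\big]}{\big(\hat{\bm\delta}_X'\hat{\bm C}_X\bm\Sigma_X^0\hat{\bm C}_X\hat{\bm\delta}_X\big)^{1/2}}.
\]

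Next I would expand $A_1$ and $A_2$. Write $\bm\delta_X^0=\bm\mu_{1X}^0-\bm\mu_{2X}^0$, so $\Delta_2^2=(\bm\delta_X^0)'\bm C_X^0\bm\delta_X^0$; set $\hat{\bm C}_X=\bm C_X^0+\bm E_C$ with $\|\bm E_C\|=O_p(d_n)$ from \eqref{conrateC} and $\hat{\bm\delta}_X=\bm\delta_X^0+\bm E_\delta$ with $\|\bm E_\delta\|_2=O_p(b_2^{(n)})$ from \eqref{conratedelta}, and use that (C4)--(C5) give Lasso model-selection consistency so $\hat{\bm\delta}_X$ is supported on $\mathbb{S}_{\delta_2}$ (size $\tilde s_2\asymp s_2$) with probability tending to one. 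For the denominator, $\bm C_X^0\bm\Sigma_X^0\bm C_X^0=\bm C_X^0$ gives $\hat{\bm\delta}_X'\hat{\bm C}_X\bm\Sigma_X^0\hat{\bm C}_X\hat{\bm\delta}_X=\Delta_2^2[1+O_p(\xi_n)]$ once the cross terms in $\bm E_C,\bm E_\delta$ are bounded with (C1), \eqref{conrateC}--\eqref{conratedelta}, and the sparsity of $\bm\delta_X^0$. For the numerator, $\bm\mu_{jX}^0-\hat{\bm\gamma}_X=(-1)^{j-1}\tfrac12\bm\delta_X^0+(\bm\gamma_X^0-\hat{\bm\gamma}_X)$ up to the $O_p(n^{-1/2})$ prior-imbalance correction of \eqref{eq: gamma}, so the leading term of $\hat{\bm\delta}_X'\hat{\bm C}_X(\bm\mu_{jX}^0-\hat{\bm\gamma}_X)$ is $(-1)^{j-1}\tfrac12\Delta_2^2$; the remainders — $\bm E_\delta'\bm C_X^0\bm\delta_X^0$, $(\bm\delta_X^0)'\bm E_C\bm\delta_X^0$, the $(\bm\gamma_X^0-\hat{\bm\gamma}_X)$ term (controlled on the $s_2$-dimensional support via the $S_{h;p}$ sparsity of $\bm\Sigma^0$, which is where the $\sqrt{s_2 S_{h;p}}/(\sqrt n\,\Delta_2)$ piece of $\xi_n$ enters), and the prior term $\ln(\hat\pi_1/\hat\pi_2)=\ln(\pi_1/\pi_2)+O_p(n^{-1/2})$ (bounded under (C7), handled as in \cite{Shao2011Sparse}) — are all $O_p(\Delta_2^2\xi_n)$ or smaller, using (C6) so $\Delta_2$ stays bounded away from $0$. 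Dividing, $A_j=-\tfrac{\Delta_2}{2}[1+O_p(\xi_n)]$ for $j=1,2$. Finally, with $\pi_1+\pi_2=1$, a one-term Taylor expansion of $\Phi$ about $-\Delta_2/2$ turns $\pi_1\Phi(A_1)+\pi_2\Phi(A_2)$ into $\Phi(-\Delta_2/2)+\phi(-\Delta_2/2)O_p(\Delta_2\xi_n)+(\text{higher order})$, which, since $x\mapsto\Phi(-x/2)$ is strictly decreasing with derivative of order $\phi(\Delta_2/2)$ near $\Delta_2/2$, can be rewritten via the mean value theorem as $\Phi(-\tfrac{\Delta_2}{2}[1+O_p(\xi_n)])$, the claimed identity.

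The hard part will be the error bookkeeping in the expansion step: many cross terms appear — for instance $(\bm\delta_X^0)'\bm E_C\bm\Sigma_X^0\bm E_C\bm\delta_X^0$, $\bm E_\delta'\hat{\bm C}_X(\bm\gamma_X^0-\hat{\bm\gamma}_X)$, and $\hat{\bm\delta}_X'\bm E_C(\bm\mu_{jX}^0-\hat{\bm\gamma}_X)$ — and each must be bounded in the appropriate norm (spectral for $\bm E_C$, $\ell_1/\ell_2$ for $\bm E_\delta$ and $\bm\delta_X^0$), exploiting that only the $s_2$ active coordinates of $\bm\delta_X^0$ (and, post-selection, of $\hat{\bm\delta}_X$) contribute and that $\bm\Sigma^0$ obeys the $S_{h;p}$ bound, so as to assemble precisely $\xi_n=\max\{d_n,\,b_2^{(n)}/\Delta_2,\,\sqrt{s_2 S_{h;p}}/(\sqrt n\,\Delta_2)\}$ and to keep the multiplicative form $1+O_p(\xi_n)$ valid uniformly in the regime where $\Delta_2$ may diverge.
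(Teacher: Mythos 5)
Your proposal is correct and follows essentially the same route as the paper: reduce the GAQQ rule to the LDA rule on $\bm X$ alone (your density-factorization argument is exactly the content of Proposition \ref{proposition2}), write $R_{PROP}(\mathcal{T})$ conditionally on $\mathcal{T}$ as a mixture of Gaussian tails, and expand numerator and denominator using \eqref{conrateC}, \eqref{conratedelta}, and the mean-estimation bound driven by the support size and $S_{h;p}$ (the paper's Lemma \ref{lemma5}), assembling precisely $\xi_n$. The only differences are cosmetic: you keep the $\ln(\hat\pi_1/\hat\pi_2)$ term where the paper simply sets $\pi_1=\pi_2$ invoking (C7), and your Taylor/mean-value finish does the same job as the paper's direct identification of both $\Phi$-arguments with $-\tfrac{\Delta_2}{2}[1+O_p(\xi_n)]$.
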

Moreover, we establish the following properties.
\begin{theorem}{}{\label{theory:supp}}
Assuming all the conditions in Theorem \ref{theory:consistent} are satisfied, we have

\noindent (1) if $\Delta_2$ is bounded, then the proposed rule is asymptotically optimal and
$\frac{R_{PROP}(\mathcal{T})}{R_{Bayes}} - 1 = O_p(\xi_n)$; \\
(2) if $\Delta_2 \rightarrow \infty$, then the proposed rule is asymptotically sub-optimal;  \\
(3) if $\Delta_2 \rightarrow \infty$ and $\xi_n \Delta_2^2 \rightarrow 0$,
then the proposed rule is asymptotically optimal.
\end{theorem}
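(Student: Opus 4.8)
The plan is to build directly on Theorem \ref{theory:consistent}, which already supplies $R_{PROP}(\mathcal{T}) = \Phi(-\frac{\Delta_2}{2}[1 + O_p(\xi_n)])$, and on the closed form $R_{Bayes} = \Phi(-\Delta_2/2)$ valid in the two-class Gaussian setting. Let $\epsilon_n$ denote the random sequence hidden in the $O_p(\xi_n)$ term, so that $\epsilon_n = O_p(\xi_n)$ and, since $\xi_n \to 0$ by assumption, $\epsilon_n \stackrel{P}{\rightarrow} 0$. Each part of the theorem then reduces to comparing $\Phi(-\frac{\Delta_2}{2}(1+\epsilon_n))$ with $\Phi(-\frac{\Delta_2}{2})$ in the relevant regime of $\Delta_2$. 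Part (2) is immediate: when $\Delta_2 \to \infty$, the argument $\frac{\Delta_2}{2}(1+\epsilon_n) \stackrel{P}{\rightarrow} +\infty$, hence $R_{PROP}(\mathcal{T}) \stackrel{P}{\rightarrow} 0$ and $R_{Bayes} \to 0$, so $R_{PROP}(\mathcal{T}) - R_{Bayes} \stackrel{P}{\rightarrow} 0$, which is exactly asymptotic sub-optimality; since $\lim R_{Bayes} = 0$ here, this is the strongest difference-type statement available without an extra hypothesis.

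For part (1), where $\Delta_2$ is bounded above and, by condition (C6), bounded away from $0$, I would apply the mean value theorem to $g(x) := \Phi(-\frac{\Delta_2}{2}(1+x))$ on the segment between $0$ and $\epsilon_n$, obtaining $g(\epsilon_n) - g(0) = -\frac{\Delta_2}{2}\phi\bigl(\frac{\Delta_2}{2}(1+\tilde x_n)\bigr)\epsilon_n$ for an intermediate point $\tilde x_n$ with $|\tilde x_n| \le |\epsilon_n| \stackrel{P}{\rightarrow} 0$. Dividing by $R_{Bayes} = \Phi(-\Delta_2/2)$, the coefficient $\frac{\Delta_2}{2}\phi\bigl(\frac{\Delta_2}{2}(1+\tilde x_n)\bigr)/\Phi(-\Delta_2/2)$ is a continuous function of quantities confined to a compact set bounded away from degeneracy, hence $O_p(1)$. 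This yields $R_{PROP}(\mathcal{T})/R_{Bayes} - 1 = O_p(\xi_n)$, which is the claimed rate and, because $\xi_n \to 0$, also gives asymptotic optimality.

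Part (3) is where the real work lies and is the main obstacle: with $\Delta_2 \to \infty$ the denominator $R_{Bayes}$ is exponentially small, so a crude bound on the numerator is useless and sharp Gaussian tail asymptotics are required. I would invoke the Mills-ratio expansion $\Phi(-t) = \frac{\phi(t)}{t}\bigl(1 + O(t^{-2})\bigr)$ as $t \to \infty$ to write
\[
\frac{R_{PROP}(\mathcal{T})}{R_{Bayes}} = \frac{1 + o_p(1)}{1 + \epsilon_n}\cdot\frac{\phi\bigl(\frac{\Delta_2}{2}(1+\epsilon_n)\bigr)}{\phi\bigl(\frac{\Delta_2}{2}\bigr)} = \frac{1 + o_p(1)}{1 + \epsilon_n}\exp\Bigl(-\frac{\Delta_2^2}{4}\epsilon_n\bigl(1 + \tfrac{\epsilon_n}{2}\bigr)\Bigr).
\]
The exponent has magnitude $O_p(\Delta_2^2 \xi_n) + O_p(\Delta_2^2 \xi_n^2)$; the first summand is $o_p(1)$ precisely by the hypothesis $\xi_n \Delta_2^2 \to 0$, and the second is smaller still since also $\xi_n \to 0$. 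Hence the exponential factor and $1/(1+\epsilon_n)$ both converge to $1$ in probability, so $R_{PROP}(\mathcal{T})/R_{Bayes} \stackrel{P}{\rightarrow} 1$, i.e. asymptotic optimality. The two points needing care are justifying the $O(t^{-2})$ remainder of the Mills ratio when the argument $\frac{\Delta_2}{2}(1+\epsilon_n)$ is itself random, which is legitimate because that argument diverges to $+\infty$ in probability, and tracking the sign of $\epsilon_n$ so that the factor $\exp\bigl(-\frac{\Delta_2^2}{4}\epsilon_n(\cdots)\bigr)$ is controlled regardless of whether the estimated separation over- or under-shoots the truth.
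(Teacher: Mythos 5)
Your proof is correct and follows essentially the same route as the paper's. Parts (1) and (2) coincide with the paper's argument: the mean value theorem applied to $\Phi(-\tfrac{\Delta_2}{2}[1+O_p(\xi_n)])-\Phi(-\tfrac{\Delta_2}{2})$ together with $R_{Bayes}$ bounded away from zero when $\Delta_2$ is bounded, and for (2) the observation that both $R_{PROP}(\mathcal{T})$ and $R_{Bayes}$ tend to zero. For part (3) the paper instead invokes a lemma borrowed from \cite{Shao2011Sparse} (stated as Lemma \ref{lemma1} in the Appendix), which asserts $\Phi(-\sqrt{a_n^{(1)}}(1-a_n^{(2)}))/\Phi(-\sqrt{a_n^{(1)}})\rightarrow e^{\rho}$ when $a_n^{(1)}\rightarrow\infty$, $a_n^{(2)}\rightarrow 0$ and $a_n^{(1)}a_n^{(2)}\rightarrow\rho$, applied with $a_n^{(1)}=\Delta_2^2/4$ and $\rho=0$; your Mills-ratio computation is precisely the content of that lemma unpacked, so the mathematics is the same, and your explicit handling of the randomness of the argument $\tfrac{\Delta_2}{2}(1+\epsilon_n)$ and of the sign of $\epsilon_n$ is, if anything, a bit more careful than the paper, which applies the deterministic lemma to random sequences without comment.
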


Theorem \ref{theory:consistent} provides the convergence rate of the proposed classification rule for the two-class problem with respect to $\xi_n$.
Base on such a result, Theorem \ref{theory:supp} demonstrates that the property of the proposed classification rule (optimality or sub-optimality) depends on the scenarios of the true model's $\Delta_2$.
Specifically, (1) when $\Delta_2$ is bounded, i.e. $\lim\limits_{n \rightarrow \infty} R_{Bayes} > 0$, then $R_{PROP}(\mathcal{T})$ converges in probability to the same limit as $R_{Bayes}$.
(2) When $\Delta_2 \rightarrow \infty$, i.e. $R_{Bayes} \rightarrow 0$, then   $R_{PROP}(\mathcal{T}) \stackrel{P}{\rightarrow} 0$;
in this case, if we further have $\xi_n \Delta_2^2 \rightarrow 0$, then $R_{PROP}(\mathcal{T})$ and $R_{Bayes}$ have the same convergence rate.

Next, we derive the consistency property for the proposed estimate of $y$.
Denote by $\hat{y}^{P}$ the predicted value of $y$ obtained from the proposed model.
Define $\hat{y}^{B}$ to be the predicted value of $y$ for $\bm x$ when all parameters are known.
Specifically, first obtain the class label $k$ via the Bayes LDA rule, then $\hat{y}^{B} = y_k = \mu_{ky} + \bm \Sigma_{Xy}' \bm \Sigma_{X}^{-1} (\bm x - \bm \mu_{kX})$.
Hence, the mean squared errors ($MSE$) of estimates $\hat{y}^{B}$ and $\hat{y}^{P}$ are $MSE_{Bayes} = \E[(\hat{y}^{B}-y)^2|\mathcal{T}]$ and $MSE_{PROP} = \E[(\hat{y}^{P}-y)^2|\mathcal{T}]$.
Now we establish the theoretical results of $\hat{y}^{P}$ in Theorem \ref{theory:y}.

\begin{theorem}{}{\label{theory:y}}
Assume that conditions (C1) - (C7) hold and conditions in Theorem \ref{theory:multiclass} are satisfied.
Then we have
\begin{align*}
MSE_{PROP} - MSE_{Bayes} \stackrel{P}{\rightarrow} 0,
\end{align*}
for the multi-class qualitative response.
\end{theorem}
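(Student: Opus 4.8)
The plan is to control $|MSE_{PROP}-MSE_{Bayes}|$ by a Cauchy--Schwarz product, one factor of which is $o_p(1)$ and the other $O_p(1)$. I would start from two structural reductions. First, by Proposition \ref{proposition1} together with the Schur-complement identity $\hat{\bm C}_X-\hat{\bm C}_{Xy}\hat{\bm C}_{Xy}'/\hat c_y^2=\hat{\bm\Sigma}_X^{-1}$: because the $y$-value fed into each density in Algorithm \ref{alg2} is the class-wise conditional mean $\hat y_k=\hat\mu_{ky}+\hat{\bm\Sigma}_{Xy}'\hat{\bm\Sigma}_X^{-1}(\bm x-\hat{\bm\mu}_{kX})$, the quadratic form $((\bm x',\hat y_k)'-\hat{\bm\mu}_k)'\hat{\bm C}((\bm x',\hat y_k)'-\hat{\bm\mu}_k)$ collapses to $(\bm x-\hat{\bm\mu}_{kX})'\hat{\bm\Sigma}_X^{-1}(\bm x-\hat{\bm\mu}_{kX})$, so the GAQQ class prediction is exactly the sample linear discriminant $\hat Z(\bm x)=\arg\max_k\{\ln\hat\pi_k-\tfrac12(\bm x-\hat{\bm\mu}_{kX})'\hat{\bm\Sigma}_X^{-1}(\bm x-\hat{\bm\mu}_{kX})\}$ in $\bm x$, whose population version is precisely the Bayes label $k^*(\bm x)$ used to define $\hat y^{B}$. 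Second, writing $g_k(\bm x)=\mu_{ky}+\bm\Sigma_{Xy}'\bm\Sigma_X^{-1}(\bm x-\bm\mu_{kX})$ and $\hat g_k$ for its plug-in, we have $\hat y^{P}=\hat g_{\hat Z}(\bm X)$ and $\hat y^{B}=g_{k^*}(\bm X)$, both functions of $(\bm X,\mathcal T)$ only; decomposing the fresh response as $y=g_Z(\bm X)+\varepsilon$ with $\varepsilon\perp(\bm X,Z,\mathcal T)$ of constant variance $\sigma_{y|X}^2$, the noise term cancels and $MSE_{PROP}-MSE_{Bayes}=\E[(\hat g_{\hat Z}(\bm X)-g_Z(\bm X))^2-(g_{k^*}(\bm X)-g_Z(\bm X))^2\mid\mathcal T]$. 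Cauchy--Schwarz then yields $|MSE_{PROP}-MSE_{Bayes}|\le\sqrt{\E[(\hat y^{P}-\hat y^{B})^2\mid\mathcal T]}\cdot\sqrt{\E[(\hat y^{P}+\hat y^{B}-2g_Z)^2\mid\mathcal T]}$.

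The core is to prove $\E[(\hat y^{P}-\hat y^{B})^2\mid\mathcal T]\stackrel{P}{\rightarrow}0$, and I would split $\hat y^{P}-\hat y^{B}=(\hat g_{\hat Z}-\hat g_{k^*})+(\hat g_{k^*}-g_{k^*})$. For the estimation piece, write $\hat g_k(\bm x)-g_k(\bm x)=(\hat\mu_{ky}-\mu_{ky})+(\hat{\bm\beta}-\bm\beta)'(\bm x-\bm\mu_{kX})+\hat{\bm\beta}'(\bm\mu_{kX}-\hat{\bm\mu}_{kX})$ with $\bm\beta=\bm\Sigma_X^{-1}\bm\Sigma_{Xy}=-\bm C_{Xy}/c_y^2$. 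Integrating over $\bm X$ and using $\E[(\bm v'(\bm X-\bm\mu_{kX}))^2\mid\mathcal T]\le C(1+\Delta^2)\|\bm v\|_2^2$ (valid because, by (C1), $\bm X$ has bounded covariance and its class means lie within $O(\Delta)$ of each other), the eigenvalue bounds (C1), the Glasso rate \eqref{conrateC} — which holds for the whole $\hat{\bm C}$ and hence, via block inversion, for $\hat{\bm\Sigma}_X^{-1}$ and $\hat{\bm\beta}$ — the Lasso rate \eqref{conratedelta}, conditions (C2) and (C5), and the sub-Gaussian maximal bound $\|\bar{\bm w}-\E\bar{\bm w}\|_\infty=O_p(\sqrt{\log p/n})$, each of the three contributions is $o_p(1)$; in particular the middle one is $O_p((1+\Delta^2)d_n^2)\le O_p(\Delta^2\xi_{n;k}^{1-2\alpha})=o_p(1)$ by the hypothesis imported from Theorem \ref{theory:multiclass}. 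Hence $\E[(\hat g_{k^*}-g_{k^*})^2\mid\mathcal T]\le\sum_k\E[(\hat g_k-g_k)^2\mid\mathcal T]=o_p(1)$. For the classification piece, note that $\hat g_k-\hat g_l$ is free of $\bm x$, so $\E[(\hat g_{\hat Z}-\hat g_{k^*})^2\mid\mathcal T]\le\sum_{k\ne l}(\hat g_k-\hat g_l)^2\,\Pr(\hat Z(\bm X)=k,\ k^*(\bm X)=l\mid\mathcal T)$, where $(\hat g_k-\hat g_l)^2=(g_k^0-g_l^0)^2+o_p(1)$.

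Bounding the finitely many ($\le K(K-1)$) terms of this last sum is the main obstacle. For a weakly separated pair ($\Delta_{kl}=O(1)$), $(g_k^0-g_l^0)^2=O(1)$, and it suffices to have $\Pr(\hat Z\ne k^*\mid\mathcal T)\stackrel{P}{\rightarrow}0$: on the event of probability tending to $1$ on which all estimates attain their rates, the sample discriminant boundary lies within relative perturbation $O(\xi_{n;k})$ of the Bayes boundary, and the Gaussian-mixture law of $\bm X$ puts mass $O(\xi_{n;k})=o(1)$ on that slab — this is exactly the estimate that underlies the proof of Theorem \ref{theory:multiclass}. For a well separated pair ($\Delta_{kl}\to\infty$), $(g_k^0-g_l^0)^2$ can only grow polynomially in $\Delta_{kl}$ (indeed $|(\bm\beta^0)'(\bm\mu_{kX}^0-\bm\mu_{lX}^0)|=O(\Delta_{kl})$ by Cauchy--Schwarz with $\bm C_X^0$), whereas the region $\{\hat Z=k,\,k^*=l\}$ is a thin slab lying at Mahalanobis distance $\gtrsim\Delta_{kl}/2-O(1)$ from every class mean — here (C7) is used to keep this true near third class means as well — so it carries mass that is exponentially small in $\Delta_{kl}^2$ up to factors polynomial in $\Delta_{kl}$; the Gaussian decay dominates the polynomial gap, each term is $o_p(1)$, and summing gives $\E[(\hat y^{P}-\hat y^{B})^2\mid\mathcal T]=o_p(1)$.

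Finally, the second Cauchy--Schwarz factor is $O_p(1)$: $\E[(\hat y^{P}+\hat y^{B}-2g_Z)^2\mid\mathcal T]\le2\E[(\hat g_{\hat Z}-g_Z)^2\mid\mathcal T]+2\E[(g_{k^*}-g_Z)^2\mid\mathcal T]$, and each expectation is handled as above — split off the finitely many mislabelling events, pair $(g_k^0-g_l^0)^2$ with the corresponding misclassification probability (bounded for weakly separated pairs; exponentially small in $\Delta^2$ for well separated ones, so the polynomial-times-Gaussian product is $O(1)$), and add the $o_p(1)$ plug-in error on the correctly labelled part. Combining the two factors gives $|MSE_{PROP}-MSE_{Bayes}|=o_p(1)\cdot O_p(1)\stackrel{P}{\rightarrow}0$, which is the claim. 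The genuinely delicate point is the ``gap $\times$ disagreement-mass'' bookkeeping when $\Delta\to\infty$, which is where the condition $\Delta^2\xi_{n;k}^{1-2\alpha}\to0$ and the Gaussian-tail control carried over from the proof of Theorem \ref{theory:multiclass} are indispensable; everything else is a routine propagation of the Glasso and Lasso convergence rates through the conditional-normal prediction formulas.
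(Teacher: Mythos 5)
Your proposal is correct in substance but follows a genuinely different route from the paper's. The paper conditions on the true class, performs an exact bias--variance expansion with the misclassification probabilities $r_{ik}=\Pr(\hat Z=i\mid Z=k)$, cancels the common conditional variance $\sigma_y^2-\bm\Sigma_{Xy}'\bm\Sigma_X^{-1}\bm\Sigma_{Xy}$, and arrives at the bound $MSE_{PROP}-MSE_{Bayes}\le \E_{\bm x}(\mathbb{M}_{PROP})+[R_{PROP}(\mathcal T)-R_{Bayes}]\,D_{max}^2$, where $\mathbb{M}_{PROP}$ collects the plug-in errors $b_{iy}+\bm h'\bm x-d_i$ and $D_{ik}$ is the between-class gap of the conditional means of $y$; it then quotes Theorem \ref{theory:multiclass} for $R_{PROP}-R_{Bayes}\stackrel{P}{\rightarrow}0$ and asserts $\E_{\bm x}(\mathbb{M}_{PROP})\stackrel{P}{\rightarrow}0$. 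You instead cancel the noise, factor the MSE difference by Cauchy--Schwarz, and control the disagreement term $\sum_{k\ne l}(\hat g_k-\hat g_l)^2\Pr(\hat Z=k,\;k^*=l\mid\mathcal T)$ using the disagreement probability from \emph{inside} the proof of Theorem \ref{theory:multiclass} (rather than its statement), supplemented by a Gaussian-tail-versus-polynomial-gap argument when $\Delta\to\infty$; your reduction of the GAQQ label to the sample LDA rule in $\bm x$ is the multi-class analogue of Proposition \ref{proposition2}, and your plug-in estimate plays the role of the paper's $\E_{\bm x}(\mathbb{M}_{PROP})\to0$ claim, with more detail than the paper supplies. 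What your route buys is an explicit treatment of the ``gap $\times$ disagreement-mass'' bookkeeping that the paper compresses into $[R_{PROP}-R_{Bayes}]D_{max}^2$ --- note the paper's inequality tacitly needs $r^{P}_{ik}\ge r^{B}_{ik}$ termwise and treats $D_{max}^2$ as harmless; what the paper's route buys is brevity, since it needs only the conclusion of Theorem \ref{theory:multiclass} plus one asserted plug-in limit. One caveat, shared by both arguments: the gap $g_k^0-g_l^0=D_{kl}$ contains $\mu_{ky}^0-\mu_{ly}^0$, which conditions (C1)--(C7) do not tie to the $X$-only separation $\Delta_{kl}$, so your claim that $(g_k^0-g_l^0)^2$ grows at most polynomially in $\Delta_{kl}$ (like the paper's implicit boundedness of $D_{max}^2$) rests on an additional, unstated control of the $y$-direction mean gaps; your use of (C7) to keep the disagreement slab far from third-class means is right in spirit (it bounds the prior log-ratios entering the Bayes cells) but deserves to be spelled out.
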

This result compares the $MSE$ of the proposed estimate of $y$ with that from the optimal Bayes rule (under all parameters known).
Since the classification errors from a classification rule might be larger than 0,
the $MSE$ of $\hat{y}$ may not converge to 0 even though the sample size $n$ is sufficiently large.
Here we adopt the $MSE_{Bayes}$ as a reasonable performance benchmark to evaluate the property of the proposed model with respect to $y$.
Theorem \ref{theory:y} states that the difference of $MSE$ between the proposed model and the Bayes method converges to 0 in probability.

\section{Simulation}\label{sec:simulation}
\subsection{Two-class Settings of the Qualitative Response}

In this section, we evaluate the performance of the proposed GAQQ method for a binary response $Z$ under different inverse covariance matrices $\bm C$ and mean differences $\bm \delta_2$.
The proposed GAQQ model is compared with several benchmark methods, denoted as GLDA, CL, and ENET, which use the predictor variables $\bm X$ to predict $Z$ and $y$.
The GLDA employs the LDA classification rule for $Z$ using the generalized inverse of the sample covariance matrix of $\bm X$ when $p > n$.
The CL method applies the LPD technique introduced by \cite{cai2012a} to predict the response variable $Z$ based on $\bm X$.
With their estimated class label of $Z$, the GLDA and CL predict $y$ by Equation \eqref{ypred}.
The ENET method uses the elastic-net logistic model \citep{zou2005regularization} on predictor variables $\bm X$ to fit the qualitative response $Z$ and hence predicts $Z$ for the testing data.
For the quantitative response $y$, the ENET separately fits two elastic-net linear regressions for two classes using training data and then predicts $y$ in the testing data based on its estimated $Z$.
The tuning parameters of the CL and ENET methods are chosen by cross-validation.

Regarding the inverse covariance matrix $\bm C$, we consider the following five structures in the simulation, which are commonly used in the literature \citep{yuan2007model, kang2020variable}.
\begin{itemize}
\item Model 1. $\bm C_1$ = $\bm I$. $c_{ij}=1$ if $i=j$ and 0 otherwise;

\item Model 2. $\bm C_2$ = AR(0.6). The conditional covariance between any two random variables is fixed to be $0.6^{|i - j|}$, $1 \leq i, j \leq p$.

\item Model 3. $\bm C_3$ is generated by randomly permuting rows and corresponding columns of the matrix $\bm C_2$.

\item Model 4. $\bm C_4 = \left(
\begin{array}{cc}
    \mbox{CS}(0.6) & \bm 0 \\
    \bm 0 &  \bm I
  \end{array}
\right)$, where CS(0.6) represents a $5 \times 5$ compound symmetry matrix with diagonal entries 1 and others 0.6.
$\bm 0$ indicates a matrix with all entries 0.

\item Model 5. $\bm C_{5} = \bm \Theta + \alpha \bm I$, where the diagonal entries of $\bm \Theta$ are zeros and $\bm \Theta_{ij} = \bm \Theta_{ji} = b * Unif(-1, 1)$ for $i \neq j$, where $b$ is from the Bernoulli distribution with probability 0.15 equal 1. Each off-diagonal entry of $\bm \Theta$ is generated independently.
The value of $\alpha$ is gradually increased to make sure that $\bm C_{5}$ is positive definite.
\end{itemize}
Model 1 is the simplest sparse matrix indicating that variables are independent of each other.
Model 4 is a sparse matrix indicating that only the first 5 variables are correlated.
This matrix has more sparsity as the dimensionality increases.
Models 2 and 3 are relatively dense matrices, and they also become more sparse when the dimensionality increases.
All of these four matrices have sparse structures to some extent,
while Model 5 is a general sparse matrix with no structure, which is similarly used in \cite{bien2011sparse}.

For the mean difference $\bm \delta_2$, we consider two different levels of sparsity.
The $\bm \mu_1$ is the vector with all elements zeros.
Then generate $\bm\mu_2$ such that (S1): $25\%$ of the elements in $\bm\mu_2$ are zeros; (S2): $75\%$ of the elements in $\bm\mu_2$ are zeros.
The positions of zeros in $\bm \mu_{2}$ are randomly distributed with its
nonzero values independently generated from uniform distribution $Unif(0, 2)$.
We consider $p \in \{40, 80, 200\}$, and generate $n_1 = 30$ observations from $N(\bm \mu_{1}, \bm C^{-1})$ as well as
$n_2 = 30$ observations from $N(\bm \mu_{2}, \bm C^{-1})$ as the training set.
The same procedure is employed to generate the testing data, which is used to evaluate the prediction performance of $y$ and $Z$ for different compared methods.
We consider the root mean squared prediction error $\textrm{RMSPE} = \sqrt{\frac{1}{n} \sum_{i=1}^{n} (y_{i} - \hat{y}_{i})^2}$ to measure the prediction accuracy for the quantitative response $y$, where $\hat{y}_{i}$ represents the predicted value.
The prediction performance of the qualitative response $Z$ is measured by the misclassification error $\textrm{ME} = \frac{1}{n}\sum_{i=1}^n I(z_{i} \neq \hat{z}_{i})$, where $\hat{z}_{i}$ is the predicted value of $z_{i}$ and $I(\cdot)$ is an indicator function.

\begin{table}
\begin{center}
\caption{Averages and standard errors (in parenthesis) of misclassification errors (MEs) in percentage for methods in comparison.}
\label{table:me}
\begin{tabular}{cccrrrrrrrrrrr}
\hline \hline
\multicolumn{3}{c}{}&Model 1&Model 2&Model 3&Model 4&Model 5\\
\hline
\multirow{8}{*}{$p=40$}
&  &ENET&25.3(0.51) &25.4(0.44) &25.0(0.49) &24.8(0.46) &25.6(0.45) \\
&S1&GLDA&9.28(0.42) &10.2(0.75) &9.40(0.66) &5.28(0.33) &9.13(0.33) \\
&  &CL  &1.68(0.21) &10.9(1.04) &8.75(1.00) &4.58(0.51) &1.20(0.22) \\
&  &GAQQ&2.92(0.26) &9.93(0.41) &17.7(0.50) &2.33(0.24) &2.67(0.22) \\
\hhline{~---------}
&  &ENET&25.3(0.45) &26.1(0.47) &24.3(0.42) &26.2(0.42) &25.1(0.41) \\
&S2&GLDA&20.0(0.66) &9.78(0.41) &13.1(0.49) &22.8(0.63) &18.8(0.58)\\
&  &CL  &6.92(0.39) &4.90(0.33) &8.63(0.50) &8.65(0.38) &8.48(0.40) \\
&  &GAQQ&5.32(0.32) &4.52(0.28) &7.10(0.28) &6.28(0.30) &5.88(0.29) \\
\hline
\multirow{8}{*}{$p=80$}
&  &ENET&25.0(0.51) &24.4(0.44) &24.8(0.44) &23.9(0.48) &25.8(0.54) \\
&S1&GLDA&7.88(0.43) &11.2(0.49) &12.6(0.52) &8.03(0.38) &11.3(0.49) \\
&  &CL  &6.37(1.38) &10.4(0.66) &11.2(0.63) &5.63(1.48) &9.38(1.57) \\
&  &GAQQ&0.10(0.04) &2.97(0.22) &2.22(0.19) &0.07(0.03) &3.53(0.22) \\
\hhline{~---------}
&  &ENET&24.8(0.42) &25.3(0.39) &25.2(0.43) &24.3(0.43) &23.9(0.52) \\
&S2&GLDA&19.5(0.55) &26.7(0.68) &24.8(0.76) &16.1(0.63) &20.9(0.61) \\
&  &CL  &4.67(0.37) &20.5(0.77) &18.2(1.01) &2.65(0.30) &14.5(0.92) \\
&  &GAQQ&1.08(0.13) &10.6(0.37) &5.42(0.30) &0.57(0.11) &3.33(0.22) \\
\hline
\multirow{8}{*}{$p=200$}
&  &ENET&24.3(0.41) &24.9(0.52) &24.4(0.42) &25.2(0.43) &24.6(0.43) \\
&S1&GLDA&2.25(0.20) &14.1(0.52) &13.6(0.42) &3.13(0.22) &7.23(0.36) \\
&  &CL  &2.08(0.20) &2.88(0.18) &2.72(0.17) &2.12(0.21) &2.26(0.15) \\
&  &GAQQ&0.22(0.06) &0.47(0.09) &0.20(0.05) &0.23(0.07) &0.05(0.03) \\
\hhline{~---------}
&  &ENET&25.3(0.40) &25.5(0.40) &24.6(0.47) &25.7(0.49) &25.3(0.51) \\
&S2&GLDA&9.73(0.40) &20.5(0.55) &24.3(0.57) &9.08(0.40) &15.0(0.50) \\
&  &CL  &1.46(0.14) &2.96(0.16) &2.29(0.11) &2.06(0.18) &2.38(0.16) \\
&  &GAQQ&0.01(0.00) &1.10(0.16) &1.55(0.16) &0.02(0.01) &0.17(0.05) \\
\hline \hline
\end{tabular}
\end{center}
\end{table}

Tables \ref{table:me} and \ref{table:rmse} report the averaged MEs in percentage and averaged RMPSE, as well as their corresponding standard errors in parenthesis for each approach over 100 replications.
It can be seen from Table \ref{table:me} that the proposed method generally outperforms other approaches with respect to MEs.
Such an advantage becomes more significant as the underlying models are more sparse.
Specifically, in the scenario of S1 = $25\%$ and $p = 40$, the proposed GAQQ method does not perform as well as others, since the underlying models in this scenario are the least sparse, especially for the dense models 2 and 3.
In contrast, the proposed method produces relatively better comparison results in the scenario of S2 = $75\%$ and $p = 40$, where the true mean difference is more sparse.
Furthermore, this advantage of the proposed method is well evidenced in the scenario of $p = 80$, and even more notable when $p = 200$ with its substantially lower MEs than other methods.

\begin{table}
\begin{center}
\caption{Averages and standard errors (in parenthesis) of root mean squared prediction errors (RMSPE) for methods in comparison.}
\label{table:rmse}
\begin{tabular}{cccrrrrrrrrrrr}
\hline \hline
\multicolumn{3}{c}{}&Model 1&Model 2&Model 3&Model 4&Model 5\\
\hline
\multirow{8}{*}{$p=40$}
&  &ENET&1.18(0.01) &1.62(0.02) &1.84(0.02) &1.91(0.01) &1.73(0.02) \\
&S1&GLDA&1.82(0.03) &2.00(0.04) &2.03(0.04) &1.82(0.03) &1.82(0.03) \\
&  &CL  &1.79(0.03) &1.97(0.06) &2.03(0.08) &1.65(0.03) &1.74(0.03) \\
&  &GAQQ&1.07(0.01) &1.21(0.01) &1.49(0.01) &1.02(0.01) &1.17(0.02) \\
\hhline{~---------}
&  &ENET&1.59(0.01) &1.20(0.01) &1.42(0.02) &1.22(0.01) &1.12(0.01) \\
&S2&GLDA&1.93(0.03) &1.96(0.03) &1.90(0.03) &1.85(0.03) &1.77(0.02)\\
&  &CL  &1.82(0.03) &1.78(0.03) &1.70(0.03) &1.67(0.03) &1.58(0.03) \\
&  &GAQQ&1.07(0.01) &1.14(0.01) &1.37(0.01) &0.98(0.01) &1.09(0.01) \\
\hline
\multirow{8}{*}{$p=80$}
&  &ENET&1.58(0.01) &1.77(0.02) &1.68(0.02) &1.37(0.01) &1.75(0.01) \\
&S1&GLDA&2.01(0.03) &2.62(0.04) &2.38(0.04) &2.08(0.03) &1.92(0.03) \\
&  &CL  &2.02(0.04) &2.63(0.07) &2.31(0.05) &1.88(0.03) &1.72(0.03) \\
&  &GAQQ&1.11(0.01) &1.26(0.01) &1.54(0.01) &1.11(0.01) &1.31(0.01) \\
\hhline{~---------}
&  &ENET&1.08(0.01) &1.31(0.01) &1.44(0.02) &1.61(0.01) &1.11(0.01) \\
&S2&GLDA&1.96(0.03) &2.56(0.05) &2.27(0.04) &2.04(0.03) &2.36(0.04) \\
&  &CL  &1.76(0.03) &2.38(0.05) &2.10(0.04) &1.85(0.03) &2.20(0.04) \\
&  &GAQQ&1.02(0.01) &1.10(0.01) &1.39(0.01) &0.99(0.01) &1.11(0.01) \\
\hline
\multirow{8}{*}{$p=200$}
&  &ENET&1.66(0.02) &1.24(0.01) &1.68(0.02) &1.07(0.01) &1.61(0.02) \\
&S1&GLDA&1.24(0.01) &1.58(0.02) &1.62(0.02) &1.21(0.01) &1.40(0.02) \\
&  &CL  &1.27(0.03) &1.60(0.02) &1.65(0.02) &1.28(0.02) &1.36(0.03) \\
&  &GAQQ&1.08(0.01) &1.27(0.01) &1.44(0.02) &1.06(0.01) &1.15(0.01) \\
\hhline{~---------}
&  &ENET&1.03(0.01) &1.65(0.01) &1.62(0.01) &1.22(0.01) &1.17(0.01) \\
&S2&GLDA&1.19(0.01) &1.67(0.02) &1.76(0.02) &1.22(0.01) &1.32(0.01) \\
&  &CL  &1.19(0.01) &1.55(0.02) &1.74(0.02) &1.23(0.01) &1.33(0.01) \\
&  &GAQQ&1.01(0.01) &1.25(0.01) &1.43(0.01) &1.01(0.01) &1.15(0.01) \\
\hline \hline
\end{tabular}
\end{center}
\end{table}

From Table \ref{table:rmse}, we observe that the proposed method generally gives superior performance over other compared approaches for each scenario in predicting the quantitative response $y$.
The possible explanations are in two folds.
First, the proposed GAQQ method provides an accurate classification of the qualitative response $Z$.
Second, the proposed GAQQ has a proper estimation of $\bm C$ by the regularization that is used in the prediction of quantitative response $y$ according to \eqref{ypred}, resulting in an improvement of the prediction accuracy.
It is also seen that the CL and GLDA methods are comparable in some cases,
possibly because both of them use the generalized inverse of the sample covariance of $\bm X$ for $\hat{\bm \Sigma}_{X}^{-1}$ in the prediction of quantitative response $y$ in \eqref{ypred}.
But the CL method is generally better since it has more accurate classification results than the GLDA in Table \ref{table:me}.


\subsection{Multi-class Settings of the Qualitative Response}

Now, we examine the performance of the proposed GAQQ method for multi-class settings of the qualitative response.
We consider $p = 200$ and $K = 4$ classes of qualitative response $Z$
with training sizes $n_1 = n_2 = n_3 = n_4 = 30$ for Models 1 - 5 of inverse covariance matrix $\bm C$.
Let $\mu_{kj}$ represent the $j$th entry of the mean value $\bm \mu_k$.
Generate $\mu_{kj} = 0.5 * k + u_{kj}$ for $j = 2k - 1, 2k, 2k + 1, \ldots, 2k + 6$, otherwise $\mu_{kj} = 0$, where $u_{kj}$ is from $Unif(-1, 1)$.
The training data are generated from $N(\bm \mu_k, \bm C^{-1})$,
and the testing data follow the same generation procedure.
We compare the proposed method with the GLDA, as well as the estimators proposed by  \cite{Witten2011Penalized} (WT) and \cite{Clemmensen2011Sparse} (CHWE),
where the latter two methods are designed for multi-class problems.
We use the WT and CHWE models to first predict the class label $Z$ for the testing data, and then the response $y$ is estimated, by the multivariate normal property, as $\hat{\mu}_{ky} + \hat{\bm \Sigma}_{Xy}' \hat{\bm \Sigma}_{X}^{-1} (\bm x - \hat{\bm \mu}_{kX})$ if their estimates $\hat{Z} = k$.
The results of performance measures, ME and RMSPE are summarized in Table \ref{table:multi} based on 100 replications.
One can see that the GAQQ method performs better than the GLDA as well as the WT method,
and is comparable with the CHWE in terms of the MEs.
Besides, the GAQQ method gives the best performance among the compared approaches with significantly lower values of RMSPE.

\begin{table}[h]
\small
\begin{center}
\caption{Averages and standard errors (in parenthesis) of MEs in percentage and RMSPE for methods in comparison for multi-class settings of $p = 200$.}
\label{table:multi}
\begin{tabular}{rrrrrrrrrrrrrrrr}
\hline \hline
&&Model 1 &Model 2  &Model 3  &Model 4  &Model 5   \\
\hline
&&&& ME    \\ \hline
&GLDA   &40.58 (0.46) &59.15 (0.56) &55.56 (0.51) &39.40 (0.48) &48.28 (0.55) \\
&WT     &17.26 (0.35) &43.73 (0.59) &43.01 (0.47) &17.90 (0.38) &33.17 (0.67)  \\
&CHWE   &14.70 (0.32) &25.14 (0.40) &32.16 (0.50) &16.66 (0.44) &21.31 (0.44)  \\
&GAQQ   &14.02 (0.32) &25.36 (0.53) &33.34 (0.50) &17.11 (0.42) &20.99 (0.49)  \\
\hline
&&&& RMSPE     \\
\hline
&GLDA   &1.64 (0.01)  &2.09 (0.02)  &2.02 (0.02)  &1.66 (0.01) &1.71 (0.02) \\
&WT     &1.56 (0.01)  &2.05 (0.02)  &1.94 (0.02)  &1.57 (0.01) &1.63 (0.02)  \\
&CHWE   &1.56 (0.01)  &2.01 (0.02)  &1.92 (0.02)  &1.55 (0.01) &1.61 (0.02)  \\
&GAQQ   &0.99 (0.01)  &1.11 (0.01)  &1.27 (0.01)  &1.01 (0.01) &1.39 (0.02)  \\
\hline \hline
\end{tabular}
\end{center}
\end{table}

\section{Case Studies}\label{sec:realdata}
%
%

In this section, we apply the proposed GAQQ method to two real-data case studies.
The first one is from the study of Heusler compounds in material science and the second one is from the study of molecular diagnostics of Ulcerative colitis and Crohn's disease.
Although from different fields, both problems contain QQ responses with high-dimensional predictors, and the proposed GAQQ method appears to have much better performance in terms of prediction accuracy compared with other methods.

The case study on material sciences is regarding the Heusler compounds, which are a large family of intermetallics with more than 1000 known members.
Many Heusler compounds have shown exotic properties, such as superconductivity and topological band structures, which have promising applications for quantum computing.
Understanding the thermodynamic stability of Heusler compounds lays the foundation for exploiting the large chemical space to discover and design new functional Heusler materials \citep{liu2016observation}.
To determine the thermodynamic stability of Heusler compounds, there are two key metrics: the mixing enthalpy (quantitative response) and the global stability based on hull energy (binary qualitative response).
The comprehensive database of 180628 full Heusler structures was built by collecting the relevant structural and energetic data from the Materials Project \citep{jain2013commentary}, OQMD \citep{saal2013materials}, and AFLOW \citep{curtarolo2012aflow}.
These data were calculated using first-principles methods based on density functional theory, and it was extremely computationally expensive (taking hours) to generate one entry of the data.
Therefore, a statistical model that can accurately predict the thermodynamic stability for any elemental and compound features is a useful surrogate of the first-principle computation models.

Since there is an intrinsic relationship between two QQ responses, the proposed GAQQ method is suitable to improve the prediction accuracy by jointly fitting them together.
To demonstrate the GAQQ method in the scenario when the number of predictors is large relative to the size of the data, we randomly choose 150 samples from each class of the binary response.
We delete the predictor variables whose standard deviations are less than $1.0e^{-6}$, resulting in 157 predictors of elemental and compound features.
To examine the prediction performance of the GAQQ method and other comparison methods,
we randomly divide data into a training set with a size of $200$ and a testing set with a size of $100$.
Table \ref{table:realdata} reports the prediction performance results based on 50 random splits of the Heusler data.
From the results,
it is seen that the proposed GAQQ performs much better than other methods in comparison, with the smallest values for the misclassification error (ME)
and the root mean squared prediction error (RMSPE).

\begin{table}[h]
\small
\begin{center}
\caption{The MEs in percentage and RMSPE of Heusler and gene expression data.} \label{table:realdata}
\begin{tabular}{rrrrrrrrrrr}
\hline \hline
&&& Heusler Data \\
\hline
&Methods    &GLDA           &ENET               &CL             &GAQQ    \\
&ME         &27.27 (1.828)  &11.87 (0.332)      &16.20 (0.688)  &10.49 (0.363) \\
&RMSPE      &1.797 (0.445)  &0.317 (0.083)      &1.046 (0.053)  &0.142 (0.002) \\
\hline
&&& IBD Gene Data \\
\hline
&Methods    &GLDA           &WT               &CHWE           &GAQQ    \\
&ME         &21.90 (0.800)  &24.80 (0.583)    &18.10 (0.555)  &15.77 (0.584) \\
&RMSPE      &0.743 (0.014)  &0.751 (0.014)    &0.746 (0.014)  &0.661 (0.011) \\
\hline \hline
\end{tabular}
\end{center}
\end{table}

The second data for the case study considers the multi-class settings of the qualitative response.
The IBD gene data \citep{Burczynski2006Molecular} are gene expressions on Ulcerative colitis (UC) and Crohn's disease (CD), two of which are common inflammatory bowel diseases (IBD) producing intestinal inflammation and tissue damage.
The IBD data set was collected at North American and European clinical sites from blood samples of 42 healthy individuals, 59 CD patients, and 26 UC patients with 22,283 genes.
An exploratory analysis, similarly conducted as in \cite{Shao2011Sparse}, is performed as variable screening by one-way ANOVA with three levels (healthy individuals, CD patients, and UC patients).
We choose the top 101 significant gene variables to form the data for methods comparison.
To create a quantitative response, one gene variable is randomly chosen as the quantitative response from the 101 significant variables.
The data set is then randomly partitioned into a training set with 67 samples and testing data with the rest 60 samples.
Table \ref{table:realdata} presents the comparison results by the GLDA, WT, CHWE, and proposed GAQQ methods based on 50 random splits of the data.
We observe that the proposed GAQQ method performs substantially well with relatively lower values of ME and RMSPE, as well as their corresponding standard errors in the parenthesis.
Such empirical results demonstrate that the proposed GAQQ method can achieve accurate predictions for both QQ responses in high-dimensional data.

\section{Discussion}\label{sec:discussion}
In this work, we propose a generative modeling approach to jointly model the data with QQ responses, which is a new perspective different from existing methods in the literature.
By fully exploring the joint distribution of the QQ responses and predictor variables, the proposed method enables efficient parameter estimation, accurate prediction,
and lays a good foundation for investigating the asymptotic properties.
The proposed model can be naturally extended to the situation for multiple quantitative responses.

One further research direction is to accommodate a more flexible structure on the joint distribution of QQ responses and predictor variables.
For example, one can extend the LDA for the classification of the qualitative response to the quadratic discriminant analysis (QDA).
The QDA is more flexible with different covariance structures in each class,
but its estimation for high-dimensional data would encounter more difficulty due to a large number of parameters.
Besides, the derivation of its asymptotic properties is much more technically complicated \citep{li2015sparse}.
Another research direction is to apply the generative modeling approach for the data with semi-continuous responses \citep{wang2020joint}, or the ordinal and quantitative responses.
One may employ the ordinal regression for the ordinal response, and then derive its joint likelihood function with appropriate regularization.

\bibliographystyle{vancouver-authoryear}
\bibliography{Ref_GAQQ}

\clearpage
\section*{Appendix}
\textit{Derivation from} \eqref{eq: obj2more} to \eqref{eq: objstep3}.
Let $C$ denote a generic constant thereafter.
\begin{align*}
&\sum\limits_{i\in G_1}(\bm w_i-\frac{2n_2}{n}\bm \delta_2-\bar{\bm w})'\bm C(\bm w_i-\frac{2n_2}{n}\bm \delta_2-\bar{\bm w})\\
&+\sum\limits_{i \in G_2}(\bm w_i+\frac{2n_1}{n}\bm \delta_2-\bar{\bm w})'\bm C(\bm w_i+\frac{2n_1}{n}\bm \delta_2-\bar{\bm w})+\lambda_2|\bm \delta_2|_1\\
=&\sum_{i\in G_1}(\bm C^{1/2}\bm w_i-\frac{2n_2}{n}\bm C^{1/2}\bm \delta_2-\bm C^{1/2}\bar{\bm w})'(\bm C^{1/2}\bm w_i-\frac{2n_2}{n}\bm C^{1/2}\bm \delta_2-\bm C^{1/2}\bar{\bm w})\\
&+\sum_{i\in G_2}(\bm C^{1/2}\bm w_i+\frac{2n_1}{n}\bm C^{1/2}\bm \delta_2-\bm C^{1/2}\bar{\bm w})'(\bm C^{1/2}\bm w_i+\frac{2n_1}{n}\bm C^{1/2}\bm \delta_2-\bm C^{1/2}\bar{\bm w})+\lambda_2|\bm \delta_2|_1\\
= &\sum\limits_{i\in G_1}(-2(\frac{2n_2}{n}\bm C^{1/2}\bm \delta_2)'(\bm C^{1/2}\bm w_i-\bm C^{1/2}\bar{\bm w})+\frac{4n_2^2}{n^2}\bm \delta_2'\bm C\bm \delta_2)\\
&+\sum\limits_{i\in G_2}(2(\frac{2n_1}{n}\bm C^{1/2}\bm \delta_2)'(\bm C^{1/2}\bm w_i-\bm C^{1/2}\bar{\bm x})+\frac{4n_1^2}{n^2}\bm \delta_2'\bm C\bm \delta_2)+\lambda_2|\bm \delta_2|_1 + C \\
=&-\frac{4n_2}{n}\bm \delta_2'\bm C\sum_{i\in G_1}\bm w_i+\frac{4n_2}{n}\bm \delta_2' \bm C(n_1\bar{\bm w})+\frac{4n_1n_2^2}{n^2}\bm \delta_2'\bm C\bm \delta_2\\
&+\frac{4n_1}{n}\bm \delta_2'\bm C\sum_{i\in G_2}\bm w_i-\frac{4n_1}{n}\bm \delta_2' \bm C(n_2\bar{\bm w})+\frac{4n_1^2n_2}{n^2}\bm \delta_2'\bm C\bm \delta_2+\lambda_2|\bm \delta_2|_1 + C \\
=&\frac{4n_1n_2}{n}\bm \delta_2'\bm C\bm \delta_2+\frac{4n_1}{n}\bm \delta_2'\bm C(n\bar{\bm w})-4\bm \delta_2'\bm C\sum_{i\in G_1}\bm w_i-\frac{4n_1}{n}\bm \delta_2'\bm C(n\bar{\bm w})        +4\bm \delta_2'\bm C(n_1\bar{\bm w}) + \lambda_2|\bm \delta_2|_1 + C   \\
=&\frac{4n_1n_2}{n}\bm \delta_2'\bm C\bm \delta_2-4\bm \delta_2'\bm C(\sum_{i \in G_1}\bm w_i-n_1\bar{\bm w})+\lambda_2|\bm \delta_2|_1 + C \\
= &\frac{4n_1n_2}{n}(\tilde{\bm y}-\bm C^{1/2}\bm \delta_2)'(\tilde{\bm y} - \bm C^{1/2} \bm \delta_2)+\lambda_2|\bm \delta_2|_1 + C,
\end{align*}
\vskip -5 pt
\noindent where $\tilde{\bm y}=\frac{n}{2n_1n_2}\bm C^{1/2}(\sum\limits_{i \in G_1}\bm w_i-n_1\bar{\bm w}) = \frac{1}{2n_1n_2}\bm C^{1/2}(n_2\sum\limits_{i \in G_1}\bm w_i - n_1\sum\limits_{i \in G_2}\bm w_i)$.
~~~~~~~~~~~~~~~~~~~~~~~~~~~~~~~~~~~~~~~$\Box$

\noindent \textit{Derivation from \eqref{eq: obj-multi-class-lasso1} to \eqref{eq: obj-multi-class-lasso2}.}
For $\bm \delta_j, j = 2,3,\ldots,K$,
\begin{align}\label{derivation1}
&\sum_{k=1}^{K} \sum_{i \in G_k} (\bm w_i - \bar{\bm w} + \frac{K}{n} \sum_{g=2}^{K} n_g \bm \delta_g - K \bm \delta_k)' \bm C (\bm w_i - \bar{\bm w} + \frac{K}{n} \sum_{g=2}^{K} n_g \bm \delta_g - K \bm \delta_k) + \lambda_2 | \bm \delta_j |_1  \nonumber \\
=& \sum_{k=1, k \neq j}^{K} \sum_{i \in G_k} \left[ \bm C^{1/2} (\bm w_i - \bar{\bm w} + \frac{K}{n} \sum_{g=2, g \neq j}^{K} n_g \bm \delta_g - K \bm \delta_k) + \bm C^{1/2} \frac{K}{n} n_j \bm \delta_j \right]' \nonumber \\
&~~~~~~~~~~~~~~~~~~~\left[ \bm C^{1/2} (\bm w_i - \bar{\bm w} + \frac{K}{n} \sum_{g=2, g \neq j}^{K} n_g \bm \delta_g - K \bm \delta_k) + \bm C^{1/2} \frac{K}{n} n_j \bm \delta_j \right] \nonumber \\
&~~~~~~ + \sum_{i \in G_j} \left[ \bm C^{1/2} (\bm w_i - \bar{\bm w} + \frac{K}{n} \sum_{g=2, g \neq j}^{K} n_g \bm \delta_g) + \bm C^{1/2} (\frac{K}{n} n_j \bm \delta_j  - K \bm \delta_j) \right]'  \nonumber \\
&~~~~~~~~~~~~~~~~~~~\left[ \bm C^{1/2} (\bm w_i - \bar{\bm w} + \frac{K}{n} \sum_{g=2, g \neq j}^{K} n_g \bm \delta_g) + \bm C^{1/2} (\frac{K}{n} n_j \bm \delta_j  - K \bm \delta_j) \right] + \lambda_2 | \bm \delta_j |_1 \nonumber \\
=& \sum_{k=1, k \neq j}^{K} \sum_{i \in G_k} \left[ \frac{2K n_j}{n} (\bm C^{1/2} \bm \delta_j)'(\bm C^{1/2} \bm w_i - \bm C^{1/2} \bar{\bm w} + \frac{K}{n} \bm C^{1/2} \sum_{g=2, g \neq j}^{K} n_g \bm \delta_g - K \bm C^{1/2} \bm \delta_k) + \frac{K^2 n_j^2}{n^2} \bm \delta'_j \bm C \bm \delta_j \right] \nonumber \\
&~~~~~~ + \sum_{i \in G_j} \left[ 2 K (\frac{n_j}{n}-1) (\bm C^{1/2} \bm \delta_j)'(\bm C^{1/2} \bm w_i - \bm C^{1/2} \bar{\bm w} + \frac{K}{n} \bm C^{1/2} \sum_{g=2, g \neq j}^{K} n_g \bm \delta_g) + K^2 (\frac{n_j}{n}-1)^2 \bm \delta'_j \bm C \bm \delta_j  \right]   \nonumber \\
&~~~~~~~~ + \lambda_2 | \bm \delta_j |_1 + C \nonumber \\
=& \frac{K n_j}{n} \sum_{k=1, k \neq j}^{K} \left[ 2 \bm \delta'_j \bm C \sum_{i \in G_k} \bm w_i - 2 n_k \bm \delta'_j \bm C \bar{\bm w} + \frac{2 n_k K}{n} \bm \delta'_j \bm C \sum_{g=2, g \neq j}^{K} n_g \bm \delta_g - 2 n_k K \bm \delta'_j \bm C \bm \delta_k + \frac{K n_j n_k}{n} \bm \delta'_j \bm C \bm \delta_j \right]  \nonumber \\
&~~~~~~ + K n_j (\frac{n_j}{n}-1) \left( 2 \bm \delta'_j \bm C (\frac{1}{n_j} \sum_{i \in G_j} \bm w_i) - 2 \bm \delta'_j \bm C \bar{\bm w} + \frac{2K}{n} \bm \delta'_j \bm C \sum_{g=2, g \neq j}^{K} n_g \bm \delta_g + K (\frac{n_j}{n}-1) \bm \delta'_j \bm C \bm \delta_j \right)  \nonumber  \\
&~~~~~~~~ + \lambda_2 | \bm \delta_j |_1 + C \nonumber \\
=& \frac{K^2 n_j (n - n_j)}{n} \bm \delta'_j \bm C \bm \delta_j - \frac{2 K}{n} \bm \delta'_j \bm C \{ \sum_{k=1, k \neq j}^{K} (-n_j \sum_{i \in G_k} \bm w_i + n_j n_k \bar{\bm w} - \frac{K n_j n_k }{n} \sum_{g=2, g \neq j}^{K} n_g \bm \delta_g + K n_j n_k \bm \delta_k) \nonumber \\
&- (n_j - n) \sum_{i \in G_j} \bm w_i + n_j (n_j - n) \bar{\bm w} - K n_j (\frac{n_j}{n}-1) \sum_{g=2, g \neq j}^{K} n_g \bm \delta_g \} + \lambda_2 | \bm \delta_j |_1 + C \nonumber \\
\triangleq & \frac{K^2 n_j (n - n_j)}{n} \bm \delta'_j \bm C \bm \delta_j - \frac{2 K}{n} \bm \delta'_j \bm C M + \lambda_2 | \bm \delta_j |_1 + C,
\end{align}
where
\begin{align*}
M =& -n_j \sum_{i \notin G_j} \bm w_i + n_j (n - n_j) \bar{\bm w} - \frac{K n_j (n - n_j)}{n} \sum_{g=2, g \neq j}^{K} n_g \bm \delta_g + K n_j \sum_{g=2, g \neq j}^{K} n_g \bm \delta_g     \\
& - (n_j - n) \sum_{i \in G_j} \bm w_i + n_j (n_j - n) \bar{\bm w} - K n_j (\frac{n_j}{n}-1) \sum_{g=2, g \neq j}^{K} n_g \bm \delta_g       \\
=& (n - n_j) \sum\limits_{i \in G_j} \bm w_i - n_j \sum\limits_{i \notin G_j} \bm w_i + K n_j \sum\limits_{g=2, g \neq j}^{K} n_g \bm \delta_g.
\end{align*}
Let $\tilde{y} = \frac{1}{K n_j (n - n_j)} \bm C^{1/2} M = \frac{1}{K n_j (n - n_j)} \bm C^{1/2} \left[ (n - n_j) \sum\limits_{i \in G_j} \bm w_i - n_j \sum\limits_{i \notin G_j} \bm w_i + K n_j \sum\limits_{g=2, g \neq j}^{K} n_g \bm \delta_g \right]$.
Hence, formula \eqref{derivation1} is equal to
\begin{align*}
\frac{K^2 n_j (n - n_j)}{n} (\tilde{y} - \bm C^{1/2} \bm \delta_j)'(\tilde{y} - \bm C^{1/2} \bm \delta_j) + \lambda_2 | \bm \delta_j |_1 + C.
\end{align*}

\begin{lemma}{}{\label{lemma:max}}
Suppose a random vector $(\bm x', \bm y')' \sim N(\bm \mu, \bm \Sigma)$, where $\bm x$ and $\bm y$ are multivariate variables.
For a given value of $\bm x$, then $\bm y = \bm \mu_{Y} + \bm \Sigma_{XY}' \bm \Sigma_{X}^{-1} (\bm x - \bm \mu_{X})$ maximizes exp$\{ -\frac{1}{2}[(\bm x', \bm y') - \bm \mu'] \bm \Sigma^{-1} [(\bm x', \bm y')' - \bm \mu] \}$, where
$\bm \mu = \left[
\begin{array}{cc}
\bm \mu_{X} \\
\bm \mu_{Y}
\end{array}\right] \mbox{and} ~
\bm \Sigma= \left[
\begin{array}{cc}
\bm \Sigma_{X},      & \bm \Sigma_{XY} \\
\bm \Sigma_{XY}',   & \bm \Sigma_{Y}
\end{array}\right]$.
\end{lemma}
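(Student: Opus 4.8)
\noindent \textit{Proof proposal.}
The plan is to reduce the stated maximization to a standard fact about the Gaussian density. Since $t\mapsto e^{-t/2}$ is strictly decreasing, maximizing $\exp\{-\frac{1}{2}[(\bm x', \bm y') - \bm \mu'] \bm \Sigma^{-1} [(\bm x', \bm y')' - \bm \mu]\}$ over $\bm y$ for fixed $\bm x$ is the same as minimizing the quadratic form $Q(\bm y) = [(\bm x', \bm y')' - \bm \mu]' \bm \Sigma^{-1} [(\bm x', \bm y')' - \bm \mu]$. Because $\bm \Sigma \succ 0$ implies $\bm \Sigma^{-1}\succ 0$, $Q$ is a strictly convex quadratic in $\bm y$ and hence has a unique minimizer, characterized by the first-order condition $\partial Q/\partial \bm y = \bm 0$.

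First I would introduce the block partition $\bm \Sigma^{-1} = \left[\begin{array}{cc} \bm A & \bm B \\ \bm B' & \bm D\end{array}\right]$ conformable with $(\bm x, \bm y)$, expand $Q(\bm y) = (\bm x - \bm \mu_X)'\bm A(\bm x-\bm \mu_X) + 2(\bm x-\bm \mu_X)'\bm B(\bm y - \bm \mu_Y) + (\bm y-\bm \mu_Y)'\bm D(\bm y - \bm \mu_Y)$, and differentiate to obtain $\bm B'(\bm x-\bm \mu_X) + \bm D(\bm y - \bm \mu_Y) = \bm 0$, i.e. $\bm y = \bm \mu_Y - \bm D^{-1}\bm B'(\bm x - \bm \mu_X)$. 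Then I would invoke the standard block-inverse (Schur complement) identities: with $\bm S = \bm \Sigma_Y - \bm \Sigma_{XY}'\bm \Sigma_X^{-1}\bm \Sigma_{XY}$ one has $\bm D = \bm S^{-1}$ and $\bm B = -\bm \Sigma_X^{-1}\bm \Sigma_{XY}\bm S^{-1}$, so that $-\bm D^{-1}\bm B' = -\bm S\,(-\bm S^{-1}\bm \Sigma_{XY}'\bm \Sigma_X^{-1}) = \bm \Sigma_{XY}'\bm \Sigma_X^{-1}$, which yields exactly $\bm y = \bm \mu_Y + \bm \Sigma_{XY}'\bm \Sigma_X^{-1}(\bm x - \bm \mu_X)$.

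An equivalent and perhaps more transparent route, which I would present as the main line, is the conditional-density factorization $p(\bm x, \bm y) = p(\bm x)\,p(\bm y\mid \bm x)$, where $\bm y \mid \bm x \sim N\big(\bm \mu_Y + \bm \Sigma_{XY}'\bm \Sigma_X^{-1}(\bm x - \bm \mu_X),\ \bm \Sigma_Y - \bm \Sigma_{XY}'\bm \Sigma_X^{-1}\bm \Sigma_{XY}\big)$. The quantity in the statement equals the joint density $p(\bm x,\bm y)$ up to a multiplicative constant free of $\bm y$, and $p(\bm x)$ does not involve $\bm y$, so maximizing over $\bm y$ is the same as maximizing the Gaussian density $p(\bm y\mid \bm x)$, which attains its maximum at its mean. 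There is essentially no obstacle; the only mild care is the Schur-complement bookkeeping verifying that the raw stationary point matches the conditional mean, which is routine and could be cited rather than rederived.
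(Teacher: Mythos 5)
Your proposal is correct, and its explicit computation is essentially the paper's own proof: partition the precision matrix, minimize the strictly convex quadratic form in $\bm y$ via the first-order condition, and convert the stationary point $\bm \mu_Y - \bm \Omega_Y^{-1}\bm \Omega_{XY}'(\bm x - \bm \mu_X)$ into $\bm \mu_Y + \bm \Sigma_{XY}'\bm \Sigma_X^{-1}(\bm x - \bm \mu_X)$ by the block-inverse (Schur complement) identity. The conditional-density factorization you prefer as the main line is a valid repackaging of the same fact, since the conditional mean formula it invokes rests on exactly that Schur-complement algebra.
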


\begin{proof}
We need to search for $\bm y$ to minimize $[(\bm x', \bm y') - \bm \mu'] \bm \Omega [(\bm x', \bm y')' - \bm \mu]$, where $\bm \Omega = \bm \Sigma^{-1} = \left[
\begin{array}{cc}
\bm \Omega_{X},      & \bm \Omega_{XY} \\
\bm \Omega_{XY}',   & \bm \Omega_{Y}
\end{array}\right]$.
That is, we minimize
\begin{align*}
L(\bm y) &= (\bm x', \bm y') \bm \Omega (\bm x', \bm y')' - 2\bm \mu' \bm \Omega (\bm x', \bm y')'  \\
&= (\bm x', \bm y') \left[
\begin{array}{cc}
\bm \Omega_{X},      & \bm \Omega_{XY} \\
\bm \Omega_{XY}',   & \bm \Omega_{Y}
\end{array}\right]
\left[
\begin{array}{cc}
\bm x \\
\bm y
\end{array}\right] - 2 (\bm \mu_{X}', \bm \mu_{Y}') \left[
\begin{array}{cc}
\bm \Omega_{X},      & \bm \Omega_{XY} \\
\bm \Omega_{XY}',   & \bm \Omega_{Y}
\end{array}\right]
\left[
\begin{array}{cc}
\bm x \\
\bm y
\end{array}\right]        \\
&= 2 \bm x' \bm \Omega_{XY} \bm y + \bm y' \bm \Omega_{Y} \bm y - 2 (\bm \mu_{X}' \bm \Omega_{XY} + \bm \mu_{Y}' \bm \Omega_{Y}) \bm y + C,
\end{align*}
where $C$ is a constant not depending on $\bm y$.
Taking derivative of $L(\bm y)$ and setting to zero yields
\begin{align*}
\frac{\partial L(\bm y)}{\partial \bm y} &= 2 \bm \Omega_{XY}' \bm x + 2 \bm \Omega_{Y} \bm y - 2 (\bm \Omega_{XY}' \bm \mu_{X} + \bm \Omega_{Y} \bm \mu_{Y}) = 0 \\
\bm y &= \bm \mu_{Y} - \bm \Omega_{Y}^{-1} \bm \Omega_{XY}' (\bm x - \bm \mu_{X}).
\end{align*}
This, together with a property of block matrix that $\bm \Omega_{XY}' = - \bm \Omega_{Y} \bm \Sigma_{XY}' \bm \Sigma_{X}^{-1}$, completes the proof.
\end{proof}

For a new observation $\bm x$, let $y_1 = \mu_{1y} + \bm \Sigma_{Xy}' \bm \Sigma_{X}^{-1} (\bm x - \bm \mu_{1X})$ and $y_2 = \mu_{2y} + \bm \Sigma_{Xy}' \bm \Sigma_{X}^{-1} (\bm x - \bm \mu_{2X})$.
Denote by $p_1$ = $p(\bm W = (\bm x', y_1)' | G_1)$
and $p_2$ = $p(\bm W = (\bm x', y_2)' | G_2)$.
Now we prove Proposition \ref{proposition1}.
\begin{proof}{\textbf{Proof of Proposition \ref{proposition1}}.}{}

Without loss of generality, we suppose $\pi_1 p_1 >\pi_2 p_2$, then we show below that the LDA classification rule would assign $(\bm x', y_1)'$ to $G_1$.
In order to achieve this,
we only need to prove that
\begin{align}\label{eq:3}
p_2 \geq p_3 =  p(\bm W = (\bm x', y_3)' | G_2)
\end{align}
for any value of $y_3$.
That is, we need to prove $\bm W = (\bm x', y_2)'$ will maximize
the density function of $N(\bm \mu_2, \bm \Sigma)$, which is the conclusion of Lemma \ref{lemma:max}.
As a result, $\pi_1 p_1$ = $\pi_1 p(\bm W = (\bm x', y_1)' | G_1) > \pi_2 p_2 \geq$ $\pi_2 p(\bm W = (\bm x', y_1)' | G_2)$ by taking $y_3 = y_1$ in \eqref{eq:3}.
Hence,
\begin{align*}
p(\bm x \in G_1 | \bm W = (\bm x', y_1)') = \frac{\pi_1 p_1}{p(\bm W = (\bm x', y_1)')} &> \frac{\pi_2 p(\bm W = (\bm x', y_1)' | \bm x \in G_2)}{p(\bm W = (\bm x', y_1)')} \\
&= p(\bm x \in G_2 | \bm W = (\bm x', y_1)'),
\end{align*}
implying that the LDA assigns $(\bm x', y_1)'$ to $G_1$.
\end{proof}

\begin{proposition}{}{\label{proposition2}}
For an observation $\bm x$, let $y_1 = \mu_{1y} + \bm \Sigma_{Xy}' \bm \Sigma_{X}^{-1} (\bm x - \bm \mu_{1X})$ and $y_2 = \mu_{2y} + \bm \Sigma_{Xy}' \bm \Sigma_{X}^{-1} (\bm x - \bm \mu_{2X})$.
Denote by $p_1$ = $p(\bm W = (\bm x', y_1)' | G_1)$ and $p_2$ = $p(\bm W = (\bm x', y_2)' | G_2)$.
Then $p(\bm x \in G_1 | \bm X = \bm x) > p(\bm x \in G_2 | \bm X = \bm x)$ is equivalent to $\pi_1 p_1 >\pi_2 p_2$.
\end{proposition}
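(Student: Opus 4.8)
The plan is to show that, for each class $k\in\{1,2\}$, the joint density $p_k = p(\bm W = (\bm x', y_k)' \mid G_k)$ factors as the class-conditional \emph{marginal} density of $\bm X$ at $\bm x$ times a positive constant that is the \emph{same} for both classes; the claimed equivalence then drops out of Bayes' rule.

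First I would use the model assumption \eqref{assumption}: since $\bm W = (\bm X', y)' \mid G_k \sim N(\bm \mu_k, \bm \Sigma)$ with a covariance $\bm \Sigma$ common to both classes, the marginal is $\bm X \mid G_k \sim N(\bm \mu_{kX}, \bm \Sigma_X)$ and the conditional is $y \mid \bm X = \bm x, G_k \sim N\!\big(\mu_{ky} + \bm \Sigma_{Xy}'\bm \Sigma_X^{-1}(\bm x - \bm \mu_{kX}),\ v\big)$ with $v = \sigma_y^2 - \bm \Sigma_{Xy}'\bm \Sigma_X^{-1}\bm \Sigma_{Xy}$, so that the joint density splits as $p(\bm W = (\bm x', y)' \mid G_k) = p(\bm X = \bm x \mid G_k)\, p(y \mid \bm X = \bm x, G_k)$.

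Second comes the only substantive point: the value $y_k$ in the statement is exactly the conditional mean of $y$ given $\bm X = \bm x$ in class $G_k$, so $p(y_k \mid \bm X = \bm x, G_k)$ is a univariate normal density evaluated at its own mean, namely $(2\pi v)^{-1/2}$. Because $v$ is determined solely by the shared $\bm \Sigma$ and does not depend on $k$, this factor is identical for $k=1$ and $k=2$. (Equivalently, one may invoke Lemma \ref{lemma:max} to identify $y_k$ as the maximizer of the class-$k$ density over $y$, and then note that these two maxima coincide.) Hence $p_k = (2\pi v)^{-1/2}\, p(\bm X = \bm x \mid G_k)$ for $k=1,2$.

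Finally I would cancel the common positive constant to get $\pi_1 p_1 > \pi_2 p_2 \iff \pi_1 p(\bm X = \bm x \mid G_1) > \pi_2 p(\bm X = \bm x \mid G_2)$, and then divide both sides by the marginal $p(\bm X = \bm x) = \pi_1 p(\bm X = \bm x \mid G_1) + \pi_2 p(\bm X = \bm x \mid G_2) > 0$, which turns the inequality into $p(\bm x \in G_1 \mid \bm X = \bm x) > p(\bm x \in G_2 \mid \bm X = \bm x)$ by Bayes' rule; this is the desired equivalence. I expect the main obstacle — though a mild one — to be recognizing and justifying that the peak height $(2\pi v)^{-1/2}$ of the conditional density of $y$ is class-independent precisely because the two classes share $\bm \Sigma$; the rest is routine bookkeeping with the Gaussian factorization.
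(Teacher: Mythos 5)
Your proof is correct, and it takes a genuinely different route from the paper's. The paper argues by brute force: it writes both inequalities $\pi_1 p_1 > \pi_2 p_2$ and $\pi_1 p(\bm X = \bm x \mid G_1) > \pi_2 p(\bm X = \bm x \mid G_2)$ as comparisons of log-densities, expands the full $p$-dimensional quadratic form using the explicit block inverse of $\bm \Sigma$, and then substitutes $y_k - \mu_{ky} = \bm \Sigma_{Xy}'\bm \Sigma_X^{-1}(\bm x - \bm \mu_{kX})$ to check, term by term, that the joint quadratic form collapses to the marginal one $(\bm x - \bm \mu_{kX})'\bm \Sigma_X^{-1}(\bm x - \bm \mu_{kX})$. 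You instead use the probabilistic factorization $p(\bm W = (\bm x', y)' \mid G_k) = p(\bm X = \bm x \mid G_k)\,p(y \mid \bm X = \bm x, G_k)$ and the observation that $y_k$ is exactly the class-$k$ conditional mean, so the conditional factor equals the peak height $(2\pi v)^{-1/2}$ with $v = \sigma_y^2 - \bm \Sigma_{Xy}'\bm \Sigma_X^{-1}\bm \Sigma_{Xy}$ common to both classes; cancelling this positive constant and invoking Bayes' rule gives the equivalence. The two arguments encode the same underlying identity (the exponent contributed by the $y$-block vanishes at the conditional mean, and $|\bm \Sigma| = |\bm \Sigma_X|\,v$ takes care of the normalizations), but yours is shorter, avoids the block-inverse bookkeeping entirely, and makes transparent why the shared covariance assumption is what drives the result; the paper's computation has the minor advantage of being fully self-contained at the level of matrix identities, without appealing to the marginal-conditional decomposition. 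Since every factor you cancel is strictly positive, your chain of implications is reversible, so the equivalence (not just one direction) is indeed established.
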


\begin{proof} {}{}
Since $p(\bm x \in G_1 | \bm X = \bm x) > p(\bm x \in G_2 | \bm X = \bm x)$, we have
\begin{align}\label{propproof1}
\pi_1 p(\bm X = \bm x | \bm x \in G_1) &> \pi_2 p(\bm X = \bm x | \bm x \in G_2)   \nonumber \\
\pi_1 \exp \{ -\frac{1}{2} (\bm x - \bm \mu_{1X})' \bm \Sigma_{X}^{-1} (\bm x - \bm \mu_{1X}) \} &> \pi_2 \exp \{ -\frac{1}{2} (\bm x - \bm \mu_{2X})' \bm \Sigma_{X}^{-1} (\bm x - \bm \mu_{2X}) \} \nonumber \\
\ln \pi_1 - \frac{1}{2}(\bm x - \bm \mu_{1X})' \bm \Sigma_{X}^{-1} (\bm x - \bm \mu_{1X}) &> \ln \pi_2 - \frac{1}{2}(\bm x - \bm \mu_{2X})' \bm \Sigma_{X}^{-1} (\bm x - \bm \mu_{2X}).
\end{align}
On the other hand, $\pi_1 p_1 > \pi_2 p_2$ yields
\begin{align}\label{propproof2}
\ln \pi_1 - \frac{1}{2} \left[ \left(\begin{array}{cc}
\bm x       \\
    y_{1}
\end{array} \right) -
\left(\begin{array}{cc}
\bm \mu_{1X}       \\
    \mu_{1y}
\end{array} \right)
\right]'
\left[\begin{array}{cc}
\bm \Sigma_{X},      & \bm \Sigma_{Xy} \\
\bm \Sigma_{Xy}',   & \sigma_{y}^2
\end{array}
\right]^{-1}
\left[ \left(\begin{array}{cc}
\bm x       \\
    y_{1}
\end{array} \right) -
\left(\begin{array}{cc}
\bm \mu_{1X}       \\
    \mu_{1y}
\end{array} \right)
\right] > \nonumber \\
\ln \pi_2 - \frac{1}{2}\left[ \left(\begin{array}{cc}
\bm x       \\
    y_{2}
\end{array} \right) -
\left(\begin{array}{cc}
\bm \mu_{2X}       \\
    \mu_{2y}
\end{array} \right)
\right]'
\left[\begin{array}{cc}
\bm \Sigma_{X},      & \bm \Sigma_{Xy} \\
\bm \Sigma_{Xy}',   & \sigma_{y}^2
\end{array}
\right]^{-1}
\left[ \left(\begin{array}{cc}
\bm x       \\
    y_{2}
\end{array} \right) -
\left(\begin{array}{cc}
\bm \mu_{2X}       \\
    \mu_{2y}
\end{array} \right)
\right].
\end{align}
Now we prove Equations \eqref{propproof1} and \eqref{propproof2} are equivalent.
Since
\begin{align*}
\left[\begin{array}{cc}
\bm \Sigma_{X},      & \bm \Sigma_{Xy} \\
\bm \Sigma_{Xy}',   & \sigma_{y}^2
\end{array}\right]^{-1}
&=
\left[\begin{array}{cc}
\bm \Sigma_{X}^{-1}+\frac{\bm \Sigma_{X}^{-1} \bm \Sigma_{Xy} \bm \Sigma_{Xy}' \bm \Sigma_{X}^{-1}}{\sigma_{y}^2 - \bm \Sigma_{Xy}' \bm \Sigma_{X}^{-1} \bm \Sigma_{Xy}},      & - \frac{\bm \Sigma_{X}^{-1} \bm \Sigma_{Xy}}{\sigma_{y}^2 - \bm \Sigma_{Xy}' \bm \Sigma_{X}^{-1} \bm \Sigma_{Xy}} \\
-\frac{\bm \Sigma_{Xy}' \bm \Sigma_{X}^{-1}}{\sigma_{y}^2 - \bm \Sigma_{Xy}' \bm \Sigma_{X}^{-1} \bm \Sigma_{Xy}},   & \frac{1}{\sigma_{y}^2 - \bm \Sigma_{Xy}' \bm \Sigma_{X}^{-1} \bm \Sigma_{Xy}}
\end{array}\right] \\
&= \left[\begin{array}{cc}
\bm \Sigma_{X}^{-1}+\frac{\bm \Sigma_{X}^{-1} \bm \Sigma_{Xy} \bm \Sigma_{Xy}' \bm \Sigma_{X}^{-1}}{Var(y|\bm X)},      & - \frac{\bm \Sigma_{X}^{-1} \bm \Sigma_{Xy}}{Var(y|\bm X)} \\
-\frac{\bm \Sigma_{Xy}' \bm \Sigma_{X}^{-1}}{Var(y|\bm X)},   & \frac{1}{Var(y|\bm X)}
\end{array}\right],
\end{align*}
the left side of \eqref{propproof2} equals
\begin{align*}
&~~~\ln \pi_1 - \frac{1}{2} \left[ (\bm x - \bm \mu_{1X})', y_1 - \mu_{1y} \right]'
\left[\begin{array}{cc}
\bm \Sigma_{X}^{-1}+\frac{\bm \Sigma_{X}^{-1} \bm \Sigma_{Xy} \bm \Sigma_{Xy}' \bm \Sigma_{X}^{-1}}{Var(y|\bm X)},      & - \frac{\bm \Sigma_{X}^{-1} \bm \Sigma_{Xy}}{Var(y|\bm X)} \\
-\frac{\bm \Sigma_{Xy}' \bm \Sigma_{X}^{-1}}{Var(y|\bm X)},   & \frac{1}{Var(y|\bm X)}
\end{array}\right]
\left[ \begin{array}{cc}
\bm x - \bm \mu_{1X}      \\
y_1 - \mu_{1y}
\end{array}
\right] \\
&= \ln \pi_1 - \frac{1}{2} \{(\bm x - \bm \mu_{1X})' \bm \Sigma_{X}^{-1} (\bm x - \bm \mu_{1X}) + (\bm x - \bm \mu_{1X})' \frac{\bm \Sigma_{X}^{-1} \bm \Sigma_{Xy} \bm \Sigma_{Xy}' \bm \Sigma_{X}^{-1}}{Var(y|\bm X)} (\bm x - \bm \mu_{1X}) \\
&~~~- \frac{y_1 - \mu_{1y}}{Var(y|\bm X)} \bm \Sigma_{Xy}' \bm \Sigma_{X}^{-1} (\bm x - \bm \mu_{1X}) + \frac{(y_1 - \mu_{1y})^2}{Var(y|\bm X)} - (\bm x - \bm \mu_{1X})' \frac{\bm \Sigma_{X}^{-1} \bm \Sigma_{Xy}}{Var(y|\bm X)} (y_1 - \mu_{1y}) \} \\
&= \ln \pi_1 - \frac{1}{2} (\bm x - \bm \mu_{1X})' \bm \Sigma_{X}^{-1} (\bm x - \bm \mu_{1X}),
\end{align*}
where the last equality applies $y_1 - \mu_{1y} = \bm \Sigma_{Xy}' \bm \Sigma_{X}^{-1} (\bm x - \bm \mu_{1X})$. Similarly, the right side of Equation \eqref{propproof2} equals $\ln \pi_2 - \frac{1}{2}(\bm x - \bm \mu_{2X})' \bm \Sigma_{X}^{-1} (\bm x - \bm \mu_{2X})$. This completes the proof.
\end{proof}

The inequality $p(\bm x \in G_1 | \bm X = \bm x) > p(\bm x \in G_2 | \bm X = \bm x)$ in Proposition \ref{proposition2} indicates that the LDA rule assigns $\bm x$ to $G_1$. Therefore, Proposition \ref{proposition2} implies that {\bf Step 2b} of Algorithm \ref{alg2} is equivalent to applying the LDA classification rule directly on $\bm x$ instead of $\bm w = (\bm x', \hat{y})'$.
This fact enables us to give theoretical proof for the consistency properties of the proposed classification rule based on variable $\bm X$ rather than $\bm W = (\bm X', y)'$.
Before the proof of Theorem \ref{theory:multiclass}, we present Lemmas \ref{lemma3} - \ref{lemma4}.

\begin{lemma}{}{\label{lemma3}}
For any $k = 2, 3, \ldots, K$, we have
\begin{equation*}
(\hat{\bm \delta}'_{kX} \hat{\bm C}_X - (\bm \delta_{kX}^0)' \bm C_X^0) \bm \Sigma_X^0 (\hat{\bm C}_X \hat{\bm \delta}_{kX} - \bm C_X^0 \bm \delta_{kX}^0) = \Delta_k^2 \left[ O_p(\frac{b_k^{(n)}}{\Delta_k}) + O_p(d_n) \right]
\end{equation*}
for the multi-class problem.
\end{lemma}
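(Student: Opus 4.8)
The plan is to express the left-hand side as a squared weighted Euclidean norm and control it by a first-order perturbation expansion about the truth. First I would set $\bm E = \hat{\bm C}_X - \bm C_X^0$ and $\bm f = \hat{\bm \delta}_{kX} - \bm \delta_{kX}^0$, so that the established rates \eqref{conrateC} and \eqref{conratedelta} read $\|\bm E\| = O_p(d_n)$ and $\|\bm f\|_2 = O_p(b_k^{(n)})$. Since $\hat{\bm C}_X$ is symmetric, $(\hat{\bm \delta}'_{kX}\hat{\bm C}_X)'= \hat{\bm C}_X\hat{\bm \delta}_{kX}$, so with $\bm a := \hat{\bm C}_X\hat{\bm \delta}_{kX} - \bm C_X^0\bm \delta_{kX}^0$ the quantity in the lemma is exactly $\bm a'\bm \Sigma_X^0\bm a = \|(\bm \Sigma_X^0)^{1/2}\bm a\|_2^2 \ge 0$. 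Multiplying out gives the exact identity $\bm a = \bm C_X^0\bm f + \bm E\bm \delta_{kX}^0 + \bm E\bm f$.

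Next I would bound $\|(\bm \Sigma_X^0)^{1/2}\bm a\|_2$ term by term via the triangle inequality, using condition (C1) to keep all eigenvalues of $\bm C_X^0$ and of $\bm \Sigma_X^0 = (\bm C_X^0)^{-1}$ bounded and bounded away from zero. For the first term, $(\bm \Sigma_X^0)^{1/2}\bm C_X^0 = (\bm C_X^0)^{1/2}$, hence $\|(\bm \Sigma_X^0)^{1/2}\bm C_X^0\bm f\|_2 \le \sqrt{\lambda_{\max}(\bm C_X^0)}\,\|\bm f\|_2 = O_p(b_k^{(n)})$. For the second term I would first note that $\Delta_k^2 = (\bm \delta_{kX}^0)'\bm C_X^0\bm \delta_{kX}^0$ together with (C1) forces $\|\bm \delta_{kX}^0\|_2$ to be of the same order as $\Delta_k$; then $\|(\bm \Sigma_X^0)^{1/2}\bm E\bm \delta_{kX}^0\|_2 \le \|(\bm \Sigma_X^0)^{1/2}\|\,\|\bm E\|\,\|\bm \delta_{kX}^0\|_2 = O_p(d_n\Delta_k)$. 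The cross term satisfies $\|(\bm \Sigma_X^0)^{1/2}\bm E\bm f\|_2 = O_p(d_n b_k^{(n)})$, which is of smaller order than $O_p(b_k^{(n)})$ because $d_n \to 0$. Collecting these, $\|(\bm \Sigma_X^0)^{1/2}\bm a\|_2 = O_p(b_k^{(n)} + \Delta_k d_n)$, so squaring yields $\bm a'\bm \Sigma_X^0\bm a = O_p((b_k^{(n)})^2 + \Delta_k^2 d_n^2)$.

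Finally I would convert this symmetric squared rate into the asymmetric form of the lemma. Under the hypotheses inherited from Theorem \ref{theory:multiclass} (in particular $\xi_{n;k}\to 0$, which forces $b_k^{(n)}/\Delta_k \to 0$, and $d_n \to 0$), we eventually have $b_k^{(n)} \le \Delta_k$ and $d_n \le 1$, so $(b_k^{(n)})^2 \le \Delta_k b_k^{(n)}$ and $\Delta_k^2 d_n^2 \le \Delta_k^2 d_n$. Hence $\bm a'\bm \Sigma_X^0\bm a = O_p(\Delta_k b_k^{(n)} + \Delta_k^2 d_n) = \Delta_k^2\big[O_p(b_k^{(n)}/\Delta_k) + O_p(d_n)\big]$, which is the claim. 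The only delicate points are (i) establishing $\|\bm \delta_{kX}^0\|_2 \asymp \Delta_k$ so that the $\bm E\bm \delta_{kX}^0$ contribution carries the factor $\Delta_k$ rather than an uncontrolled norm, and (ii) checking that the passage from the sharper squared rate to the stated looser rate is legitimate, which is precisely where the smallness of $b_k^{(n)}/\Delta_k$ and of $d_n$ enters; the remaining steps are routine norm bookkeeping under (C1).
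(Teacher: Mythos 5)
Your proof is correct, but it follows a genuinely different route from the paper. The paper expands the quadratic form as in \eqref{eq:4} into $\hat{\bm \delta}'_{kX} \hat{\bm C}_X \bm \Sigma_X^0 \hat{\bm C}_X \hat{\bm \delta}_{kX} - 2 \hat{\bm \delta}'_{kX} \hat{\bm C}_X \bm \delta_{kX}^0 + \Delta_k^2$, shows the first piece equals $\Delta_k^2[1 + O_p(b_k^{(n)}/\Delta_k) + O_p(d_n)]$ and the cross term equals $\Delta_k^2\sqrt{1 + O_p(b_k^{(n)}/\Delta_k) + O_p(d_n)}$, and then relies on the Taylor expansion $\sqrt{1+x} = 1 + \tfrac{1}{2}x + o(x)$ to make the three leading $\Delta_k^2$ terms cancel; this keeps everything in terms of the quadratic forms \eqref{eq:5}--\eqref{eq:6} that are reused later in Lemma \ref{lemma4} and Theorem \ref{theory:consistent}. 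You instead write the error vector exactly as $\hat{\bm C}_X\hat{\bm \delta}_{kX} - \bm C_X^0\bm \delta_{kX}^0 = \bm C_X^0\bm f + \bm E\bm \delta_{kX}^0 + \bm E\bm f$ and bound $\|(\bm \Sigma_X^0)^{1/2}(\cdot)\|_2$ by the triangle inequality using \eqref{conrateC}, \eqref{conratedelta}, and (C1). This avoids any cancellation of large terms (and in particular sidesteps the paper's slightly delicate step of writing the Cauchy--Schwarz-type bound for the cross term as an equality), and it even yields the sharper intermediate rate $O_p((b_k^{(n)})^2 + \Delta_k^2 d_n^2)$ before relaxing to the stated form. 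Your two caveats are handled correctly: $\|\bm \delta_{kX}^0\|_2 \asymp \Delta_k$ follows from (C1) (eigenvalues of the principal submatrix $\bm \Sigma_X^0$, hence of $\bm C_X^0$, stay in $(\theta^{-1},\theta)$), a fact the paper itself invokes as $\|\bm \delta_{kX}^0\|_2^2 = O(\Delta_k^2)$; and the passage to the looser stated rate only needs $b_k^{(n)}/\Delta_k = O(1)$ and $d_n = O(1)$, which the paper's own Taylor-expansion step implicitly requires as well and which hold in the regime $\xi_{n;k}\rightarrow 0$ where the lemma is applied.
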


\begin{proof}{}{}
Decompose
\begin{equation}\label{eq:4}
(\hat{\bm C}_X \hat{\bm \delta}_{kX} - \bm C_X^0 \bm \delta_{kX}^0)'
\bm \Sigma_X^0 (\hat{\bm C}_X \hat{\bm \delta}_{kX} - \bm C_X^0 \bm \delta_{kX}^0) = \hat{\bm \delta}'_{kX} \hat{\bm C}_X \bm \Sigma_X^0 \hat{\bm C}_X \hat{\bm \delta}_{kX} - 2 \hat{\bm \delta}'_{kX} \hat{\bm C}_X \bm \delta_{kX}^0 + (\bm \delta_{kX}^0)' \bm C_X^0 \bm \delta_{kX}^0.
\end{equation}
On one hand, by the result \eqref{conrateC} we have
\begin{equation*}
\hat{\bm \delta}'_{kX} \hat{\bm C}_X \bm \Sigma_X^0 \hat{\bm C}_X \hat{\bm \delta}_{kX} = \hat{\bm \delta}'_{kX} \hat{\bm C}_X \hat{\bm \delta}_{kX} [1 + O_p(d_n)] = \hat{\bm \delta}'_{kX} \bm C_X^0 \hat{\bm \delta}_{kX} [1 + O_p(d_n)].
\end{equation*}
Since $E[ (\bm \delta^0_{kX})' \bm C_X^0 (\hat{\bm \delta}_{kX} - \bm \delta^0_{kX})]^2 \leq \Delta_k^2 E[(\hat{\bm \delta}_{kX} - \bm \delta^0_{kX})' \bm C_X^0 (\hat{\bm \delta}_{kX} - \bm \delta_{kX}^0)]$ and by Equation \eqref{conratedelta}, we obtain
\begin{align*}
\hat{\bm \delta}'_{kX} \bm C_X^0 \hat{\bm \delta}_{kX} &= (\bm \delta_{kX}^0)' \bm C_X^0 \bm \delta_{kX}^0 + 2 (\bm \delta_{kX}^0)' \bm C_X^0 (\hat{\bm \delta}_{kX} - \bm \delta_{kX}^0) + (\hat{\bm \delta}_{kX} - \bm \delta_{kX}^0)' \bm C_X^0 (\hat{\bm \delta}_{kX} - \bm \delta_{kX}^0) \\
&= \Delta_k^2 + O_p(b_k^{(n)} \Delta_k) + O_p((b_k^{(n)})^2) \\
&= \Delta_k^2 [1 + O_p(\frac{b_k^{(n)}}{\Delta_k}) + O_p(\frac{(b_k^{(n)})^2}{\Delta_k^2})] \\
&= \Delta_k^2 [1 + O_p(\frac{b_k^{(n)}}{\Delta_k})].
\end{align*}
As a result,
\begin{equation}\label{eq:5}
\hat{\bm \delta}'_{kX} \hat{\bm C}_X \hat{\bm \delta}_{kX}
= \hat{\bm \delta}'_{kX} \bm C_X^0 \hat{\bm \delta}_{kX} [1 + O_p(d_n)]
= \Delta_k^2 [1 + O_p(\frac{b_k^{(n)}}{\Delta_k}) + O_p(d_n)].
\end{equation}
On the other hand, since $\| \bm \delta_{kX}^0 \|_2^2 = O(\Delta_k^2)$, we have
\begin{align}\label{eq:insert1}
(\bm \delta_{kX}^0)' \hat{\bm C}_X \bm \delta_{kX}^0 = (\bm \delta_{kX}^0)' (\hat{\bm C}_X - \bm C_X^0) \bm \delta_{kX}^0 + (\bm \delta_{kX}^0)' \bm C_X^0 \bm \delta_{kX}^0
= O_p(\Delta_k^2 d_n) + \Delta_k^2
= \Delta_k^2 [1 + O_p(d_n)].
\end{align}
Consequently,
\begin{align}\label{eq:6}
\hat{\bm \delta}'_{kX} \hat{\bm C}_X \bm \delta_{kX}^0 &= \Delta_k \sqrt{1 + O_p(d_n)}~\Delta_k \sqrt{1 + O_p(\frac{b_k^{(n)}}{\Delta_k}) + O_p(d_n)}      \nonumber   \\
&= \Delta_k^2 \sqrt{1 + O_p(\frac{b_k^{(n)}}{\Delta_k}) + O_p(d_n)}.
\end{align}
Combing Equations \eqref{eq:4}, \eqref{eq:5} and \eqref{eq:6} yields
\begin{align*}
&(\hat{\bm \delta}'_{kX} \hat{\bm C}_X - (\bm \delta_{kX}^0)' \bm C_X^0) \bm \Sigma_X^0 (\hat{\bm C}_X \hat{\bm \delta}_{kX} - \bm C_X^0 \bm \delta_{kX}^0)   \\
=& \Delta_k^2 [1 + O_p(\frac{b_k^{(n)}}{\Delta_k}) + O_p(d_n)] - 2 \Delta_k^2 \sqrt{1 + O_p(\frac{b_k^{(n)}}{\Delta_k}) + O_p(d_n)} + \Delta_k^2 \\
=& \Delta_k^2 \left[ O_p(\frac{b_k^{(n)}}{\Delta_k}) + O_p(d_n) \right],
\end{align*}
where the last equality uses the Taylor expansion of $\sqrt{1 + x} = 1 + \frac{1}{2} x + o(x)$.
\end{proof}

Write $\bm \mu_k^0 = ((\bm \mu_{kX}^0)', \mu_{ky}^0)'$, where $\bm \mu_{kX}^0$ is the true mean value of variable $\bm X$ for class $G_k$.
Correspondingly, write $\hat{\bm \mu}_k = (\hat{\bm \mu}_{kX}', \hat{ \mu}_{ky})'$. Let $a_n \asymp b_n$ represent two sequences $a_n$ and $b_n$ to be the same order.
Now we state Lemma \ref{lemma5}.

\begin{lemma}{}{\label{lemma5}}
Let $q_k^{(n)}$ be the number of nonzero entries of estimate $\hat{\bm \delta}_{kX}$.
For $k = 2, 3, \ldots, K$, we have
\begin{align*}
\hat{\bm \delta}'_{kX} \hat{\bm C}_X (\hat{\bm \mu}_{1X} - \bm \mu_{1X}^0) & \asymp \hat{\bm \delta}'_{kX} \hat{\bm C}_X (\hat{\bm \mu}_{kX} - \bm \mu_{kX}^0) \\
&= O_p(\sqrt{\frac{q_k^{(n)}}{n}}) \sqrt{\hat{\bm \delta}'_{kX} \hat{\bm C}_X \hat{\bm \delta}_{kX}} - O_p(\sqrt{\frac{S_{h;p} q_k^{(n)}}{n}}) \sqrt{\hat{\bm \delta}'_{kX} \hat{\bm C}_X \hat{\bm \delta}_{kX}}.
\end{align*}
\end{lemma}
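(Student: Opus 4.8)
The plan is to strip $\hat{\bm \mu}_{kX} - \bm \mu_{kX}^0$ down to a ``noise'' piece plus estimation-error pieces by using the reparametrization of Section~\ref{subsec:derivation}, and then to bound each piece by a Cauchy--Schwarz step in the $\hat{\bm C}_X$-inner product together with the established rates \eqref{conrateC} and \eqref{conratedelta}. First, since the estimator obeys $\hat{\bm \mu}_k = \hat{\bm \mu}_1 + K \hat{\bm \delta}_k$ and the truth obeys $\bm \mu_k^0 = \bm \mu_1^0 + K \bm \delta_k^0$, subtracting and taking the $\bm X$-block gives $\hat{\bm \mu}_{kX} - \bm \mu_{kX}^0 = (\hat{\bm \mu}_{1X} - \bm \mu_{1X}^0) + K(\hat{\bm \delta}_{kX} - \bm \delta_{kX}^0)$, so that
\[
\hat{\bm \delta}'_{kX} \hat{\bm C}_X (\hat{\bm \mu}_{kX} - \bm \mu_{kX}^0) = \hat{\bm \delta}'_{kX} \hat{\bm C}_X (\hat{\bm \mu}_{1X} - \bm \mu_{1X}^0) + K \hat{\bm \delta}'_{kX} \hat{\bm C}_X (\hat{\bm \delta}_{kX} - \bm \delta_{kX}^0).
\]
The asserted equivalence $\hat{\bm \delta}'_{kX} \hat{\bm C}_X (\hat{\bm \mu}_{1X} - \bm \mu_{1X}^0) \asymp \hat{\bm \delta}'_{kX} \hat{\bm C}_X (\hat{\bm \mu}_{kX} - \bm \mu_{kX}^0)$ then follows once the second term on the right is shown to be of order no larger than the first. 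To make both terms concrete I would further expand $\hat{\bm \mu}_{1X} - \bm \mu_{1X}^0$: substituting the MLE $\hat{\bm \gamma}$ gives $\hat{\bm \mu}_1 = \bar{\bm w} - \frac{K}{n}\sum_{g=2}^K n_g \hat{\bm \delta}_g$, while $\bm \mu_1^0 = \E[\bar{\bm w}\mid \text{labels}] - \frac{K}{n}\sum_{g=2}^K n_g \bm \delta_g^0$, hence $\hat{\bm \mu}_{1X} - \bm \mu_{1X}^0 = (\bar{\bm x} - \E\bar{\bm x}) - \frac{K}{n}\sum_{g=2}^K n_g(\hat{\bm \delta}_{gX} - \bm \delta_{gX}^0)$.

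Consequently the whole quantity is a sum of one noise term $\hat{\bm \delta}'_{kX}\hat{\bm C}_X(\bar{\bm x} - \E\bar{\bm x})$ and finitely many terms of the type $\hat{\bm \delta}'_{kX}\hat{\bm C}_X(\hat{\bm \delta}_{gX} - \bm \delta_{gX}^0)$ with $O(1)$ coefficients (using $Kn_g/n = O(1)$ from (C7)). For the noise term, on the event --- of probability tending to one under (C4)--(C5) --- that the Lasso recovers $\mathbb{S}_{\delta_k}$, the vector $\hat{\bm \delta}_{kX}$ is supported on a fixed set of cardinality $q_k^{(n)}$; bounding $\hat{\bm \delta}'_{kX}\hat{\bm C}_X(\bar{\bm x} - \E\bar{\bm x})$ by Cauchy--Schwarz restricted to that coordinate subspace, using $\bar{\bm x} - \E\bar{\bm x} \sim N(\bm 0, \tfrac1n \bm \Sigma_X^0)$ together with $\hat{\bm C}_X \bm \Sigma_X^0 \hat{\bm C}_X = \hat{\bm C}_X[1 + O_p(d_n)]$ (the identity already used in Lemma~\ref{lemma3}), reduces it to a $\tfrac1n\chi^2_{q_k^{(n)}}$-type quantity, i.e. $O_p(\sqrt{q_k^{(n)}/n})\,\sqrt{\hat{\bm \delta}'_{kX}\hat{\bm C}_X\hat{\bm \delta}_{kX}}$. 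For each term $\hat{\bm \delta}'_{kX}\hat{\bm C}_X(\hat{\bm \delta}_{gX} - \bm \delta_{gX}^0)$, Cauchy--Schwarz in the $\hat{\bm C}_X$-metric gives the bound $\sqrt{\hat{\bm \delta}'_{kX}\hat{\bm C}_X\hat{\bm \delta}_{kX}}\cdot\sqrt{(\hat{\bm \delta}_{gX} - \bm \delta_{gX}^0)'\hat{\bm C}_X(\hat{\bm \delta}_{gX} - \bm \delta_{gX}^0)}$, and I would control the second factor with \eqref{conratedelta}, the sparsity of the increment $\hat{\bm \delta}_{gX} - \bm \delta_{gX}^0$, and \eqref{conrateC} (which brings in the $\bm \Sigma^0$-sparsity level $S_{h;p}$ through $d_n$), obtaining $O_p(\sqrt{S_{h;p}\,q_k^{(n)}/n})\,\sqrt{\hat{\bm \delta}'_{kX}\hat{\bm C}_X\hat{\bm \delta}_{kX}}$. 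Summing the pieces and absorbing $K$ and the finitely many indices $g$ reproduces exactly the stated right-hand side; and since the extra term $K\hat{\bm \delta}'_{kX}\hat{\bm C}_X(\hat{\bm \delta}_{kX} - \bm \delta_{kX}^0)$ obeys the same order bound as $\hat{\bm \delta}'_{kX}\hat{\bm C}_X(\hat{\bm \mu}_{1X} - \bm \mu_{1X}^0)$, the two quadratic forms are of the same order, which gives the ``$\asymp$''.

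The main obstacle is the noise term $\hat{\bm \delta}'_{kX}\hat{\bm C}_X(\bar{\bm x} - \E\bar{\bm x})$, because $\hat{\bm \delta}_{kX}$ and $\hat{\bm C}_X$ are themselves functions of the same data: the cleanest remedy is to work on the support-recovery event so that the support of $\hat{\bm \delta}_{kX}$ is deterministic, after which the $\chi^2_{q_k^{(n)}}$ bound is routine, while the residual effect of the random support lives on an event of vanishing probability. A secondary, less delicate bookkeeping point is carrying the $\bm \Sigma^0$-sparsity measure $S_{h;p}$ through \eqref{conrateC} when replacing $\hat{\bm C}_X$ by $\bm C_X^0$ inside the quadratic forms.
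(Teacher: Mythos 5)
Your algebraic decomposition is fine (indeed $\hat{\bm \mu}_{1X}-\bm \mu_{1X}^0=(\bar{\bm x}-\E\bar{\bm x})-\frac{K}{n}\sum_{g\ge 2}n_g(\hat{\bm \delta}_{gX}-\bm \delta_{gX}^0)$ under the estimator definitions), but the analysis of the pieces mislocates where $S_{h;p}$ comes from, and the key bound fails. The noise term $\hat{\bm \delta}'_{kX}\hat{\bm C}_X(\bar{\bm x}-\E\bar{\bm x})$ is \emph{not} a $\chi^2_{q_k^{(n)}}/n$-type quantity, even on the support-recovery event: writing the support block explicitly, it equals $\hat{\bm \delta}'_{k,1}\hat{\bm C}_{11}\bm \epsilon_1+\hat{\bm \delta}'_{k,1}\hat{\bm C}_{12}\bm \epsilon_2$, and the second piece pulls in all $p-q_k^{(n)}$ off-support noise coordinates through the cross block $\hat{\bm C}_{12}$. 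This is exactly where the paper's proof does its work: it partitions all matrices conformally with the support of $\hat{\bm \delta}_{kX}$, uses the block identity $\hat{\bm C}_{12}=-\hat{\bm \Sigma}_{11}^{-1}\hat{\bm \Sigma}_{12}\hat{\bm C}_{22}$, applies Cauchy--Schwarz, and then bounds $\E\bigl[\bm \eta_2'\bm C^0_{22}(\bm \Sigma^0_{12})'(\bm \Sigma^0_{11})^{-1}\bm \Sigma^0_{12}\bm C^0_{22}\bm \eta_2\bigr]=O(S_{h;p}q_k^{(n)}/n)$ via (C1) and the definition of $S_{h;p}$ (through $\tr[\bm \Sigma^0_{12}(\bm \Sigma^0_{12})']\lesssim S_{h;p}q_k^{(n)}$). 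That off-support contribution is precisely the source of the $O_p(\sqrt{S_{h;p}q_k^{(n)}/n})$ term in the lemma; your ``routine $\chi^2$'' bound covers only the on-support block and silently drops it.

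Conversely, the terms $\hat{\bm \delta}'_{kX}\hat{\bm C}_X(\hat{\bm \delta}_{gX}-\bm \delta_{gX}^0)$, which you designate as the source of the $S_{h;p}$ term, cannot deliver it: Cauchy--Schwarz in the $\hat{\bm C}_X$-metric together with \eqref{conratedelta} gives a factor $O_p(b_g^{(n)})=O_p(\sqrt{\tilde s_g\log p/(n\varphi_g^2)})$, which contains no $S_{h;p}$, and replacing $\hat{\bm C}_X$ by $\bm C_X^0$ via \eqref{conrateC} only contributes a multiplicative $1+O_p(d_n)$, not an additive $\sqrt{S_{h;p}q_k^{(n)}/n}$. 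So your bookkeeping would at best yield $O_p(\sqrt{q_k^{(n)}/n})\sqrt{\hat{\bm \delta}'_{kX}\hat{\bm C}_X\hat{\bm \delta}_{kX}}+O_p(b^{(n)})\sqrt{\hat{\bm \delta}'_{kX}\hat{\bm C}_X\hat{\bm \delta}_{kX}}$, which is not the statement of the lemma (the $\log p$-bearing $b^{(n)}$ term is not absorbable into either stated term), while the genuinely needed off-support covariance analysis is missing. The paper avoids your decomposition altogether: it treats $\hat{\bm \mu}_{1X}-\bm \mu_{1X}^0=(\bm \eta_1',\bm \eta_2')'$ directly, handles $\bm \eta_1$ by Cauchy--Schwarz on the $q_k^{(n)}$-dimensional block (giving $\sqrt{q_k^{(n)}/n}$), and handles $\bm \eta_2$ through the block-inverse identity and the $S_{h;p}$ trace bound (giving $\sqrt{S_{h;p}q_k^{(n)}/n}$), with the same argument repeated for $\hat{\bm \mu}_{kX}-\bm \mu_{kX}^0$ rather than relating it to class $1$ through the Lasso errors as you do.
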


\begin{proof}{}{}
Without loss of generality, we assume that $\hat{\bm \delta}_{kX} = (\hat{\bm \delta}'_{k,1}, \bm 0')'$, where $\hat{\bm \delta}'_{k,1}$ is a $q_k^{(n)}$-dimensional vector containing all the nonzero entries of $\hat{\bm \delta}_{kX}$.
Note that $\lim\limits_{n \rightarrow \infty} q_k^{(n)} = s_k$.
Conformally, we write
\begin{align*}
\bm \Sigma_X^0 = \left[
\begin{array}{cc}
\bm \Sigma^0_{11},      & \bm \Sigma^0_{12} \\
(\bm \Sigma^0_{12})',   & \bm \Sigma^0_{22}
\end{array}\right],~~~
\hat{\bm \Sigma}_X = \left[
\begin{array}{cc}
\hat{\bm \Sigma}_{11},      & \hat{\bm \Sigma}_{12} \\
(\hat{\bm \Sigma}_{12})',   & \hat{\bm \Sigma}_{22}
\end{array}\right], \\
\bm C_X^0 = \left[
\begin{array}{cc}
\bm C^0_{11},      & \bm C^0_{12} \\
(\bm C^0_{12})',   & \bm C^0_{22}
\end{array}\right],~~
\hat{\bm C}_X = \left[
\begin{array}{cc}
\hat{\bm C}_{11},      & \hat{\bm C}_{12} \\
(\hat{\bm C}_{12})',   & \hat{\bm C}_{22}
\end{array}\right],
\end{align*}
where $\bm \Sigma^0_{11}, \hat{\bm \Sigma}_{11}, \bm C^0_{11}$ and $\hat{\bm C}_{11}$ are $q_k^{(n)} \times q_k^{(n)}$ matrices.
Let $\hat{\bm \mu}_{1X} - \bm \mu_{1X}^0 = (\bm \eta'_1, \bm \eta'_2)'$ with $\bm \eta_1$ a $q_k^{(n)}$-dimensional vector.
Hence,
\begin{equation*}
\hat{\bm \delta}'_{kX} \hat{\bm C}_X (\hat{\bm \mu}_{1X} - \bm \mu_{1X}^0)
= \hat{\bm \delta}'_{k,1} \hat{\bm C}_{11} \bm \eta_1 + \hat{\bm \delta}'_{k,1} \hat{\bm C}_{12} \bm \eta_2
= \hat{\bm \delta}'_{k,1} \hat{\bm C}_{11} \bm \eta_1 - \hat{\bm \delta}'_{k,1} \hat{\bm \Sigma}_{11}^{-1} \hat{\bm \Sigma}_{12} \hat{\bm C}_{22} \bm \eta_2.
\end{equation*}
On one hand,
\begin{align*}
(\hat{\bm \delta}'_{k,1} \hat{\bm C}_{11} \bm \eta_1)^2 \leq (\hat{\bm \delta}'_{k,1} \hat{\bm C}_{11} \hat{\bm \delta}_{k,1})(\bm \eta'_1 \hat{\bm C}_{11} \bm \eta_1) &= (\hat{\bm \delta}'_{kX} \hat{\bm C}_X \hat{\bm \delta}_{kX})(\bm \eta'_1 \hat{\bm C}_{11} \bm \eta_1) \\
&= O_p(\frac{q_k^{(n)}}{n})(\hat{\bm \delta}'_{kX} \hat{\bm C}_X \hat{\bm \delta}_{kX}).
\end{align*}
On the other hand,
\begin{align*}
(\hat{\bm \delta}'_{k,1} \hat{\bm \Sigma}_{11}^{-1} \hat{\bm \Sigma}_{12} \hat{\bm C}_{22} \bm \eta_2)^2 &\leq (\hat{\bm \delta}'_{k,1} \hat{\bm \Sigma}_{11}^{-1} \hat{\bm \delta}_{k,1})(\bm \eta'_2 \hat{\bm C}_{22} \hat{\bm \Sigma}'_{12} \hat{\bm \Sigma}_{11}^{-1} \hat{\bm \Sigma}_{12} \hat{\bm C}_{22} \bm \eta_2) \\
&\leq (\hat{\bm \delta}'_{k,1} \hat{\bm C}_{11} \hat{\bm \delta}_{k,1}) (\bm \eta'_2 \hat{\bm C}_{22} \hat{\bm \Sigma}'_{12} \hat{\bm \Sigma}_{11}^{-1} \hat{\bm \Sigma}_{12} \hat{\bm C}_{22} \bm \eta_2) \\
&= (\hat{\bm \delta}'_{kX} \hat{\bm C}_X \hat{\bm \delta}_{kX})(\bm \eta'_2 \hat{\bm C}_{22} \hat{\bm \Sigma}'_{12} \hat{\bm \Sigma}_{11}^{-1} \hat{\bm \Sigma}_{12} \hat{\bm C}_{22} \bm \eta_2) \\
&= (\hat{\bm \delta}'_{kX} \hat{\bm C}_X \hat{\bm \delta}_{kX})(\bm \eta'_2 \bm C^0_{22} (\bm \Sigma^0_{12})^{'} (\bm \Sigma^0_{11})^{-1} \bm \Sigma^0_{12} \bm C^0_{22} \bm \eta_2 [1 + O_p(d_n)]) \\
& \triangleq (\hat{\bm \delta}'_{kX} \hat{\bm C}_X \hat{\bm \delta}_{kX})(\omega_n [1 + O_p(d_n)]),
\end{align*}
where the forth equation is obtained from \eqref{conrateC}, and $\omega_n = (\bm \eta'_2 \bm C^0_{22} (\bm \Sigma^0_{12})^{'} (\bm \Sigma^0_{11})^{-1} \bm \Sigma^0_{12} \bm C^0_{22} \bm \eta_2$.
Hence, we have
\begin{equation*}
\hat{\bm \delta}'_{kX} \hat{\bm C}_X (\hat{\bm \mu}_{1X} - \bm \mu_{1X}^0)
= O_p(\sqrt{\frac{q_k^{(n)}}{n}}) \sqrt{\hat{\bm \delta}'_{kX} \hat{\bm C}_X \hat{\bm \delta}_{kX}} - \sqrt{\hat{\bm \delta}'_{kX} \hat{\bm C}_X \hat{\bm \delta}_{kX}} \sqrt{\omega_n [1 + O_p(d_n)]}.
\end{equation*}
Under condition (C1),
\begin{align*}
E(\omega_n) \leq \theta E(\bm \eta'_2 \bm C^0_{22} (\bm \Sigma^0_{12})^{'} \bm \Sigma^0_{12} \bm C^0_{22} \bm \eta_2)
&= \frac{\theta}{n} \mbox{tr} [\bm \Sigma^0_{12} \bm C^0_{22} \bm \Sigma^0_{22} \bm C^0_{22} (\bm \Sigma^0_{12})^{'}]  \\
&\leq \frac{\theta^4}{n} \mbox{tr} [\bm \Sigma^0_{12} (\bm \Sigma^0_{12})^{'}].
\end{align*}
Recall that $\bm \Sigma^0 = (\sigma_{ij}^0)_{1 \leq i, j \leq p}$, then
\begin{align*}
E(\omega_n) \leq \frac{\theta^4}{n} \sum_{i=1}^{q_k^{(n)}} \sum_{j=q_k^{(n)} + 1}^{p} (\sigma_{ij}^0)^2
&\leq \frac{\theta^4}{n} q_k^{(n)} \max_{i} \sum_{j=q_k^{(n)} + 1}^{p} (\sigma_{ij}^0)^2 \\
&\leq \frac{\theta^{6-h}}{n} q_k^{(n)} \max_{j \leq p} \sum_{i=1}^{p} |\sigma_{ij}^0|^h \\
&= O(\frac{S_{h;p}q_k^{(n)}}{n}).
\end{align*}
Consequently,
\begin{equation*}
\hat{\bm \delta}'_{kX} \hat{\bm C}_X (\hat{\bm \mu}_{1X} - \bm \mu_{1X}^0)
= O_p(\sqrt{\frac{q_k^{(n)}}{n}}) \sqrt{\hat{\bm \delta}'_{kX} \hat{\bm C}_X \hat{\bm \delta}_{kX}} - O_p(\sqrt{\frac{S_{h;p} q_k^{(n)}}{n}}) \sqrt{\hat{\bm \delta}'_{kX} \hat{\bm C}_X \hat{\bm \delta}_{kX}}.
\end{equation*}
Similarly, we have
\begin{equation*}
\hat{\bm \delta}'_{kX} \hat{\bm C}_X (\hat{\bm \mu}_{kX} - \bm \mu_{kX}^0)
= O_p(\sqrt{\frac{q_k^{(n)}}{n}}) \sqrt{\hat{\bm \delta}'_{kX} \hat{\bm C}_X \hat{\bm \delta}_{kX}} - O_p(\sqrt{\frac{S_{h;p} q_k^{(n)}}{n}}) \sqrt{\hat{\bm \delta}'_{kX} \hat{\bm C}_X \hat{\bm \delta}_{kX}}.
\end{equation*}
\end{proof}

\begin{lemma}{}{\label{lemma4}}
For $t = 1, 2, \ldots, K$ and $k = 2, 3, \ldots, K$, we have
\begin{align*}
&(\bm \mu_{tX}^0)' (\hat{\bm C}_X \hat{\bm \delta}_{kX} - \bm C_X^0 \bm \delta_{kX}^0) - (\frac{\hat{\bm \mu}_{1X} + \hat{\bm \mu}_{kX}}{2})' \hat{\bm C}_X \hat{\bm \delta}_{kX} + (\frac{\bm \mu_{1X}^0 + \bm \mu_{kX}^0}{2})' \bm C_X^0 \bm \delta_{kX}^0 \\
=& \Delta^2 \left[ O_p(\frac{b_k^{(n)}}{\Delta_k}) + O_p(d_n) + O_p(\frac{\sqrt{S_{h;p} q_k^{(n)}}}{\sqrt{n} \Delta_k}) \right]  + \frac{1}{2} (\bm \delta_{kX}^0 - \bm \delta_{tX}^0)' \bm C_X^0 (\bm \delta_{kX}^0 - \bm \delta_{tX}^0)
\end{align*}
for the multi-class problem.
\end{lemma}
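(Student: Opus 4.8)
The plan is to view the left-hand side as the gap between the plug-in and population versions of the (unnormalized, predictor-only) LDA discriminant for class $k$ evaluated at the population mean of class $t$; that is, with $\hat h_k(\bm u)=(\bm u-\frac{\hat{\bm\mu}_{1X}+\hat{\bm\mu}_{kX}}{2})'\hat{\bm C}_X\hat{\bm\delta}_{kX}$ and $h_k^0(\bm u)=(\bm u-\frac{\bm\mu_{1X}^0+\bm\mu_{kX}^0}{2})'\bm C_X^0\bm\delta_{kX}^0$, the quantity in the lemma is exactly $\hat h_k(\bm\mu_{tX}^0)-h_k^0(\bm\mu_{tX}^0)$ (the reduction to $\bm X$ being legitimate by Proposition \ref{proposition2}). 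The first step is to substitute the structural identities $\hat{\bm\mu}_{kX}-\hat{\bm\mu}_{1X}=K\hat{\bm\delta}_{kX}$, $\bm\mu_{kX}^0-\bm\mu_{1X}^0=K\bm\delta_{kX}^0$ and $\bm\mu_{tX}^0-\bm\mu_{1X}^0=K\bm\delta_{tX}^0$, and expand; this turns the gap into the purely deterministic part built from $\bm\delta_{tX}^0$, $\bm\delta_{kX}^0$ and $\bm C_X^0$, which after cancellation is the population separation $\frac{1}{2}(\bm\delta_{kX}^0-\bm\delta_{tX}^0)'\bm C_X^0(\bm\delta_{kX}^0-\bm\delta_{tX}^0)$ recorded on the right, plus three stochastic residuals: a term proportional to $(\bm\delta_{tX}^0)'(\hat{\bm C}_X\hat{\bm\delta}_{kX}-\bm C_X^0\bm\delta_{kX}^0)$, a term proportional to $\hat{\bm\delta}'_{kX}\hat{\bm C}_X\hat{\bm\delta}_{kX}-(\bm\delta_{kX}^0)'\bm C_X^0\bm\delta_{kX}^0$, and the mean-estimation term $[(\hat{\bm\mu}_{1X}-\bm\mu_{1X}^0)+(\hat{\bm\mu}_{kX}-\bm\mu_{kX}^0)]'\hat{\bm C}_X\hat{\bm\delta}_{kX}$.

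Each residual is dispatched by a result already in hand. The mean-estimation term is precisely the object controlled in Lemma \ref{lemma5}, so it is of order $O_p(\sqrt{q_k^{(n)}/n}+\sqrt{S_{h;p}q_k^{(n)}/n})\,\sqrt{\hat{\bm\delta}'_{kX}\hat{\bm C}_X\hat{\bm\delta}_{kX}}$; combined with $\sqrt{\hat{\bm\delta}'_{kX}\hat{\bm C}_X\hat{\bm\delta}_{kX}}=\Delta_k[1+o_p(1)]$ from \eqref{eq:5} this is $\Delta_k\,O_p(\sqrt{S_{h;p}q_k^{(n)}/n})$. The quadratic term $\hat{\bm\delta}'_{kX}\hat{\bm C}_X\hat{\bm\delta}_{kX}-(\bm\delta_{kX}^0)'\bm C_X^0\bm\delta_{kX}^0$ equals $\Delta_k^2[O_p(b_k^{(n)}/\Delta_k)+O_p(d_n)]$, again by \eqref{eq:5}. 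For $(\bm\delta_{tX}^0)'(\hat{\bm C}_X\hat{\bm\delta}_{kX}-\bm C_X^0\bm\delta_{kX}^0)$ I would first split the increment as $\hat{\bm C}_X(\hat{\bm\delta}_{kX}-\bm\delta_{kX}^0)+(\hat{\bm C}_X-\bm C_X^0)\bm\delta_{kX}^0$ \emph{before} invoking Cauchy--Schwarz, then bound $(\bm\delta_{tX}^0)'\hat{\bm C}_X(\hat{\bm\delta}_{kX}-\bm\delta_{kX}^0)$ by Cauchy--Schwarz in the $\hat{\bm C}_X$-inner product using $(\bm\delta_{tX}^0)'\hat{\bm C}_X\bm\delta_{tX}^0=\Delta_t^2[1+O_p(d_n)]$ (as in \eqref{eq:insert1}), $\lambda_{\max}(\hat{\bm C}_X)=O_p(1)$ from condition (C1) together with \eqref{conrateC}, and $\|\hat{\bm\delta}_{kX}-\bm\delta_{kX}^0\|_2=O_p(b_k^{(n)})$ from \eqref{conratedelta}; and bound $(\bm\delta_{tX}^0)'(\hat{\bm C}_X-\bm C_X^0)\bm\delta_{kX}^0$ by $\|\bm\delta_{tX}^0\|_2\,\|\hat{\bm C}_X-\bm C_X^0\|\,\|\bm\delta_{kX}^0\|_2$ with $\|\bm\delta_{tX}^0\|_2=O(\Delta_t)$ and $\|\bm\delta_{kX}^0\|_2=O(\Delta_k)$ from (C1). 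This residual is then $\Delta_t\Delta_k[O_p(b_k^{(n)}/\Delta_k)+O_p(d_n)]$.

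To finish, use $\Delta_t\le\Delta$, $\Delta_k\le\Delta$, and $\Delta_k$ bounded away from $0$ by condition (C6) (so that $\Delta_k\le\Delta^2/\Delta_k$), together with $S_{h;p}$ bounded below under (C1) (so the $\sqrt{q_k^{(n)}/n}$ contribution is absorbed into the $\sqrt{S_{h;p}q_k^{(n)}/n}$ one): each residual is then at most $\Delta^2$ times one of $b_k^{(n)}/\Delta_k$, $d_n$, or $\sqrt{S_{h;p}q_k^{(n)}}/(\sqrt{n}\,\Delta_k)$, and summing them gives the stated $\Delta^2[O_p(b_k^{(n)}/\Delta_k)+O_p(d_n)+O_p(\sqrt{S_{h;p}q_k^{(n)}}/(\sqrt{n}\,\Delta_k))]$ with the population separation term left over. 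The main obstacle is the first step: carrying out the expansion carefully enough to identify exactly which linear-in-mean and quadratic-in-$\bm\delta$ contributions recombine into the clean separation $\frac{1}{2}(\bm\delta_{kX}^0-\bm\delta_{tX}^0)'\bm C_X^0(\bm\delta_{kX}^0-\bm\delta_{tX}^0)$ and which are genuine residuals; a secondary pitfall is that Cauchy--Schwarz must be applied to the two split increments $\hat{\bm C}_X(\hat{\bm\delta}_{kX}-\bm\delta_{kX}^0)$ and $(\hat{\bm C}_X-\bm C_X^0)\bm\delta_{kX}^0$ separately, since applying it to the whole difference $\hat{\bm C}_X\hat{\bm\delta}_{kX}-\bm C_X^0\bm\delta_{kX}^0$ would only deliver an $O_p(\sqrt{\xi_{n;k}})$ rate rather than the required $O_p(\xi_{n;k})$.
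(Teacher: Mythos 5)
Your overall architecture is the paper's: expand via the structural identities $\hat{\bm\mu}_{kX}-\hat{\bm\mu}_{1X}=K\hat{\bm\delta}_{kX}$, $\bm\mu^0_{tX}-\bm\mu^0_{1X}=K\bm\delta^0_{tX}$, control the mean-estimation residual with Lemma \ref{lemma5} together with $\sqrt{\hat{\bm\delta}'_{kX}\hat{\bm C}_X\hat{\bm\delta}_{kX}}=\Delta_k[1+o_p(1)]$ from \eqref{eq:5}, and control the quadratic residual at rate $\Delta_k^2[O_p(b_k^{(n)}/\Delta_k)+O_p(d_n)]$. Where you genuinely differ is the cross term $(\bm\delta^0_{tX})'(\hat{\bm C}_X\hat{\bm\delta}_{kX}-\bm C^0_X\bm\delta^0_{kX})$: the paper does not split it, but applies Cauchy--Schwarz to the whole product, $(\bm\delta^0_{tX})'\hat{\bm C}_X\hat{\bm\delta}_{kX}\le\Delta_t\Delta_k[1+O_p(b_k^{(n)}/\Delta_k)+O_p(d_n)]$ via \eqref{eq:insert1} and \eqref{eq:5}, and then handles the subtracted population cross term through $(\bm\delta^0_{kX}-\bm\delta^0_{tX})'\bm C^0_X(\bm\delta^0_{kX}-\bm\delta^0_{tX})=\Delta_k^2+\Delta_t^2-2(\bm\delta^0_{tX})'\bm C^0_X\bm\delta^0_{kX}$ and $\frac{1}{2}(\Delta_k^2+\Delta_t^2)\ge\Delta_t\Delta_k$; the cancellation of $\Delta_t\Delta_k$ is exactly what leaves the separation term on the right-hand side. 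Your split-before-Cauchy--Schwarz treatment, bounding $(\bm\delta^0_{tX})'\hat{\bm C}_X(\hat{\bm\delta}_{kX}-\bm\delta^0_{kX})$ and $(\bm\delta^0_{tX})'(\hat{\bm C}_X-\bm C^0_X)\bm\delta^0_{kX}$ separately using \eqref{conrateC}, \eqref{conratedelta} and (C1), is valid and in fact gives the tighter rate $\Delta_t\Delta_k[O_p(b_k^{(n)}/\Delta_k)+O_p(d_n)]$ for this term, with no separation slack at all.

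One structural claim in your plan is wrong, though harmlessly so: the separation term is \emph{not} a ``purely deterministic part left after cancellation.'' Carrying out the expansion gives exactly $K(\bm\delta^0_{tX})'(\hat{\bm C}_X\hat{\bm\delta}_{kX}-\bm C^0_X\bm\delta^0_{kX})-\frac{K}{2}(\bm\delta^0_{kX})'(\hat{\bm C}_X\hat{\bm\delta}_{kX}-\bm C^0_X\bm\delta^0_{kX})-\frac{1}{2}[(\hat{\bm\mu}_{1X}-\bm\mu^0_{1X})+(\hat{\bm\mu}_{kX}-\bm\mu^0_{kX})]'\hat{\bm C}_X\hat{\bm\delta}_{kX}$, i.e.\ three stochastic residuals and no deterministic remainder; in the paper the separation term is slack introduced by its cruder handling of the first residual, as described above. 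With your tighter bounds that slack never arises, so your argument actually proves the stronger bound without the separation term; since that term is nonnegative, this still yields the inequality form of the lemma, which is all the paper's own proof establishes (its displayed ``$=$'' is really ``$\le$'') and all that is used in the proof of Theorem \ref{theory:multiclass}. So the proposal is sound, but expect the expansion step not to produce the separation term you anticipate, and state the conclusion as an upper bound rather than an equality.
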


\begin{proof}{}{}
It is not difficult to derive
\begin{align*}
& (\bm \mu_{tX}^0)' (\hat{\bm C}_X \hat{\bm \delta}_{kX} - \bm C_X^0 \bm \delta_{kX}^0) - (\frac{\hat{\bm \mu}_{1X} + \hat{\bm \mu}_{kX}}{2})' \hat{\bm C}_X \hat{\bm \delta}_{kX} + (\frac{\bm \mu_{1X}^0 + \bm \mu_{kX}^0}{2})' \bm C_X^0 \bm \delta_{kX}^0  \\
=& (\bm \mu_{tX}^0)' (\hat{\bm C}_X \hat{\bm \delta}_{kX} - \bm C_X^0 \bm \delta_{kX}^0) - \frac{1}{2} [(\hat{\bm \mu}_{1X} + \hat{\bm \mu}_{kX}) - (\bm \mu_{1X}^0 + \bm \mu_{kX}^0)]' \hat{\bm C}_X \hat{\bm \delta}_{kX} \\
&~~~~~~~~~~~~~~~~~~~~~~~~~~~~~~~~~~~ - \frac{1}{2} (\bm \mu_{1X}^0 + \bm \mu_{kX}^0)' (\hat{\bm C}_X \hat{\bm \delta}_{kX} - \bm C_X^0 \bm \delta_{kX}^0)    \\
=& [\frac{(\bm \mu_{tX}^0 - \bm \mu_{1X}^0) + (\bm \mu_{tX}^0 - \bm \mu_{kX}^0)}{2}]' (\hat{\bm C}_X \hat{\bm \delta}_{kX} - \bm C_X^0 \bm \delta_{kX}^0) - \frac{1}{2} [(\hat{\bm \mu}_{1X} - \bm \mu_{1X}^0) + (\hat{\bm \mu}_{kX} - \bm \mu_{kX}^0)]' \hat{\bm C}_X \hat{\bm \delta}_{kX}   \\
=& K (\bm \delta_{tX}^0)' (\hat{\bm C}_X \hat{\bm \delta}_{kX} - \bm C_X^0 \bm \delta_{kX}^0) - \frac{K}{2} (\bm \delta_{kX}^0)' (\hat{\bm C}_X \hat{\bm \delta}_{kX} - \bm C_X^0 \bm \delta_{kX}^0) \\
&~~~~~~~~~~~~~~~~~~~~~~~~~~~~~~~~~~~ - \frac{1}{2} [(\hat{\bm \mu}_{1X} - \bm \mu_{1X}^0) + (\hat{\bm \mu}_{kX} - \bm \mu_{kX}^0)]' \hat{\bm C}_X \hat{\bm \delta}_{kX}.
\end{align*}
Because
\begin{align*}
(\bm \delta_{kX}^0 - \bm \delta_{tX}^0)' \bm C_X^0 (\bm \delta_{kX}^0 - \bm \delta_{tX}^0) = \Delta_k^2 + \Delta_t^2 - 2 (\bm \delta_{tX}^0)' \bm C_X^0 \bm \delta_{kX}^0,
\end{align*}
we hence have
\begin{align*}
- (\bm \delta_{tX}^0)' \bm C_X^0 \bm \delta_{kX}^0 &= \frac{1}{2} (\bm \delta_{kX}^0 - \bm \delta_{tX}^0)' \bm C_X^0 (\bm \delta_{kX}^0 - \bm \delta_{tX}^0) - \frac{1}{2} (\Delta_k^2 + \Delta_t^2)  \\
& \leq \frac{1}{2} (\bm \delta_{kX}^0 - \bm \delta_{tX}^0)' \bm C_X^0 (\bm \delta_{kX}^0 - \bm \delta_{tX}^0) - \Delta_k \Delta_t.
\end{align*}
Consequently, applying the Cauchy-Schwarz inequality together with Equations \eqref{eq:5} and \eqref{eq:insert1}, we can obtain
\begin{align*}
& (\bm \delta_{tX}^0)' (\hat{\bm C}_X \hat{\bm \delta}_{kX} - \bm C_X^0 \bm \delta_{kX}^0) \\
\leq &  \Delta_t \sqrt{1 + O_p(d_n)}~\Delta_k \sqrt{1 + O_p(\frac{b_k^{(n)}}{\Delta_k}) + O_p(d_n)} + \frac{1}{2} (\bm \delta_{kX}^0 - \bm \delta_{tX}^0)' \bm C_X^0 (\bm \delta_{kX}^0 - \bm \delta_{tX}^0) - \Delta_t \Delta_k  \\
\leq & \Delta_t \Delta_k (1 + O_p(\frac{b_k^{(n)}}{\Delta_k}) + O_p(d_n)) - \Delta_t \Delta_k + \frac{1}{2} (\bm \delta_{kX}^0 - \bm \delta_{tX}^0)' \bm C_X^0 (\bm \delta_{kX}^0 - \bm \delta_{tX}^0)  \\
= & \Delta_t \Delta_k (O_p(\frac{b_k^{(n)}}{\Delta_k}) + O_p(d_n)) + \frac{1}{2} (\bm \delta_{kX}^0 - \bm \delta_{tX}^0)' \bm C_X^0 (\bm \delta_{kX}^0 - \bm \delta_{tX}^0)   \\
\leq & \Delta^2 (O_p(\frac{b_k^{(n)}}{\Delta_k}) + O_p(d_n)) + \frac{1}{2} (\bm \delta_{kX}^0 - \bm \delta_{tX}^0)' \bm C_X^0 (\bm \delta_{kX}^0 - \bm \delta_{tX}^0).
\end{align*}
Similarly,
\begin{align*}
(\bm \delta_{kX}^0)' (\hat{\bm C}_X \hat{\bm \delta}_{kX} - \bm C_X^0 \bm \delta_{kX}^0)
= \Delta_k^2 (O_p(\frac{b_k^{(n)}}{\Delta_k}) + O_p(d_n))
\leq \Delta^2 (O_p(\frac{b_k^{(n)}}{\Delta_k}) + O_p(d_n)).
\end{align*}
As a result, according to Lemma \ref{lemma5}, we have
\begin{align*}
& (\bm \mu_{tX}^0)' (\hat{\bm C}_X \hat{\bm \delta}_{kX} - \bm C_X^0 \bm \delta_{kX}^0) - (\frac{\hat{\bm \mu}_{1X} + \hat{\bm \mu}_{kX}}{2})' \hat{\bm C}_X \hat{\bm \delta}_{kX} + (\frac{\bm \mu_{1X}^0 + \bm \mu_{kX}^0}{2})' \bm C_X^0 \bm \delta_{kX}^0   \\
\leq & \Delta^2 (O_p(\frac{b_k^{(n)}}{\Delta_k}) + O_p(d_n)) + \frac{1}{2} (\bm \delta_{kX}^0 - \bm \delta_{tX}^0)' \bm C_X^0 (\bm \delta_{kX}^0 - \bm \delta_{tX}^0) \\
&~~~~~~~~~~~~~~~~~~~~~~~~~~~~~~ + \left( O_p(\sqrt{\frac{S_{h;p} q_k^{(n)}}{n}}) -  O_p(\sqrt{\frac{q_k^{(n)}}{n}}) \right) \sqrt{\hat{\bm \delta}'_{kX} \hat{\bm C}_X \hat{\bm \delta}_{kX}}  \\
\leq & \Delta^2 (O_p(\frac{b_k^{(n)}}{\Delta_k}) + O_p(d_n)) + \Delta_k O_p(\sqrt{\frac{S_{h;p} q_k^{(n)}}{n}}) \sqrt{1 + O_p(\frac{b_k^{(n)}}{\Delta_k}) + O_p(d_n)} \\
&~~~~~~~~~~~~~~~~~~~~~~~~~~~~~~ + \frac{1}{2} (\bm \delta_{kX}^0 - \bm \delta_{tX}^0)' \bm C_X^0 (\bm \delta_{kX}^0 - \bm \delta_{tX}^0)  \\
= & \Delta^2 (O_p(\frac{b_k^{(n)}}{\Delta_k}) + O_p(d_n)) + \Delta_k^2 ( O_p(\frac{b_k^{(n)}}{\Delta_k^2}) + O_p(\frac{d_n}{\Delta_k}) + O_p(\frac{\sqrt{S_{h;p} q_k^{(n)}}}{\sqrt{n} \Delta_k}) ) \\
&~~~~~~~~~~~~~~~~~~~~~~~~~~~~~~ + \frac{1}{2} (\bm \delta_{kX}^0 - \bm \delta_{tX}^0)' \bm C_X^0 (\bm \delta_{kX}^0 - \bm \delta_{tX}^0)   \\
\leq & \Delta^2 \left( O_p(\frac{b_k^{(n)}}{\Delta_k}) + O_p(d_n) + O_p(\frac{\sqrt{S_{h;p} q_k^{(n)}}}{\sqrt{n} \Delta_k}) \right) + \frac{1}{2} (\bm \delta_{kX}^0 - \bm \delta_{tX}^0)' \bm C_X^0 (\bm \delta_{kX}^0 - \bm \delta_{tX}^0).
\end{align*}
\end{proof}
From Lemma \ref{lemma4}, note that when $t = k$, we have $(\bm \mu_{kX}^0)' (\hat{\bm C}_X \hat{\bm \delta}_{kX} - \bm C_X^0 \bm \delta_{kX}^0) - (\frac{\hat{\bm \mu}_{1X} + \hat{\bm \mu}_{kX}}{2})' \hat{\bm C}_X \hat{\bm \delta}_{kX}\\ + (\frac{\bm \mu_{1X}^0 + \bm \mu_{kX}^0}{2})' \bm C_X^0 \bm \delta_{kX}^0 = \Delta^2 \left[ O_p(\frac{b_k^{(n)}}{\Delta_k}) + O_p(d_n) + O_p(\frac{\sqrt{S_{h;p} q_k^{(n)}}}{\sqrt{n} \Delta_k}) \right]$.
With Lemmas \ref{lemma3} - \ref{lemma4}, we are ready to complete the proof of Theorem \ref{theory:multiclass}.

\begin{proof}{\textbf{Proof of Theorem \ref{theory:multiclass}}.}{}

Let $\hat{Z}_{PROP}$ and $\hat{Z}_{Bayes}$ denote the predicted class labels obtained by the proposed model and the Bayes rule, respectively.
For simplicity, we assume $\pi_1 = \pi_2 = \ldots = \pi_K$ instead of condition (C7) in the proofs of Theorems \ref{theory:multiclass} - \ref{theory:supp} with no influence on the theoretical results, since condition (C7) is only used to bound the term $\log \frac{\pi_k}{\pi_1}$ of the LDA rule.
Define $\vartheta_k = (\bm x - \frac{\bm \mu_{1X}^0 + \bm \mu_{kX}^0}{2})' \bm C_X^0 \bm \delta_{kX}^0$ and $\hat{\vartheta}_k = (\bm x - \frac{\hat{\bm \mu}_{1X} + \hat{\bm \mu}_{kX}}{2})' \hat{\bm C}_X \hat{\bm \delta}_{kX}$ for a new sample $\bm x$.
Then for any $\epsilon > 0$,
\begin{align*}
R_{PROP}(\mathcal{T}) - R_{Bayes} &\leq \Pr(\hat{Z}_{PROP} \neq \hat{Z}_{Bayes})   \\
&\leq 1 - \Pr(|\hat{\vartheta}_k - \vartheta_k| < \frac{\epsilon}{2}, |\vartheta_k - \vartheta_l| > \epsilon~ \mbox{for any } k, l)        \\
&\leq \Pr(|\hat{\vartheta}_k - \vartheta_k| \geq \frac{\epsilon}{2}~ \mbox{for some } k) + \Pr(|\vartheta_k - \vartheta_l| \leq \epsilon~ \mbox{for some } k, l).
\end{align*}
Firstly, we bound the probability $\Pr(|\vartheta_k - \vartheta_l| \leq \epsilon~ \mbox{for some } k, l)$.
Since $\vartheta_k - \vartheta_l = \bm x' \bm C_X^0 (\bm \delta_{kX}^0 - \bm \delta_{lX}^0) - (\frac{\bm \mu_{1X}^0 + \bm \mu_{kX}^0}{2})' \bm C_X^0 \bm \delta_{kX}^0 + (\frac{\bm \mu_{1X}^0 + \bm \mu_l^0}{2})' \bm C_X^0 \bm \delta_{lX}^0$, the variance of $\vartheta_k - \vartheta_l$ is $(\bm \delta_{kX}^0 - \bm \delta_{lX}^0)' \bm C_X^0 (\bm \delta_{kX}^0 - \bm \delta_{lX}^0)$.
Hence,
\begin{align*}
\Pr(|\vartheta_k - \vartheta_l| \leq \epsilon~ \mbox{for some } k, l) &= \sum_{t=1}^K \Pr(|\vartheta_k - \vartheta_l| \leq \epsilon | Z = t) \pi_t  \\
&\leq \sum_{k,l,t} \pi_t \frac{C \epsilon}{\sqrt{(\bm \delta_{kX}^0 - \bm \delta_{lX}^0)' \bm C_X^0 (\bm \delta_{kX}^0 - \bm \delta_{lX}^0)}}  \\
&\leq C K^2 \epsilon,
\end{align*}
where the last inequality is obtained by condition (C6).
Secondly, we bound the term $\Pr(|\hat{\vartheta}_k - \vartheta_k| \geq \frac{\epsilon}{2}~ \mbox{for some } k)$.
As $(\hat{\vartheta}_k - \vartheta_k | Z = t) = \bm x' (\hat{\bm C}_X \hat{\bm \delta}_{kX} - \bm C_X^0 \bm \delta_{kX}^0) - (\frac{\hat{\bm \mu}_{1X} + \hat{\bm \mu}_{kX}}{2})' \hat{\bm C}_X \hat{\bm \delta}_{kX} + (\frac{\bm \mu_{1X}^0 + \bm \mu_{kX}^0}{2})' \bm C_X^0 \bm \delta_{kX}^0$, the conditional difference term
$(\hat{\vartheta}_k - \vartheta_k | Z = t)$ is from normal distribution $N( \mu_{\vartheta}, \sigma^2_{\vartheta})$ with
\begin{align*}
\mu_{\vartheta}^{(t)} = (\bm \mu_{tX}^{0})' (\hat{\bm C}_X \hat{\bm \delta}_{kX} - \bm C_X^0 \bm \delta_{kX}^0) - (\frac{\hat{\bm \mu}_{1X} + \hat{\bm \mu}_{kX}}{2})' \hat{\bm C}_X \hat{\bm \delta}_{kX} + (\frac{\bm \mu_{1X}^0 + \bm \mu_{kX}^0}{2})' \bm C_X^0 \bm \delta_{kX}^0
\end{align*}
and
\begin{align*}
\sigma^2_{\vartheta} = (\hat{\bm \delta}'_{kX} \hat{\bm C}_X - (\bm \delta_{kX}^0)' \bm C_X^0) \bm \Sigma_X^0 (\hat{\bm C}_X \hat{\bm \delta}_{kX} - \bm C_X^0 \bm \delta_{kX}^0).
\end{align*}
By Markov's inequality, together with Lemmas \ref{lemma3} and \ref{lemma4}, we have
\begin{align}\label{eq:7}
& \Pr(|\hat{\vartheta}_k - \vartheta_k| \geq \frac{\epsilon}{2}~ \mbox{for some } k) \nonumber \\
= & \sum_{t \neq k}^K \pi_t \Pr(|\hat{\vartheta}_k - \vartheta_k| \geq \frac{\epsilon}{2} | Z = t) + \pi_k \Pr(|\hat{\vartheta}_k - \vartheta_k| \geq \frac{\epsilon}{2} | Z = k) \nonumber \\
\leq & \frac{C \max\limits_k (\hat{\bm \delta}'_{kX} \hat{\bm C}_X - (\bm \delta_{kX}^0)' \bm C_X^0) \bm \Sigma_X^0 (\hat{\bm C}_X \hat{\bm \delta}_{kX} - \bm C_X^0 \bm \delta_{kX}^0)}{(\epsilon - \mu_{\vartheta}^{(t \neq k)})^2} \nonumber \\
& ~~~~~~~~~~~~~~~~~~~~~~~~~~~~~~~~ + \frac{(\hat{\bm \delta}'_{kX} \hat{\bm C}_X - (\bm \delta_{kX}^0)' \bm C_X^0) \bm \Sigma_X^0 (\hat{\bm C}_X \hat{\bm \delta}_{kX} - \bm C_X^0 \bm \delta_{kX}^0)}{(\epsilon - \mu_{\vartheta}^{(k)})^2}    \nonumber    \\
\leq & \frac{C \max\limits_k \Delta_k^2 [ O_p(\frac{b_k^{(n)}}{\Delta_k}) + O_p(d_n) ]}{\left[\epsilon - \Delta^2 ( O_p(\frac{b_k^{(n)}}{\Delta_k}) + O_p(d_n) + O_p(\frac{\sqrt{S_{h;p} q_k^{(n)}}}{\sqrt{n} \Delta_k}) ) - \frac{1}{2} (\bm \delta_{kX}^0 - \bm \delta_{tX}^0)' \bm C_X^0 (\bm \delta_{kX}^0 - \bm \delta_{tX}^0) \right]^2} \nonumber   \\
& ~~~~~~~~~~~~~~~~~~~~~~~~~~~~~~~~ + \frac{\Delta_k^2 [ O_p(\frac{b_k^{(n)}}{\Delta_k}) + O_p(d_n) ]}{\left[\epsilon - \Delta^2 ( O_p(\frac{b_k^{(n)}}{\Delta_k}) + O_p(d_n) + O_p(\frac{\sqrt{S_{h;p} q_k^{(n)}}}{\sqrt{n} \Delta_k}) ) \right]^2} \nonumber \\
\leq & \frac{\Delta^2 O_p(\xi_{n;k})}{[\epsilon - \Delta^2 O_p(\xi_{n;k}) - \frac{1}{2} (\bm \delta_{kX}^0 - \bm \delta_{tX}^0)' \bm C_X^0 (\bm \delta_{kX}^0 - \bm \delta_{tX}^0)]^2} + \frac{\Delta^2 O_p(\xi_{n;k})}{[\epsilon - \Delta^2 O_p(\xi_{n;k})]^2}.
\end{align}
By condition (C6), the first term of \eqref{eq:7} converges to 0 in probability.
Pick $\epsilon = C \xi_{n;k}^\alpha$, where $0< \alpha < 1/2$ with a positive constant $C$, then
\begin{align*}
& ~~~~ \Pr(|\hat{\vartheta}_k - \vartheta_k| \geq \frac{\epsilon}{2}~ \mbox{for some } k) \\
& \leq  \frac{\Delta^2 O_p(\xi_{n;k})}{[\epsilon - \Delta^2 O_p(\xi_{n;k}) - \frac{1}{2} (\bm \delta_{kX}^0 - \bm \delta_{tX}^0)' \bm C_X^0 (\bm \delta_{kX}^0 - \bm \delta_{tX}^0)]^2} + \frac{\Delta^2 O_p(\xi_{n;k}^{1 - 2 \alpha})}{[C - \Delta^2 O_p(\xi_{n;k}^{1 - \alpha})]^2} \\
& \stackrel{P}{\rightarrow} 0.
\end{align*}
\end{proof}

\begin{proof}{\textbf{Proof of Theorem \ref{theory:consistent}}.}{}

The conditional misclassification rate is
\begin{align*}
R_{PROP}(\mathcal{T}) &= \frac{1}{2} \sum_{k=1}^2 \Phi \left(\frac{(-1)^k \hat{\bm \delta}'_{2X} \hat{\bm C}_X (\bm \mu_{kX}^0 - \hat{\bm \mu}_{kX}) - \hat{\bm \delta}'_{2X} \hat{\bm C}_X (\hat{\bm \mu}_{1X} - \hat{\bm \mu}_{2X}) / 2 }{\sqrt{\hat{\bm \delta}'_{2X} \hat{\bm C}_X \bm \Sigma_X^0 \hat{\bm C}_X \hat{\bm \delta}_{2X} }} \right) \\
&= \frac{1}{2} \sum_{k=1}^2 \Phi \left(\frac{(-1)^k \hat{\bm \delta}'_{2X} \hat{\bm C}_X (\bm \mu_{kX}^0 - \hat{\bm \mu}_{kX}) - \hat{\bm \delta}'_{2X} \hat{\bm C}_X \hat{\bm \delta}_{2X} }{\sqrt{\hat{\bm \delta}'_{2X} \hat{\bm C}_X \bm \Sigma_X^0 \hat{\bm C}_X \hat{\bm \delta}_{2X} }} \right).
\end{align*}
By the result \eqref{conrateC}, we have
\begin{equation*}
\hat{\bm \delta}'_{2X} \hat{\bm C}_X \bm \Sigma_X^0 \hat{\bm C}_X \hat{\bm \delta}_{2X} = \hat{\bm \delta}'_{2X} \hat{\bm C}_X \hat{\bm \delta}_{2X} [1 + O_p(d_n)] = \hat{\bm \delta}'_{2X} \bm C_X^0 \hat{\bm \delta}_{2X} [1 + O_p(d_n)].
\end{equation*}
From the result \eqref{conratedelta}, together with $E[ (\bm \delta_{2X}^0)' \bm C_X^0 (\hat{\bm \delta}_{2X} - \bm \delta_{2X}^0)]^2 \leq \Delta_2^2 E[(\hat{\bm \delta}_{2X} - \bm \delta_{2X}^0)' \bm C_X^0 (\hat{\bm \delta}_{2X} - \bm \delta_{2X}^0)]$, it is easy to derive
\begin{align*}
\hat{\bm \delta}'_{2X} \bm C_X^0 \hat{\bm \delta}_{2X} &= (\bm \delta_{2X}^0)' \bm C_X^0 \bm \delta_{2X}^0 + (\bm \delta_{2X}^0)' \bm C_X^0 (\hat{\bm \delta}_{2X} - \bm \delta_{2X}^0) + (\hat{\bm \delta}_{2X} - \bm \delta_{2X}^0)' \bm C_X^0 (\hat{\bm \delta}_{2X} - \bm \delta_{2X}^0) \\
&= \Delta_2^2 + O_p(b_2^{(n)} \Delta_2) + O_p((b_2^{(n)})^2) \\
&= \Delta_2^2 [1 + O_p(\frac{b_2^{(n)}}{\Delta_2}) + O_p(\frac{(b_2^{(n)})^2}{\Delta_2^2})] \\
&= \Delta_2^2 [1 + O_p(\frac{b_2^{(n)}}{\Delta_2})].
\end{align*}
Hence we have
\begin{equation*}
\hat{\bm \delta}'_{2X} \hat{\bm C}_X \hat{\bm \delta}_{2X}
= \hat{\bm \delta}'_{2X} \bm C_X^0 \hat{\bm \delta}_{2X} [1 + O_p(d_n)]
= \Delta_2^2 [1 + O_p(\frac{b_2^{(n)}}{\Delta_2}) + O_p(d_n)].
\end{equation*}
Therefore, by Lemma \ref{lemma5}, we obtain
\begin{align*}
\frac{\hat{\bm \delta}'_{2X} \hat{\bm C}_X (\hat{\bm \mu}_{1X} - \bm \mu_{1X}^0) - \hat{\bm \delta}'_{2X} \hat{\bm C}_X \hat{\bm \delta}_{2X} }{\sqrt{\hat{\bm \delta}'_{2X} \hat{\bm C}_X \bm \Sigma_X^0 \hat{\bm C}_X \hat{\bm \delta}_{2X} }}
&= \frac{O_p(\sqrt{\frac{q_2^{(n)}}{n}}) + O_p(\sqrt{\frac{S_{h;p} q_2^{(n)}}{n}}) - \sqrt{\hat{\bm \delta}'_{2X} \hat{\bm C}_X \hat{\bm \delta}_{2X}}}{\sqrt{1+O_p(d_n)}}   \\
&= -\frac{\Delta_2}{2} \frac{\sqrt{1 + O_p(\frac{b_2^{(n)}}{\Delta_2}) + O_p(d_n)}}{\sqrt{1 + O_p(d_n)}} + \frac{O_p(\sqrt{\frac{S_{h;p} q_2^{(n)}}{n}})}{\sqrt{1 + O_p(d_n)}}    \\
&= -\frac{\Delta_2}{2} [1 + O_p(\frac{b_2^{(n)}}{\Delta_2}) + O_p(d_n)] + O_p(\sqrt{\frac{S_{h;p} q_2^{(n)}}{n}})   \\
&= -\frac{\Delta_2}{2} [1 + O_p(\frac{b_2^{(n)}}{\Delta_2}) + O_p(d_n) + O_p(\frac{\sqrt{S_{h;p} q_2^{(n)}}}{\sqrt{n} \Delta_2})]  \\
&= -\frac{\Delta_2}{2} [1 + O_p(\xi_n)].
\end{align*}
Similarly,  we have
\begin{equation*}
\frac{\hat{\bm \delta}'_{2X} \hat{\bm C}_X (\bm \mu_{2X}^0 - \hat{\bm \mu}_{2X}) - \hat{\bm \delta}'_{2X} \hat{\bm C}_X \hat{\bm \delta}_{2X} }{\sqrt{\hat{\bm \delta}'_{2X} \hat{\bm C}_X \bm \Sigma_X^0 \hat{\bm C}_X \hat{\bm \delta}_{2X} }} = -\frac{\Delta_2}{2} [1 + O_p(\xi_n)],
\end{equation*}
which proves theory.
\end{proof}

To establish the theoretical results in Theorem \ref{theory:supp}, we need a lemma from \cite{Shao2011Sparse}.
We state it here for completeness, and then prove Theorem \ref{theory:supp}.

\begin{lemma}{}{\label{lemma1}}
Let $a_n^{(1)}$ and $a_n^{(2)}$ be two sequences of positive numbers such that
$a_n^{(1)} \rightarrow \infty$ and $a_n^{(2)} \rightarrow 0$ as $n \rightarrow \infty$.
If $\lim\limits_{n \rightarrow \infty} a_n^{(1)} a_n^{(2)} = \rho$, where $\rho$ may be 0, positive, or $\infty$, then
\begin{equation*}
\lim_{n \rightarrow \infty} \frac{\Phi(-\sqrt{a_n^{(1)}} (1-a_n^{(2)}) )}{\Phi(-\sqrt{a_n^{(1)}})} = e^{\rho}.
\end{equation*}
\end{lemma}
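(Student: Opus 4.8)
The plan is to reduce the claim to the standard asymptotic (Mills-ratio) expansion of the Gaussian tail. Write $t_n = \sqrt{a_n^{(1)}}$ so that $t_n \to \infty$, and abbreviate $\epsilon_n = a_n^{(2)} \to 0$; the hypothesis then reads $t_n^2 \epsilon_n \to \rho$. Since $\epsilon_n \to 0$, both arguments $t_n$ and $t_n(1-\epsilon_n)$ diverge to $+\infty$, so I may apply to each the classical two-sided bound $\frac{\phi(x)}{x}\bigl(1 - x^{-2}\bigr) \le \Phi(-x) \le \frac{\phi(x)}{x}$, valid for $x > 0$, where $\phi$ denotes the standard normal density. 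This yields $\Phi(-x) = \frac{\phi(x)}{x}\bigl(1 + O(x^{-2})\bigr)$ as $x \to \infty$, which is the only analytic input needed.

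Next I would form the ratio and cancel the common factors. Substituting the expansion into numerator and denominator gives
\begin{equation*}
\frac{\Phi(-t_n(1-\epsilon_n))}{\Phi(-t_n)} = \frac{1}{1-\epsilon_n}\cdot \frac{\phi(t_n(1-\epsilon_n))}{\phi(t_n)} \cdot \frac{1 + O(t_n^{-2})}{1 + O(t_n^{-2})},
\end{equation*}
where the error terms in numerator and denominator are both $O(t_n^{-2})$ because the two arguments are of the same order as $t_n$. The density ratio is explicit, namely
\begin{equation*}
\frac{\phi(t_n(1-\epsilon_n))}{\phi(t_n)} = \exp\!\Bigl(\tfrac{1}{2}t_n^2\bigl[1 - (1-\epsilon_n)^2\bigr]\Bigr) = \exp\!\Bigl(t_n^2\epsilon_n\bigl(1 - \tfrac{1}{2}\epsilon_n\bigr)\Bigr).
\end{equation*}
Thus the whole ratio equals $\exp\!\bigl(t_n^2 \epsilon_n(1 - \tfrac{1}{2}\epsilon_n)\bigr)$ multiplied by two factors that tend to $1$.

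Finally I would pass to the limit. The prefactor $1/(1-\epsilon_n)$ and the error-term quotient both converge to $1$ (the latter because $t_n^{-2} \to 0$ regardless of $\epsilon_n$), so the limit is governed entirely by the exponent $t_n^2\epsilon_n(1 - \tfrac{1}{2}\epsilon_n)$. Since $\epsilon_n \to 0$ forces $1 - \tfrac{1}{2}\epsilon_n \to 1$ while $t_n^2\epsilon_n \to \rho$, the exponent converges to $\rho$, yielding $e^\rho$. The one point requiring care — and the main obstacle — is the uniform treatment of the three regimes for $\rho$: when $\rho$ is finite one must also note $t_n^2\epsilon_n^2 = (t_n^2\epsilon_n)\epsilon_n \to 0$, so the quadratic correction in the exponent is negligible and the limit is exactly $e^\rho$; when $\rho = \infty$ one must verify that the $O(t_n^{-2})$ correction factors still tend to $1$ even though the dominant exponential diverges, which holds precisely because those corrections decay at the fixed rate $t_n^{-2}$ independent of $\epsilon_n$, so no interference occurs and the ratio tends to $\infty = e^\infty$; the case $\rho = 0$ is immediate since the exponent then vanishes and $e^0 = 1$.
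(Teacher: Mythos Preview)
Your proof is correct. The paper does not supply its own argument for this lemma; it simply cites Lemma~1 of \cite{Shao2011Sparse}, so there is nothing in the paper to compare against beyond that reference, and the Mills-ratio expansion you use is exactly the standard route (and the one Shao employs).
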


\begin{proof}{}{}
See the proof of Lemma 1 in \cite{Shao2011Sparse}.
\end{proof}

\begin{proof}{\textbf{Proof of Theorem \ref{theory:supp}}.}{}

(1) Let $\phi$ be the density function of $N(0,1)$.
By the mean value theorem,
\begin{equation*}
R_{PROP}(\mathcal{T}) - R_{Bayes} = \Phi(-\frac{\Delta_2}{2} [1 + O_p(\xi_n)]) - \Phi(-\frac{\Delta_2}{2}) = - \phi(\tau_n) \frac{\Delta_2}{2} O_p(\xi_n),
\end{equation*}
where $\tau_n$ is between $-\frac{\Delta_2}{2}$ and $-\frac{\Delta_2}{2} [1 + O_p(\xi_n)]$. Since $\Delta_2$ is bounded, then $R_{Bayes}$ is bounded away from 0.
Hence,
\begin{equation*}
\frac{R_{PROP}(\mathcal{T})}{R_{Bayes}} - 1 = - \frac{\Delta_2}{2} \frac{\phi(\tau_n)}{R_{Bayes}} O_p(\xi_n) = O_p(\xi_n).
\end{equation*}

(2) When $\Delta_2 \rightarrow \infty$, we have $R_{PROP}(\mathcal{T}) \stackrel{P}{\rightarrow} 0$.
This, together with $\lim\limits_{\Delta_2 \rightarrow \infty} R_{Bayes} = 0$,
proves (2).

(3) The conditions $\Delta_2 \rightarrow \infty$, $\lim\limits_{n \rightarrow \infty} \xi_n \Delta_2^2 = 0$, together with Lemma \ref{lemma1}, prove that $R_{PROP}(\mathcal{T}) / R_{Bayes} \stackrel{P}{\rightarrow} 1$.
\end{proof}

\begin{proof}{\textbf{Proof of Theorem \ref{theory:y}}.}{}

Define $r_{ik}=\Pr(\hat{Z}=i|Z=k)$ for $i, k = 1, 2, \ldots, K$.
Let $R$ be the misclassification error for a classifier, it is then calculated via
\begin{align}\label{eq:R}
R = \sum_{k=1}^K \Pr(Z=k) \Pr(\hat{Z} \neq k|Z=k) = \sum_{k=1}^K \left( \pi_k \sum_{i \neq k}^K r_{ik} \right).
\end{align}
Now we derive an upper bound of $(\hat{y} - y)^2$.
Since it is random, we focus on the average, i.e.,
\begin{align}\label{eq:9}
\E[(\hat{y}-y)^2|\bm x, \mathcal{T}] = \E_y \E_{\hat{y}|\mathcal{T}} \left[(\hat{y}-y)^2|\bm x, \mathcal{T}\right].
\end{align}
To simplify the notation, we omit $\bm x$ and $\mathcal{T}$ from the right of the conditional sign and write it as $\E_y\E_{\hat{y}|\mathcal{T}}\left[(\hat{y}-y)^2\right]$.
Then Equation \eqref{eq:9} becomes
\begin{align*}
\E[(\hat{y}-y)^2] = \E_y \E_{\hat{y}|\mathcal{T}}\left[(\hat{y}-y)^2 \right] &= \E_Z (\E_{y|Z} \E_{\hat{y}|\mathcal{T}}\left[(\hat{y}-y)^2 \right] | Z) \\
&= \sum_{k=1}^K \pi_k \E_{y|Z=k} \left( \sum_{i=1}^K r_{ik} (\hat{y}_i-y)^2 | Z = k \right)
\end{align*}
Next, we derive
\begin{align*}
&\E_{y|Z=1}\left[a_1(\hat{y}_1-y)^2|Z=1\right]  \\
=& \E_{y|Z=1}\left[a_1(\hat{y}_1-\E(y|Z=1)+\E(y|Z=1)-y)^2|Z=1\right]\\
=&\E_{y|Z=1}\left[a_1(\hat{y}_1-\E(y|Z=1))^2|Z=1\right]+\E_{y|Z=1}\left[a_1(y-\E(y|Z=1))^2|Z=1\right]\\
=& a_1(\hat{y}_1-\E(y|Z=1))^2+a_1 \mbox{Var}(y|Z=1).
\end{align*}
Similarly, we have
\begin{align*}
\E_{y|Z=k} \left[ c (\hat{y}_i - y)^2|Z=k \right]
= c(\hat{y}_i - \E(y|Z=k))^2 + c \mbox{Var}(y|Z=k)
\end{align*}
for $c > 0$ and $i, k = 1,2,\ldots,K$.
As a result, Equation \eqref{eq:9} is decomposed as
\begin{align*}
\E[(\hat{y}-y)^2|\bm x, \mathcal{T}] &= \sum_{k=1}^K \pi_k \left[ \sum_{i=1}^K r_{ik} (\hat{y}_i - \E(y|Z=k))^2 + \sum_{i=1}^K r_{ik} \mbox{Var}(y|Z=k) \right] \\
&= \sum_{k=1}^K \sum_{i=1}^K \pi_k r_{ik} (\hat{y}_i - \E(y|Z=k))^2 + (\sigma_y^2 - \bm \Sigma_{Xy}' \bm \Sigma_{X}^{-1} \bm \Sigma_{Xy}) \sum_{k=1}^K \sum_{i=1}^K \pi_k r_{ik} \\
&= \sum_{k=1}^K \sum_{i=1}^K \pi_k r_{ik} (\hat{y}_i - \E(y|Z=k))^2 + (\sigma_y^2 - \bm \Sigma_{Xy}' \bm \Sigma_{X}^{-1} \bm \Sigma_{Xy}),
\end{align*}
where the second equality applies $\mbox{Var}(y|Z=k) = \sigma_y^2 - \bm \Sigma_{Xy}' \bm \Sigma_{X}^{-1} \bm \Sigma_{Xy}$,
and the third equality uses the fact $\sum_{k=1}^K \sum_{i=1}^K \pi_k r_{ik} = 1$ based on the definition of $r_{ik}$.
Now we tackle with each term of
\begin{align*}
&[\hat{y}_k - \E(y|Z=k)]^2    \\
=&\left[(\hat{\mu}_{ky}-\mu_{ky})+\left(\hat{\bm \Sigma}_{Xy}' \hat{\bm \Sigma}_{X}^{-1}-\bm \Sigma_{Xy}' \bm \Sigma_{X}^{-1}\right) \bm x - \left(\hat{\bm \Sigma}_{Xy}' \hat{\bm \Sigma}_{X}^{-1} \hat{\bm \mu}_{kX}-\bm \Sigma_{Xy}' \bm \Sigma_{X}^{-1} \bm \mu_{kX} \right) \right]^2 \\
\end{align*}
and
\begin{align*}
&\left[\hat{y}_k-\E(y|Z=k')\right]^2  \\
=&\left[(\hat{\mu}_{ky}-\mu_{k'y})+\left(\hat{\bm \Sigma}_{Xy}' \hat{\bm \Sigma}_{X}^{-1}-\bm \Sigma_{Xy}' \bm \Sigma_{X}^{-1} \right)\bm x - \left(\hat{\bm \Sigma}_{Xy}' \hat{\bm \Sigma}_{X}^{-1} \hat{\bm \mu}_{kX}-\bm \Sigma_{Xy}' \bm \Sigma_{X}^{-1}\bm \mu_{k'X}\right)\right]^2\\
=&\left[(\hat{\mu}_{ky}-\mu_{ky})+\left(\hat{\bm \Sigma}_{Xy}' \hat{\bm \Sigma}_{X}^{-1} - \bm \Sigma_{Xy}' \bm \Sigma_{X}^{-1} \right) \bm x - \left(\hat{\bm \Sigma}_{Xy}' \hat{\bm \Sigma}_{X}^{-1} \hat{\bm \mu}_{kX}-\bm \Sigma_{Xy}' \bm \Sigma_{X}^{-1}\bm \mu_{kX}\right)\right.\\
&+\left.(\mu_{ky}-\mu_{k'y})-\left(\bm \Sigma_{Xy}' \bm \Sigma_{X}^{-1} \bm \mu_{kX}-\bm \Sigma_{Xy}' \bm \Sigma_{X}^{-1}\bm \mu_{k'X}\right)\right]^2,~\mbox{for}~k \neq k'.
\end{align*}
For $k = 1,2,\ldots,K$ and $k \neq k'$, define the following terms
\begin{align*}
b_{ky} &=\hat{\mu}_{ky}-\mu_{ky},\\
D_{kk'} &=\E(y|Z=k)-\E(y|Z=k')   \\
&=(\mu_{ky}-\mu_{k'y})-\left(\bm \Sigma_{Xy}' \bm \Sigma_{X}^{-1}\bm \mu_{kX}-\bm \Sigma_{Xy}' \bm \Sigma_{X}^{-1}\bm \mu_{k'X}\right),\\
\bm h &=\left(\hat{\bm \Sigma}_{Xy}' \hat{\bm \Sigma}_{X}^{-1}-\bm \Sigma_{Xy}' \bm \Sigma_{X}^{-1}\right)',\\
d_k &=\hat{\bm \Sigma}_{Xy}' \hat{\bm \Sigma}_{X}^{-1}\hat{\bm \mu}_{kX}-\bm \Sigma_{Xy}' \bm \Sigma_{X}^{-1}\bm \mu_{kX}.
\end{align*}
Therefore, we obtain
\begin{align*}
&\E[(\hat{y}-y)^2|\bm x, \mathcal{T}] \\
=& \sum_{i=1}^K \pi_i r_{ii} (b_{iy}+\bm h' \bm x-d_i)^2 + \sum_{k=1}^K \sum_{i \neq k}^K \pi_k r_{ik} (b_{iy}+\bm h' \bm x-d_i + D_{ik})^2 + (\sigma_y^2 - \bm \Sigma_{Xy}' \bm \Sigma_{X}^{-1} \bm \Sigma_{Xy}) \\
=& \sum_{i=1}^K \pi_i r_{ii} (b_{iy}+\bm h' \bm x-d_i)^2 + \sum_{k=1}^K \sum_{i \neq k}^K \pi_k r_{ik} (b_{iy}+\bm h' \bm x-d_i)^2 + \sum_{k=1}^K \sum_{i \neq k}^K \pi_k r_{ik} D_{ik}^2 \\
&~~~~~~~~~~~~~~~~~~~~~~~~~~~~~~~~+ \sum_{k=1}^K \sum_{i \neq k}^K 2 \pi_k r_{ik} (b_{iy}+\bm h' \bm x-d_i) D_{ik} + (\sigma_y^2 - \bm \Sigma_{Xy}' \bm \Sigma_{X}^{-1} \bm \Sigma_{Xy})  \\
=& \mathbb{M} + \sum_{k=1}^K \sum_{i \neq k}^K \pi_k r_{ik} D_{ik}^2 +  (\sigma_y^2 - \bm \Sigma_{Xy}' \bm \Sigma_{X}^{-1} \bm \Sigma_{Xy}),
\end{align*}
where $$\mathbb{M} = \sum_{k=1}^K \sum_{i=1}^K \pi_k r_{ik} (b_{iy}+\bm h' \bm x-d_i)^2 + \sum_{k=1}^K \sum_{i \neq k}^K 2 \pi_k r_{ik} (b_{iy}+\bm h' \bm x-d_i) D_{ik}.$$
Now if the classification of $Z$ is based on the known distribution, the misclassification rate $R$ is $R_{Bayes}$.
For $i,k = 1, 2, \ldots, K$, let $r_{ik}^{B} = \Pr(\hat{Z}=i|Z=k)$ represent the corresponding $r_{ik}$ with $\hat{Z}$ obtained from Bayes rule.
Similarly, let symbol $r_{ik}^{P}$ represent the corresponding $r_{ik}$ with $\hat{Z}$ obtained from the proposed model.
Denote by $\mathbb{M}_{PROP}$ the corresponding value of $\mathbb{M}$ computed from the proposed model.
Note that the value of $\mathbb{M}$ computed from Bayes rule is equal to 0.
Then we have
\begin{align*}
\E[(\hat{y}^{P}-y)^2|\bm x, \mathcal{T}] - \E[(\hat{y}^{B}-y)^2|\bm x, \mathcal{T}]
&= \mathbb{M}_{PROP} + \sum_{k=1}^K \sum_{i \neq k}^K (\pi_k r_{ik}^{P} - \pi_k r_{ik}^{B}) D_{ik}^2    \\
&\leq  \mathbb{M}_{PROP} + [R_{PROP}(\mathcal{T}) - R_{Bayes}] D_{max}^2,
\end{align*}
where $D_{max}^2$ = max $\{ D_{kk'}^2 \}$, and the last inequality uses Equation \eqref{eq:R}.
By conditions in Theorem \ref{theory:multiclass}, $\E_{\bm x}(\mathbb{M}_{PROP}) \stackrel{P}{\rightarrow} 0$ as $n \rightarrow \infty$.
Consequently, we have
\begin{align*}
MSE_{PROP} - MSE_{Bayes} &= \E[(\hat{y}^{P}-y)^2|\mathcal{T}] - \E[(\hat{y}^{B}-y)^2|\mathcal{T}] \\
&= \E_{\bm x}\E[(\hat{y}^{P}-y)^2|\bm x, \mathcal{T}] - \E_{\bm x}\E[(\hat{y}^{B}-y)^2|\bm x, \mathcal{T}]  \\
& \leq \E_{\bm x}(\mathbb{M}_{PROP}) + [R_{PROP}(\mathcal{T}) - R_{Bayes}] D_{max}^2 \\
& \stackrel{P}{\rightarrow} 0.
\end{align*}

\end{proof}


\end{document}